\newtheorem{definition}{Definition}
\newtheorem{lemma}{Lemma}
\newtheorem{corollary}{Corollary}
\newtheorem{theorem}{Theorem}
\begin{document}

\title{Limited-Trust in Social Network Games}

\author{Timothy Murray}
\email{tsmurray.professional@gmail.com}
\affiliation{Department of Industrial and Enterprise Systems Engineering \\University of Illinois Urbana-Champaign}
\author{Jugal Garg}
\email{jugal@illinois.edu}
\affiliation{Department of Industrial and Enterprise Systems Engineering \\University of Illinois Urbana-Champaign}
\author{Rakesh Nagi}
\email{rakesh\_nagi@sutd.edu.sg; nagi@illinois.edu}
\affiliation{Engineering Systems and Design, Singapore University of Tehcnology and Design \\
ISE, University of Illinois Urbana-Champaign}

\iftrue

\begin{abstract}
 We consider agents in a social network competing to be selected as partners in collaborative, mutually beneficial activities. We study this through a model in which an agent {$i$} can initiate a limited number {$k_i>0$} of games and selects partners from its one-hop neighborhood. Each agent can accept as many games offered by its neighbors. Each game signifies a productive joint activity, and players attempt to maximize their individual utilities.
Unsurprisingly, more trustworthy agents, as measured by the game-theoretic concept of limited-trust, are more desirable as partners. Agents learn about their neighbors' trustworthiness through interactions, and their behaviors evolve in response. Empirical trials performed on realistic social networks show that when given the option, many agents become highly trustworthy; most or all become highly trustworthy when knowledge of their neighbors' trustworthiness is based on past interactions rather than known \textit{a priori}. This trustworthiness is not the result of altruism, instead, agents are intrinsically motivated to become trustworthy partners by competition. Two insights are presented: first, trustworthy behavior drives an increase in the utility of all agents, where maintaining a relatively modest level of trustworthiness may easily improve net utility by as much as 14.5\%. If only one agent exhibits modest trust among self-centered ones, it can increase its personal utility by up to 25\% in certain cases! Second, and counter-intuitively, when partnership opportunities are abundant, agents become less trustworthy.
\end{abstract}

\maketitle

\section{Introduction}
Choose your friends wisely. 
It's good advice, and it also applies to the problem of selecting partners to work with. 
Effective partnerships are based on trust: 
Suppose you must collaborate on a project and need to decide between two potential collaborators. 
Both candidates possess the same basic level of expertise, leading you to expect that the project will be a success with either of them. 
However, one candidate has a reputation for taking all of the credit in collaborations and using them to advance their own interests over those of their partners. 
Naturally, you would prefer to avoid the candidate with a poor reputation and instead collaborate with your other colleague with no such negative history.

In this paper, we consider the problem of partner selection within the larger context of a social network.
Agents in the network must secure partnership opportunities with mutual benefits. 
These agents are in a state of coopetition with each other rather than pure cooperation or competition, forcing them to rely on the concepts of trust and reputation to initiate strategic partnerships.
We use the following definitions of trust and reputation from \cite{mui2002}:
\begin{definition}[Reputation]
Perception that an agent creates through past actions about its intentions and norms.
\end{definition}
\begin{definition}[Trust]
A subjective expectation an agent has about another’s future behavior based on the history of their encounters.
\end{definition}
The two definitions go hand-in-hand, and in practice, we will use terms such as ``trustworthy'' both academically and in life to refer to an agent who maintains a highly positive reputation.
The interactions between agents occur as limited-trust leader-follower games, where limited-trust (and associated equilibria) is a concept recently developed by \cite{murray2021}. 
Loosely speaking, limited-trust assumes that an agent will help their fellow agent, provided that the cost is not too high and the net utility of all players improves. 
The trustworthiness of an agent $i$ is determined by a metric $\delta_i \geq 0$ so that if $\delta_i=1$, agent $i$ is willing to lose up to one unit of utility to improve the net utility of all agents. 
More precisely, for any $0<x<y$, agent $i$ is willing to incur a cost of $x$ to increase the net utility of other agents by $y$, provided that $x\leq \delta_i$. 
Limited-trust is naturally applicable to social network-based interactions, as it provides a mechanism for agents to increase their long-run utilities through more complex behaviors while still being fundamentally self-interested rather than altruistic.
We find that these self-interested agents maintain highly positive reputations and trust-confirming behavior, as deviating causes them to lose partnership opportunities.   

The main contribution of this paper is developing a system for modeling interactions between individuals in a social network. 
This system is thoroughly analyzed, with algorithms developed for individuals to learn their neighbors' reputations and alter their trust levels accordingly. 
Depending on how ties are broken when selecting partners, this can lead either to the majority of agents maintaining positive reputations or cycling behavior between low and highly positive reputation levels and trustworthiness. 
However, this cycling can be curtailed by subsidizing a small number of seed vertices to act as leaders and maintain a positive reputation. 
In both cases, we find these behavioral changes increase average reputation and trustworthiness, leading to a substantial increase in the total utility in the network. 
Empirically, such a system is substantiated by numerous studies in evolutionary biology and psychology, whose results mirror the behaviors we observe in the model. 
These results indicate that individuals behaving in a trustworthy manner are typically the most successful. 
What they lose in individual interactions they more than makeup for by increasing their opportunities. 
In the long term, they also spur other agents to adopt trustworthy behaviors, resulting in more utility from the same interactions. 
Further, because the trust level captures all temporal knowledge in a single, easily-updated value, otherwise myopic agents arrive at trustworthy behavior naturally without using external history-based mechanisms such as grim trigger or tit-for-tat strategies. 
As such, we feel that limited-trust concept is more natural and intuitive for the social scenario we consider.

Our own empirical studies also reveal two counter-intuitive insights. 
First, while one might expect selfishness when opportunities are limited and individuals try to make the most of them, we find that individuals are at their most trustworthy when opportunities are limited and do their best to maintain positive reputations to capture what few are available. 
Second, it is similarly natural to expect individuals to be more trustworthy when there are more opportunities, as taking advantage of any one opportunity is not worth the resulting reputational damage. 
Instead, we observe that the glut of opportunities outweighs the reputational consequences for selfish individuals, as their behavior will not limit their future opportunities (subject to network structure).

We consider a toy example given by the network in Figure \ref{fig_example0}. Suppose that each agent may lead $k=2$ leader-follower games and can select any 2 neighbors as followers for these games, which are randomly drawn from a known distribution. Let the games be over $2\times 2$ payoff matrices with each entry drawn independently and identically distributed (iid) at random from an exponential distribution where $\lambda=2$. If all agents play selfishly, with $\delta_i=0$ for every agent $i$, then each agent will select partners uniformly at random. That means that agent 7 can expect to be a leader in 2 games and be chosen as a follower for $\frac{17}{6}$ games.
Suppose that agent 7 now behaves in a slightly trustworthy manner, with $\delta_7=0.01$ whenever it interacts with any other agent, while all of its neighbors remain selfish: it now attracts game invitations from all of its neighbors and participates in 4 games as a follower. It will expect to achieve slightly less utility per game but will play in an additional $\frac{7}{6}$ games on average. Table \ref{tab_uniform_deltas} compares these two settings: when all other agents are selfish, agent 7 can improve its average utility by over 25\% by being only slightly trustworthy! The table also reveals a second insight: when all agents are equally trustworthy ($\delta_i=2$ for $i\in \{1,2,...,7\}$), all stand to gain a significant amount of utility (approximately $14.5\%$ per agent) driving a significant increase in the net utility of the system. Utilities in the table are the mean of 1000 rounds of interaction between the agents in the network. While exact utility increases vary according to network structure and the distribution of the values in the payoff matrices, we will see similar values in our numerical studies in Section \ref{sec_numerical_sng} and Appendix C.

\begin{figure}[ht]
    \centering
    \includegraphics[width=0.7\linewidth]{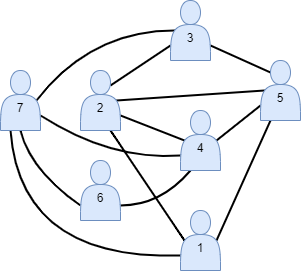}
    \caption[0]{Example Network}
    \label{fig_example0}
\end{figure}

\begin{table}[ht]
\centering
\caption{Comparison of utilities for different values of $\delta$ for the network in Figure \ref{fig_example0}}
\label{tab_uniform_deltas}
\resizebox{.8\linewidth}{!}{
\begin{tabular}{ll|l|l|l|}
\cline{3-5}
                              &        & $\delta_i=0$  & \begin{tabular}[c]{@{}l@{}}$\delta_i=0 \quad\forall i\neq 7$\\ $\delta_7=0.01$\end{tabular} & $\delta_i=2$      \\ \hline
\multicolumn{1}{|l|}{Player}  & Degree & Utility per Round & Utility per Round                                                      & Utility per Round \\ \hline
\multicolumn{1}{|l|}{1}       & 3      & 10.532            & 10.743                                                                 & 11.974            \\
\multicolumn{1}{|l|}{2}       & 4      & 12.880            & 11.469                                                                 & 14.935            \\
\multicolumn{1}{|l|}{3}       & 3      & 10.596            & 10.356                                                                 & 12.271            \\
\multicolumn{1}{|l|}{4}       & 4      & 13.420            & 13.395                                                                 & 15.479            \\
\multicolumn{1}{|l|}{5}       & 4      & 13.100            & 11.499                                                                 & 14.716            \\
\multicolumn{1}{|l|}{6}       & 2      & 9.178             & 8.570                                                                  & 10.289            \\
\multicolumn{1}{|l|}{7}       & 4      & 14.338            & 18.025                                                                 & 16.531            \\ \hline
\multicolumn{1}{|l|}{Average} &        & 12.012            & 12.014                                                                 & 13.749            \\ \hline
\end{tabular}}
\end{table}

After defining the model in Section \ref{sec_network_model}, we will return to this example to understand how the utility is jointly impacted by network position and the distribution of $\delta_i$ across players. We will see that generally, having a higher value of $\delta$ leads to an agent increasing its utility by increasing its number of interactions. We also note that while we constrain agent $i$ to express a single value $\delta_i$ to each of its neighbors for simplicity, in Section \ref{subsec_personal_delta} we will cover how agent $i$ can optimally set personal values $\delta_i(j)$ for each of its neighbors $j$.

The rest of the paper is organized as follows: the remainder of this section conducts a literature review of relevant work, particularly on the subjects of network games and evolutionary biology and psychology. In Section \ref{sec_network_model}, we define the model mechanics and how agents within it behave with complete knowledge, while in Section \ref{sec_unknown_delta}, we define the same functions for agents with incomplete knowledge. In Section \ref{sec_var_delta} we consider the metagame which occurs on top of the system when agents can adjust their levels of trustworthiness, before exploring the model numerically in Section \ref{sec_numerical_sng}. In Section \ref{sec_discussion_sng}, we discuss the numerical results and future directions of our work, then conclude our paper in Section \ref{sec_conclusion_sng}. Additionally, we provide an e-companion \cite{murray2021trust} for the exploration of additional topics related to our system (Appendix A), proofs of some theorems (Appendix B), and additional numerical results (Appendix C).

Readers who are primarily interested in the numerical results may wish to skip Sections \ref{sec_unknown_delta} and \ref{sec_var_delta}: while these sections are necessary to explain the system when players have incomplete information or vary their level of trustworthiness, they are in-depth descriptions of functions which can be grasped intuitively. 

We also acknowledge here that a large portion of this paper comes from work first appearing in the lead author's doctoral dissertation \cite{murray_dissertation}.

\subsection{Literature Review}

Explaining and modeling non-selfish behavior 
is an intriguing problem within Game Theory, one apparently at odds with the idea of Nash equilibria \cite{nash1950}. One situation in which it is explainable is in settings with incomplete information. \cite{kreps1982} shows mathematically that players in multi-stage games may deviate from apparent Nash equilibria to build reputations when information is incomplete, yet still result in personal utility maximization. \cite{kreppsmilgrom1982} considers the finitely repeated prisoner's dilemma and finds that players in this setting who lack information about each other similarly engage in cooperative behavior. Both papers also show that these behaviors do not emerge in settings with perfect information. Another explanation for non-selfish behavior suggested by \cite{ledyard1994} and formalized by \cite{chen2014} is $\alpha$-altruism. $\alpha$-altruism is loosely inspired by Hamilton's rule for kin selection \cite{Hamilton1963}, which defines a linear factor $r$ based on genetic closeness by common descent from shared ancestors; it states that for a non-selfish behavior to occur, the benefit to the recipient times $r$ must be greater than the cost to the provider. $\alpha$-altruism models this through perceived costs for each player, a convex combination of the player's personal cost and the net cost for all players. Hamilton's rule applies only to kin, however, and thus $\alpha$-altruism is less solidly grounded outside of this setting.

However, evolutionary biology offers another explanation for non-selfish behavior: partner selection. Studies such as \cite{barclay2004,barclay2007,sylwester2010,sylwester2013,barclay2013,barclay2016} consider various settings in which participants engage in 2-stage interactions: after random partnerships in the first stage, participants select partners in the second stage. In each study, participants who were generous in the first stage were more desirable as partners in the second; participants were also more likely to be generous in the first stage to build their reputation if they had prior knowledge of the second stage. \cite{barclay2007} also finds that generosity may be faked in the first round to take advantage of the second-round partner. 
\cite{debove2015} conducts an empirical study which demonstrates that generosity and cooperation only tend to arise between partners of relatively similar opportunities. 
\cite{eisenbruch2019} empirically tests partner selection as a motivation for generosity with a competing theory, threat premium, which states that individuals are generous in order to avoid potential conflict or danger, and finds partner selection is a stronger motivator. Each study is also an example of evolutionary game theory, examining how behaviors evolve among groups over time and subsequent interactions.

This paper makes use of the recent concept of a limited-trust equilibrium (LTE) from \cite{murray2021}, a game theoretic modeling method to explain how and to what extent non-selfish behavior takes place in a game. It is explicitly motivated by partner selection and in this paper, the LTE is applied to study partnerships within social networks. 
Social networks are a frequent topic of study in evolutionary game theory; papers such as \cite{abramson2001,hanaki2007,dall2012,ozkancanbolat2016,bolouki2018,naghizadeh2018,jain2020} study how coalitions and cooperative behaviors form naturally within networks under various settings and assumptions. \cite{scata2016} study the problem of seed selection to trigger cooperative behavior in social networks. 
Similarly, \cite{aral2014} conducts a large-scale experiment to identify network structures that increase peer influence effects. For the interested reader, \cite{szabo2007} surveys evolutionary game theory through 2007, and \cite{jackson2015} provides surveys a larger class of games in social networks through 2015.

However, none of the papers mentioned above study partner selection in conjunction with social networks. To the best of our knowledge, there are only two other than ours which do so. The first is \cite{fu2008}. It finds that frequent partner switching helps to dissuade defection in the prisoner's dilemma, as selfish individuals quickly lose potential partners. As in our setting, players make partner selections based on reputation and past observations of their two-hop neighborhoods. However, these agents select partner groups by altering their one- and two-hop neighborhoods within the network and updating their own reputations by mimicking their successful neighbors rather than determining best responses. We consider a broader class of games than \cite{fu2008}, generated from arbitrary distributions which model any interactions including the prisoner's dilemma. The second is \cite{song2019}, in which agents similarly alter their local neighborhoods to determine who they can interact with. They make use of discounted future horizons to determine these alterations, with interactions taking place between all neighbors as a variant of Public Goods games. We consider these games within the limited-trust setting, which allows agents to avoid considering future payments as in \cite{song2019}, and find our results congruent with the recent papers on partner selection mentioned above. To the best of our knowledge, the limited-trust equilibrium is the first game theoretic model of trust. By extending it to social networks here, we provide the first mathematically precise study of trust in partner selection, unlike previous studies that focused on human experiments.  

\section{Game Model}\label{sec_network_model}

\subsection{Preliminary Concepts}
Before defining the systems we consider, we begin with a review of some standard concepts in game theory.

\begin{definition}[Strategy Profile of a Finite Game]
Given a finite $N$-player game in which each player $i$ has a set $\Sigma_i$ of pure non-mixed strategies, a valid pure strategy profile for the game is given by $\sigma=\{\sigma_1,\sigma_2,...,\sigma_N\}$ where $\sigma_i\in\Sigma_i$ is the pure strategy played by agent $i$.
\end{definition}
Note that $\Sigma_i$ is the set of pure strategies, not the probability simplex of mixed strategies over them.
\begin{definition}[Stackelberg Equilibrium]
A $2$-player leader-follower (Stackelberg) game displays a pure Stackelberg equilibrium $\{\sigma_1,\sigma_2\}$ when player 2 is playing its utility-maximizing response to player 1's strategy, and any deviation by player 1 from $\sigma_1$ to a new strategy $\sigma_1'$ will result in utility $u_1(\sigma_1',\sigma_2') \leq u_1(\sigma_1,\sigma_2)$ after player 2 makes its own utility-maximizing response $\sigma_2'$ to $\sigma_1'$.
\end{definition} 


We are interested in a related concept, the Limited-Trust Stackelberg Equilibrium (LTSE) \cite{murray2021}. The LTSE similarly possesses pure-strategy equilibria and will govern player interactions. {In a limited-trust game} each player $i$ has a {trust-level} $\delta_i \geq 0$ which it is willing to give up from its greedy best-response (the strategy which maximizes its own utility given the strategy of {the other player}) provided that doing so increases the net utility of all players. {Players are not motivated to do so by generosity or altruism but by a desire to promote similar actions in other players which they can later benefit from.}  
{In the context of \cite{mui2002}, to an external agent $j$ $\delta_i$ is the reputation of agent $i$, with limited-trust providing a mechanism to translate this reputation into the behavior of corresponding trustworthiness. In contrast, agent $i$ views $\delta_i$ as the degree to which it can be trusted to behave in cooperative behavior. In Section \ref{sec_unknown_delta} we will consider what happens when $i$'s trust level (the true value of $\delta_i$) is not aligned with its reputation ($j$'s perception of $\delta_i$) but until then we will assume they are aligned.}

As an example, consider the 2-player game in Table \ref{tab_ex_game} in which player $2$ must decide between two strategies $a_2$ and $b_2$. Suppose that player 1 has selected $a_1$, and so player 2 must decide between $u_1(a_1,a_2) = 4,u_2(a_1,a_2)=3$ if it selects $a_2$ and $u_1(a_1,b_2) = 2,u_2(a_1,b_2)=4$ if it selects $b_2$. 
\begin{table}[ht]
\centering 
\caption{Example $2\times 2$ game}
\label{tab_ex_game}
\begin{tabular}{llll}
 &  & \multicolumn{2}{l}{Player 2} \\
 &  & $a_2$ & $b_2$ \\ \cline{3-4} 
{Player 1} & \multicolumn{1}{l|}{$a_1$} & \multicolumn{1}{l|}{4,3} & \multicolumn{1}{l|}{2,4} \\ \cline{3-4} 
 & \multicolumn{1}{l|}{$b_1$} & \multicolumn{1}{l|}{3,2} & \multicolumn{1}{l|}{1,3} \\ \cline{3-4} 
\end{tabular}
\end{table}
Suppose that $\delta_2 = 2$. Then the second player's best response to the first player is to play $a_2$, as it maximizes net utility $(4+3 > 2+4)$ and results in an acceptable loss of 1 from player 2's greedy best response,  given $\delta_2\geq 1$. {This is a common occurrence, as \cite{murray2021} provides empirical evidence that the limited-trust concept provides higher expected net utility for both Stackelberg and simultaneous games generated from several distributions.}
This allows all players to benefit in the long run by avoiding inefficient equilibria which only benefit one player. 


In a 2-player leader-follower game the limited-trust best response of the follower to the leader playing $s_1\in\Sigma_1$ is
\begin{align*}
r_2(s_1,\delta_2) =\arg\max_{s_2\in \Sigma_2} \quad  &u_1(s_1,s_2)+u_2(s_1,s_2)\\
\text{s.t.} \quad & u_2(s_1,G_2(s_1)) - u_2(s_1,s_2)  \leq \delta_2,
\end{align*}
where $G_2(s_1) = \arg\max_{s_2\in \Sigma_2}u_2(s_1,s_2)$  is the follower's greedy best response. $r_2(s_1,\delta_2)$ is thus the strategy that maximizes net utility, subject to the constraint that the follower does not give up more than $\delta_2$ than it could have obtained from the greedy best response. The leader's limited-trust optimal strategy is 
\resizebox{\linewidth}{!}{
  \begin{minipage}{\linewidth}
\begin{align*}
    s_1^*(\delta_1,\delta_2) = \arg\max_{s_1\in \Sigma_1} \quad & u_1(s_1,r_2(s_1,\delta_2))+u_2(s_1,r_2(s_1,\delta_2)) \\
    \text{s.t.} \quad & u_1(G_1(\delta_2),r_2(G_1(\delta_2),\delta_2)) - u_1(s_1,r_2(s_1,\delta_2)) \leq \delta_1,
\end{align*}
\end{minipage}
}
where $G_1(\delta_2) = \arg\max_{s_1\in \Sigma_1}u_1(s_1,r_2(s_1,\delta_2))$ is the leader's greedy best strategy, given the limited-trust best response which will be made by the follower.


\begin{definition}[Limited-Trust Stackelberg Equilibrium]
A strategy pair $(s_1,s_2)$ in a 2-player limited-trust Stackelberg game with trust levels $\delta_1,\delta_2$ is said to be a limited-trust Stackelberg equilibrium if and only if $s_1 \in s_1^*(\delta_1,\delta_2)$ and $s_2 \in r_2(s_1,\delta_2)$ (if each player plays its Stackelberg limited-trust best response or strategy to the other).
\end{definition}

Note that when $\delta_1=\delta_2=0$, the LTSE reduces to a Stackelberg equilibrium.

\subsection{System Model}

Having covered the preliminaries, we now introduce our model for interactions in a social network. We consider a system over a social network $G(V,E)$ with a set of vertices $V$ and edges $E$. Vertices represent agents in the network and an edge between vertices {implies} that the corresponding agents can interact. There are no self-loops. Each agent in $G$ is self-interested and seeks to maximize its own utility. However, direct interactions between agents occur only in a one-on-one setting through 2-player limited-trust Stackelberg games. As such, each agent $i$ has a trust level $\delta_i$ which serves as its reputation and governs its individual interactions with other agents. {As it is odd to assume that agent $i$ interacts with each of its neighbors with the same level of trust $\delta_i$, in Section \ref{subsec_personal_delta} we will extend the system so that agent $i$ may have an individual trust level $\delta_{i}(j)$ for each of its neighbors $j$.} The system as a whole can be considered as an $N$-player utility maximization game, where $N=|V|$.

Let $N^1_i$ represent the one-hop neighborhood of agent $i$ in $G$: $N^1_i$ is the set of all agents $j$ for which $(i,j)\in E$. {Note that as there are no self-loops $i$ cannot be its own neighbor and thus $i \notin N^1_i$.} Define the $2$-hop neighborhood of $i$, the set of agents (other than $i$) who are not in $N^1_i$ but have neighbors in $N^1_i$, as
$$N^2_i = \left(\bigcup_{j\in N^{1}_i}N^1_j\right)\setminus (N^1_i \cup \{i\}). $$
For one time period in the system, agent $i$ may invite at most $k_i{\in \mathcal{Z}^+}$ of its neighbors in $N^1_i$ to interact. If agent $j\in N^1_i$ accepts an invitation from $i$, they engage in a leader-follower game with leader $i$ and follower $j$ over payoff matrices $A$ and $B$, respectively, such that $A\sim \mathcal{A}_{ij}$ and $B\sim \mathcal{B}_{ij}$ where $\mathcal{A}_{ij},\mathcal{B}_{ij}$ are probability distributions for interactions between $i$ and $j$ initiated by $i$. 

While agent $i$ may issue at most $k_i \leq |N^1_i|$ invitations per time period, it may accept as many as it receives. This consideration is motivated by the fact that it is easy for an individual to take a supporting role in many endeavors, but it only has the time or resources to take a lead role in a small number. Further, while $i$ may both {issue and} receive an invite from a neighbor $j\neq i$, leading to two separate interactions, it may not issue more than one invitation to $j$ within a single time period. Agent $i$ may have up to a maximum of $k_i+|N^1_i|$ interactions per time period, if all invitations it issues are accepted and all neighbors issue it an invitation. An interaction in which $i$ invites $j$ can thus be fully characterized by $\theta_{ij}=\{\mathcal{A}_{ij},\mathcal{B}_{ij},\delta_i,\delta_j\}$, with expected utilities $u_i(\theta_{ij}),u_j(\theta_{ij})$ for each player. Agent $j$ accepts $i$'s invitation provided $u_j(\theta_{ij})\geq 0$. $i$ will choose to invite (at most) $k_i$ of its neighbors, selecting neighbor $j\neq i$ if it provides one of the $k_i$ highest values for $u_i(\theta_{ij})$ in $N_i^1$. {While it may appear contradictory for an agent to select partners to maximize its personal utility while then interacting in a limited-trust manner (i.e.,  focusing on maximizing net utility), the contradiction disappears when considering $\delta$ as a reputation parameter: behaving well is the cost paid by the agent to have access to more interactions}. Such coopetition settings arise naturally in evolutionary studies of personal interactions such as those mentioned in  \cite{barclay2004,barclay2007,sylwester2010,sylwester2013,barclay2013,barclay2016}. This is subject to $u_i(\theta_{ij}),u_j(\theta_{ij}) \geq 0$ as otherwise the interaction will cost at least one of $i$ or $j$.



Given how $i$ determines who to invite and which invitations to accept, we can characterize all behavior in the system if we know $\theta=\{G,\mathcal{A},\mathcal{B},\delta\}$ where $\mathcal{A}=\{\mathcal{A}_{ij}\}_{(i,j)\in E}$, $\mathcal{B}=\{\mathcal{B}_{ij}\}_{(i,j)\in E}$, $\delta = \{\delta_i\}_{i\in [N]}$. Let $K^1_i$ be the set of neighbors that $i$ invites to interact and $K^2_i$ be the set of neighbors that invite $i$ to interact. Agent $i$'s expected net utility is
$$\mathbf{u}_i(\theta) = v_i(\theta) + w_i(\theta),$$
where $v_i(\theta) =\sum_{j\in K_i^1}{u_i(\theta_{ij})}$, the value of the games $i$ initiates which are accepted, and $w_i(\theta) = \sum_{j \in K_i^2}{u_i(\theta_{ji})}$, the value of the games $i$ accepts invitations to. Note that $K^1_i$ can be determined from knowledge of $N_i^1$, and $K^2_i$ can be determined from knowledge of $N_i^1\cup N_i^2$, meaning that agent $i$'s interactions depend only on its 1- and 2-hop neighborhoods, not the network as a whole.

To illustrate these concepts more concretely, consider agent 1 in the network in Figure \ref{fig_example0}. Agent 1 has a 1-hop neighborhood $\{2,5,7\}$ and a 2-hop neighborhood of $\{3,4,6\}$. $v_1(\theta)$ will be determined by the 1-hop neighborhood, with agent 1 inviting the $k>0$ members who will accept the invite and provide the most utility. $w_1(\theta)$ will also be provided by the 1-hop neighborhood, but agent 1 must compete with the members  of $\{N_2^1,N^1_3,N^1_7\}$: if $k=1$ and agent 2 decides agent 4 provides more utility than agent 1, then 1 will not receive utility from agent 2 in $w_1(\theta)$. Thus agent 1's utility in $v_1$ is determined by its own 1-hop neighborhood $N^1_1$, and in $w_1$ by $N^1_2 \cup N^1_3 \cup N^1_7 = N^1_1 \cup N^2_1$.

\begin{lemma}\label{thm_monotonic_follower}
Given a 2-player limited-trust Stackelberg game between a leader $i$ and a follower $j$, $u_j(\theta_{ij})$ increases monotonically as $\delta_i$ increases.
\end{lemma}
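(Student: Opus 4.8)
\emph{Proof proposal.} The plan is to reduce to a single realization of the game and then run a short contradiction argument exploiting that the leader's feasible set only grows with $\delta_i$. Since $u_j(\theta_{ij})$ is the expectation, over the random payoff matrices $A\sim\mathcal{A}_{ij}$ and $B\sim\mathcal{B}_{ij}$, of the follower's payoff at the limited-trust Stackelberg equilibrium, and expectation preserves pointwise monotonicity, it suffices to fix the payoff matrices and show that the follower's LTSE payoff is non-decreasing in $\delta_i$. I would also fix $\delta_j$, so that the follower's best-response map $r_2(\cdot,\delta_j)$ is a fixed object independent of $\delta_i$; for each leader strategy $s_1$ write $\bigl(p_1(s_1),p_2(s_1)\bigr):=\bigl(u_1(s_1,r_2(s_1,\delta_j)),\,u_2(s_1,r_2(s_1,\delta_j))\bigr)$ for the induced pair of payoffs, and note that the leader's greedy value $u_1^G:=u_1\bigl(G_1(\delta_j),r_2(G_1(\delta_j),\delta_j)\bigr)$ is likewise independent of $\delta_i$.

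In this notation the leader's problem at trust level $\delta_i$ is to maximize the fixed objective $p_1(s_1)+p_2(s_1)$ over the feasible set $F(\delta_i)=\{\,s_1:\ p_1(s_1)\ge u_1^G-\delta_i\,\}$, and the key observation is that $F(\delta_i)$ only grows as $\delta_i$ increases, since a larger trust budget merely relaxes a single linear constraint. The main step runs as follows. Take $\delta_i<\delta_i'$, let $s_1^*$ and $\tilde s_1$ be the leader's chosen strategies at $\delta_i$ and $\delta_i'$, and put $p=(p_1,p_2)=\bigl(p_1(s_1^*),p_2(s_1^*)\bigr)$ and $q=(q_1,q_2)=\bigl(p_1(\tilde s_1),p_2(\tilde s_1)\bigr)$. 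Because $p_1\ge u_1^G-\delta_i\ge u_1^G-\delta_i'$, the strategy $s_1^*$ is still feasible at $\delta_i'$, so optimality of $\tilde s_1$ there gives $q_1+q_2\ge p_1+p_2$. Now suppose, for contradiction, that $q_2<p_2$; then $q_1=(q_1+q_2)-q_2>(p_1+p_2)-p_2=p_1\ge u_1^G-\delta_i$, so $\tilde s_1$ was already feasible at $\delta_i$, where (using optimality of $s_1^*$ there) $q_1+q_2\le p_1+p_2$, forcing $q_1+q_2=p_1+p_2$; hence both $s_1^*$ and $\tilde s_1$ are optimal choices at $\delta_i$ and at $\delta_i'$, and since $q_2<p_2$ these two strategies are distinct. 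Under any tie-breaking rule fixed consistently across trust levels (say, by a priority order on strategies) the leader cannot then pick $s_1^*$ at $\delta_i$ and $\tilde s_1$ at $\delta_i'$, a contradiction. Therefore $q_2\ge p_2$, and taking expectations over $(A,B)$ yields the claimed monotonicity of $u_j(\theta_{ij})$.

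The only genuinely delicate point is this tie-breaking step, since the $\arg\max$ operators defining $r_2$ and $s_1^*$ may be set-valued. I would dispatch it either by appealing to the fixed deterministic selection the model must already adopt for $u_i(\theta_{ij})$ to be well defined --- any rule induced by a fixed priority order on strategies makes the contradiction above exact, via an inclusion argument on the optimal sets --- or, more simply, by invoking the continuous payoff distributions used throughout the paper, under which distinct strategy pairs have distinct net utilities almost surely, so the offending ties are null events and leave the expectation unaffected. Everything else is bookkeeping; the real content is the short inequality chain showing that a newly affordable leader deviation that raises net utility cannot simultaneously lower the follower's payoff.
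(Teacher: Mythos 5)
Your proof is correct and follows essentially the same route as the paper's: fix $\delta_j$ so the follower's response $r_2(\cdot,\delta_j)$ is independent of $\delta_i$, note that raising $\delta_i$ only relaxes the leader's single feasibility constraint, and conclude that any resulting switch of leader strategy must raise net utility while lowering the leader's own payoff, hence raise the follower's. Your explicit handling of ties (consistent tie-breaking or almost-sure uniqueness under continuous payoffs) and of passing from a fixed realization of $(A,B)$ to the expectation are careful refinements of steps the paper leaves implicit, but they do not constitute a different argument.
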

\begin{proof}
Consider a follower $j$ with fixed $\delta_j$. For any action $s_i$ the leader $i$ takes, $j$ has a deterministic response $r_2(s_i,\delta_j)$. Note that $r_2$ is not a function of $\delta_i$, so $j$'s response is fixed for fixed $\delta_j$. Suppose that for given $\delta_i$, player $i$ takes action $a$ and that for $\delta_i' = \delta_i+\varepsilon$, $\varepsilon>0$, player $i$ takes action $b$. Given $r_2$ is not a function of $\delta_i$ it must be that the reason $i$ switches to $b$ when operating under $\delta_i'$ is that it increases the net utility, but results in a loss of more than $\delta_i$ from $i$'s greedy best strategy $G_1(\delta_j)$. Given $i$'s utility decreases and the net utility increases, it must be that $j$'s utility increases. 
\end{proof}

\begin{corollary}\label{thm_monotinic_leader}
Given a 2-player limited-trust Stackelberg game between a leader $i$ and a follower $j$, $u_i(\theta_{ij})$ decreases monotonically as $\delta_i$ increases.
\end{corollary}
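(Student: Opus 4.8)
The plan is to reuse the mechanism already exposed in the proof of Lemma~\ref{thm_monotonic_follower}, but to track the leader's utility instead of the follower's. The crucial observation is that, for a fixed follower trust level $\delta_j$, the follower's response map $r_2(\cdot,\delta_j)$, and hence the composed function $s_1 \mapsto u_i(s_1, r_2(s_1,\delta_j))$, do not depend on $\delta_i$ at all; $\delta_i$ enters the leader's optimization \emph{only} through the feasibility constraint $u_i(G_1(\delta_j), r_2(G_1(\delta_j),\delta_j)) - u_i(s_1, r_2(s_1,\delta_j)) \le \delta_i$. Writing $\ell(s_1)$ for the left-hand side (the leader's ``loss'' relative to its greedy best strategy, a quantity that is independent of $\delta_i$ and always $\ge 0$), the leader's own utility under $s_1$ equals a fixed constant minus $\ell(s_1)$, and the leader's feasible set $\{s_1 : \ell(s_1)\le\delta_i\}$ only grows as $\delta_i$ grows. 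Thus maximizing $u_i$ over feasible strategies is the same as minimizing $\ell$, and relaxing $\delta_i$ can only enlarge the menu the leader chooses from.

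First I would compare two trust levels $\delta_i < \delta_i'$ and let $a = s_1^*(\delta_i,\delta_j)$ and $b = s_1^*(\delta_i',\delta_j)$ be the corresponding leader strategies. If $a=b$, the leader's utility is literally unchanged and there is nothing to prove. If $a\neq b$, then, exactly as in Lemma~\ref{thm_monotonic_follower}, the only reason the leader adopts $b$ under the looser constraint is that $b$ yields strictly larger net utility $u_i+u_j$ than $a$; since $a$ was optimal at $\delta_i$, this forces $b$ to have been \emph{infeasible} at $\delta_i$, i.e.\ $\ell(b) > \delta_i$. But $a$ was feasible at $\delta_i$, so $\ell(a)\le \delta_i < \ell(b)$, and therefore $u_i(b,r_2(b,\delta_j)) < u_i(a,r_2(a,\delta_j))$: the leader's utility strictly drops at the switch. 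Combining the two cases, $u_i(\theta_{ij})$ can only stay equal or strictly decrease as $\delta_i$ increases, and since this holds for every pair $\delta_i<\delta_i'$ it is the global monotonicity claimed; one may equivalently read it off by chaining the statement across the successive switch points as $\delta_i$ sweeps upward.

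I expect the only real obstacle to be the bookkeeping around ties in the set-valued $\arg\max$ defining $s_1^*$: if several strategies attain the maximal net utility at some $\delta_i$, the step ``the leader switches only when net utility strictly improves'' needs the selection to be consistent (for instance, breaking ties in favour of the leader's own utility, equivalently toward smallest $\ell$), since otherwise the leader could in principle re-select a net-utility-tied strategy with larger $\ell$. Under the paper's standing reading of $s_1^*$ as \emph{the} limited-trust optimal strategy (unique, or fixed by such a tie-break), this issue disappears and the corollary is an immediate specialization of Lemma~\ref{thm_monotonic_follower}; I would state the tie-breaking convention explicitly at the start of the proof and then invoke the Lemma's argument verbatim.
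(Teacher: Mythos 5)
Your proposal is correct and follows essentially the same route as the paper: the paper leaves this corollary unproved because the switching argument inside the proof of Lemma~\ref{thm_monotonic_follower} already contains it (the leader changes strategy only when the new strategy raises net utility yet incurs a loss exceeding $\delta_i$ relative to $G_1(\delta_j)$, i.e.\ it was infeasible before, which is exactly your $\ell(b)>\delta_i\ge\ell(a)$ step implying the leader's own utility drops). Your explicit handling of ties in the set-valued $\arg\max$ is a reasonable extra precaution that the paper glosses over, but it does not change the argument.
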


\begin{corollary}\label{cor_monotinic_leader_net}
Given a 2-player limited-trust Stackelberg game between a leader $i$ and a follower $j$, net utility increases monotonically as $\delta_i$ increases.
\end{corollary}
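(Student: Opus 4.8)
The plan is to read the statement directly off the leader's limited-trust optimization problem, rather than attempting to combine Lemma~\ref{thm_monotonic_follower} and Corollary~\ref{thm_monotinic_leader}: those say $u_j(\theta_{ij})$ rises and $u_i(\theta_{ij})$ falls as $\delta_i$ grows, and the sum of a monotone-increasing and a monotone-decreasing quantity is not monotone in general, so a naive combination fails. Fix the follower's trust level $\delta_j$ and the payoff matrices, and regard everything as a function of the single parameter $\delta_i$. Write $\nu(\delta_i) := u_i(\theta_{ij}) + u_j(\theta_{ij})$ for the realized net utility; by the definition of $s_1^*$, $\nu(\delta_i)$ is exactly the optimal value of the maximization that defines $s_1^*(\delta_i,\delta_j)$.

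First I would note that the objective of that program, $u_1(s_1,r_2(s_1,\delta_j)) + u_2(s_1,r_2(s_1,\delta_j))$ as a function of the leader's action $s_1$, does not involve $\delta_i$ at all, since $r_2(\cdot,\delta_j)$ depends only on $s_1$ and $\delta_j$. Next I would show that the feasible set of that program, namely $\{\, s_1 \in \Sigma_1 : u_1(G_1(\delta_j),r_2(G_1(\delta_j),\delta_j)) - u_1(s_1,r_2(s_1,\delta_j)) \le \delta_i \,\}$, is nested and non-decreasing in $\delta_i$: the left-hand side of the defining inequality is independent of $\delta_i$ (in particular the reference action $G_1(\delta_j)$ is defined purely through $u_1$ and $r_2(\cdot,\delta_j)$), so any $s_1$ feasible at $\delta_i$ remains feasible at every $\delta_i' > \delta_i$. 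Maximizing a fixed objective over a growing feasible region can only increase the optimal value, hence $\nu(\delta_i') \ge \nu(\delta_i)$ whenever $\delta_i' > \delta_i$, which is the claim.

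An equivalent argument in the style of the proof of Lemma~\ref{thm_monotonic_follower} is to partition $[0,\infty)$ into the maximal intervals on which $s_1^*(\cdot,\delta_j)$ is constant: on each such interval $r_2(s_1^*,\delta_j)$ is constant too, so $\nu$ is constant there, while at a breakpoint the leader switches action only because the relaxed loss budget now admits a strategy of strictly larger net utility, so $\nu$ jumps upward. Either route gives monotonicity.

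I expect the only real subtlety to be bookkeeping around the $\arg\max$ operators: $s_1^*$, $G_1$, and $r_2$ may be set-valued under ties, so I would phrase the argument in terms of the \emph{optimal value} of the leader's program --- which is tie-break independent --- so that $\nu(\delta_i)$ is well-defined and the monotonicity of a value function over nested feasible sets applies without ambiguity. The one fact that must be flagged explicitly is that increasing $\delta_i$ cannot move the reference point $G_1(\delta_j)$ appearing in the constraint; this is immediate from its definition but is precisely what makes the feasible sets genuinely nested.
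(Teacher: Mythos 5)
Your proof is correct. The paper gives no explicit proof of this corollary: it is meant to follow from the same argument as Lemma~\ref{thm_monotonic_follower}, namely that when $\delta_i$ grows the leader switches its action only because the new action strictly increases net utility, which is precisely the breakpoint argument you sketch in your third paragraph. Your primary route is a slightly different (and cleaner) formalization of that same idea: you read the realized net utility at the LTSE as the optimal value of the leader's program, observe that the objective $u_1(s_1,r_2(s_1,\delta_j))+u_2(s_1,r_2(s_1,\delta_j))$ and the constraint's reference value $u_1(G_1(\delta_j),r_2(G_1(\delta_j),\delta_j))$ are both independent of $\delta_i$, and conclude monotonicity of a value function over feasible sets that are nested in $\delta_i$. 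What the value-function framing buys is exactly what you flag: it is insensitive to how ties in $s_1^*$, $G_1$, and $r_2$ are broken (all maximizers share the same objective value), whereas the switching argument requires a word about tie-breaking; it also makes explicit, rather than implicit, your correct observation that combining Lemma~\ref{thm_monotonic_follower} and Corollary~\ref{thm_monotinic_leader} termwise would not suffice. Both routes are sound and rest on the same key fact, that $r_2(\cdot,\delta_j)$ and hence the whole program except the budget $\delta_i$ do not depend on $\delta_i$.
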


While Lemma \ref{thm_monotonic_follower} shows that for any fixed game the follower $j$ can only benefit if $\delta_i$ of the leader $i$ increases, the same is not true for $i$ if $\delta_j$ increases. However, \cite{murray2021} provides strong empirical evidence that in games randomly generated from several types of distributions, the utility of the leader has a strong positive correlation to the $\delta$ of the follower. With that in mind, we make the assumption that given two players $l$ and $j$ such that $\mathcal{A}_{il} = \mathcal{A}_{ij}$, player $i$ sends an invitation to whichever of the two has a higher $\delta$, and is indifferent between them if $\delta_j=\delta_l$.


\begin{corollary}\label{cor_monotonic_v}
Given a network $G$ in which all games between any two players have nonnegative expected utilities and are independent, $v_i(\theta)$ is monotonically decreasing with $\delta_i$.
\end{corollary}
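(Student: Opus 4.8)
The plan is to reduce $v_i$ to a pointwise maximum of monotone functions, so that monotonicity is inherited from the per-game bound already established in Corollary~\ref{thm_monotinic_leader}. First I would observe that the hypotheses do genuine work. Since every game has nonnegative expected utility for both participants, the feasibility constraints $u_i(\theta_{ij})\ge 0$ and $u_j(\theta_{ij})\ge 0$ that govern whom $i$ may invite hold for every neighbor regardless of $\delta_i$; they never rule anyone out. Using independence so that $i$'s total initiated value decomposes as $\sum_{j\in K^1_i}u_i(\theta_{ij})$ with each summand determined solely by that pair's game parameters $(\delta_i,\delta_j,\mathcal{A}_{ij},\mathcal{B}_{ij})$, and recalling that $i$ selects the $k_i$ neighbors of highest value (all of which are nonnegative), we may write
\[
v_i(\theta)\;=\;\max_{\,S\subseteq N^1_i,\ |S|\le k_i}\ \sum_{j\in S} u_i(\theta_{ij}).
\]

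Next I would fix $\delta_i'>\delta_i$, holding every other component of $\theta$ constant. Corollary~\ref{thm_monotinic_leader} gives, for each neighbor $j$, that $u_i(\theta_{ij})$ evaluated at $\delta_i'$ is no larger than $u_i(\theta_{ij})$ evaluated at $\delta_i$. The one subtlety is that the optimal invitation set $K^1_i$ itself moves with $\delta_i$: raising $\delta_i$ may reorder the neighbors by their attractiveness to $i$, so a naive term-by-term comparison on a single fixed set is not available. The maximum-over-subsets formulation sidesteps this. Letting $S^{\star}$ denote an optimal invitation set at $\delta_i'$, note that $S^{\star}$ is also an admissible subset (of size at most $k_i$) at $\delta_i$, so
\[
v_i(\theta)\big|_{\delta_i'}\;=\;\sum_{j\in S^{\star}}u_i(\theta_{ij})\big|_{\delta_i'}\;\le\;\sum_{j\in S^{\star}}u_i(\theta_{ij})\big|_{\delta_i}\;\le\;\max_{|S|\le k_i}\sum_{j\in S}u_i(\theta_{ij})\big|_{\delta_i}\;=\;v_i(\theta)\big|_{\delta_i},
\]
where the first inequality applies Corollary~\ref{thm_monotinic_leader} term by term and the second is optimality of the right-hand maximum. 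Since $\delta_i'>\delta_i$ was arbitrary, $v_i(\theta)$ is monotonically decreasing in $\delta_i$.

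I do not expect a serious obstacle; the only point requiring care is precisely the role of the hypothesis. Without nonnegativity of expected utilities, the family of admissible invitation sets would itself vary with $\delta_i$: by Lemma~\ref{thm_monotonic_follower} the follower's utility $u_j(\theta_{ij})$ rises with $\delta_i$ while by Corollary~\ref{thm_monotinic_leader} the leader's falls, so a neighbor can enter or leave the feasible set as $\delta_i$ changes, and then $S^{\star}$ need not be admissible at $\delta_i$, which would break the chain of inequalities above. Thus the main job of the write-up is to state clearly that nonnegativity (together with independence, which supplies the per-game decomposition) is what freezes the admissible family, turning $v_i$ into a maximum of finitely many decreasing functions and hence a decreasing function of $\delta_i$.
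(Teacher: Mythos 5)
Your proof is correct and takes essentially the same route as the paper's own argument: nonnegativity of expected utilities guarantees every invitation is accepted, and the termwise monotonicity of $u_i(\theta_{ij})$ in $\delta_i$ from Corollary~\ref{thm_monotinic_leader} then drives the conclusion. Your extra max-over-subsets step, comparing the optimal invitation set at $\delta_i'$ against its value at $\delta_i$, cleanly handles the fact that $K^1_i$ may change with $\delta_i$ --- a point the paper's termwise summation leaves implicit --- so your write-up is, if anything, slightly more careful than the published proof.
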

\begin{proof}
Given all games have nonnegative expected utility for $i$ and a game between players $i$ and $j$ is independent of the utility in a later game between $i$ and $l$ or $j$ and $h$, all players accept any games they are invited to. Therefore, by Lemma \ref{thm_monotinic_leader} every term in the sum $v_i(\theta) =\sum_{j\in K_i^1}{u_i(\theta_{ij})}$ decreases monotonically with $\delta_i$ and so $v_i(\theta)$ decreases monotonically with $\delta_i$. 
\end{proof}

Note that the lemma and corollaries do not imply that the utility of the follower $j$ in a specific game decreases monotonically with $\delta_j$: games can be constructed where $j$'s utility increases with $\delta_j$. Intuitively, these games reflect situations in which the leader $i$ can trust $j$ not to take advantage of its strategy $s_i$, allowing both players to benefit. However, \cite{murray2021} again provides empirical evidence that $j$'s \textit{expected} utility decreases monotonically with $\delta_j$ for games generated from several distribution types.

\begin{theorem}\label{thm_cont_in_delta}
Given two agents $i,j$ with continuous distributions $\mathcal{A}_{ij},\mathcal{B}_{ij}$ in which any element $d$ which is dependent on any other set of elements $D$ has a continuous marginal distribution function $f_{d|D}$ for any realization of the elements of $D$, $u_i(\theta_{ij})$ and $u_j(\theta_{ij})$ are both continuous and have finite variance in $\delta_i$ and $\delta_j$ for all $\delta_i, \delta_j \geq 0$ provided $u_i(\theta_{ij}),u_j(\theta_{ij}) < \infty$.
\end{theorem}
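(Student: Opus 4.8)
The plan is to exploit the fact that, for a \emph{fixed} realization of the payoff matrices $A\sim\mathcal A_{ij}$ and $B\sim\mathcal B_{ij}$, the entire realized outcome of the interaction --- the follower's response function $s_1\mapsto r_2(s_1,\delta_j)$, the leader's chosen action $s_1^{*}(\delta_i,\delta_j)$, and hence both players' realized payoffs --- is piecewise constant in $(\delta_i,\delta_j)$, changing only across finitely many thresholds that are themselves functions of $A$ and $B$. Since $u_i(\theta_{ij})$ and $u_j(\theta_{ij})$ are the $A,B$-averages of these piecewise-constant realized payoffs, continuity will follow once I show that as $(\delta_i,\delta_j)$ moves a little the probability (over $A,B$) of landing on the ``wrong side'' of one of these thresholds tends to $0$, and then convert that small probability into a small change in expectation. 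Throughout I fix an arbitrary deterministic tie-breaking rule for the various $\arg\max$'s; because the distributions are continuous, any tie between two distinct strategies (in net utility, in a player's own utility, or in a greedy best response) lies in an $A,B$-null set, so tie-breaking does not affect the expectations.

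First I would handle continuity in $\delta_i$ with $\delta_j$ fixed. For each $s_1\in\Sigma_1$ set $L_1(s_1)=u_1(G_1(\delta_j),r_2(G_1(\delta_j),\delta_j))-u_1(s_1,r_2(s_1,\delta_j))\ge 0$; since neither $r_2$ nor $G_1$ depends on $\delta_i$, these are fixed functions of $A,B$ and the leader's feasible set is exactly $\{s_1:L_1(s_1)\le\delta_i\}$. Thus if $|\delta_i'-\delta_i|$ is small, the feasible set --- and therefore the leader's choice and both realized payoffs --- is unchanged except on the event $E_i=\bigcup_{s_1\in\Sigma_1}\{\,\min(\delta_i,\delta_i')<L_1(s_1)\le\max(\delta_i,\delta_i')\,\}$. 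The crucial point is that each $L_1(s_1)$ has an atomless distribution: partitioning the sample space according to which (finitely many) response profile $s_1'\mapsto r_2(s_1',\delta_j)$ and which leader greedy action is realized, $L_1(s_1)$ becomes on each piece a maximum of differences of entries of $A$, and the hypothesis that any entry depending on others has a continuous conditional density forces such a maximum of differences to put no mass at any single positive value; summing over the finitely many pieces preserves atomlessness. Hence $\Pr[E_i]\to 0$ as $\delta_i'\to\delta_i$. Writing $\Delta_i$ for the change in $i$'s realized payoff, the difference between $u_i(\theta_{ij})$ evaluated at $\delta_i$ and at $\delta_i'$ is at most $\mathbb E[\,|\Delta_i|\,\mathbf 1_{E_i}\,]$, and since $|\Delta_i|$ is bounded by the oscillation of the entries of $A$ --- integrable under the hypothesis $u_i(\theta_{ij})<\infty$ --- a truncation bound $\mathbb E[|\Delta_i|\mathbf 1_{E_i}]\le M\Pr[E_i]+\mathbb E[|\Delta_i|\mathbf 1_{\{|\Delta_i|>M\}}]$ (choose $M$ large, then $\delta_i'$ close) gives the desired continuity; the identical argument applied to $j$'s realized payoff handles $u_j(\theta_{ij})$.

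Continuity in $\delta_j$ is analogous but organized around the follower's losses $L_2(s_1,s_2)=u_2(s_1,G_2(s_1))-u_2(s_1,s_2)\ge 0$: off the event $E_j=\bigcup_{s_1\in\Sigma_1,\,s_2\in\Sigma_2}\{\,\min(\delta_j,\delta_j')<L_2(s_1,s_2)\le\max(\delta_j,\delta_j')\,\}$ the \emph{whole} response function $s_1\mapsto r_2(s_1,\cdot)$ is unchanged, and since the leader's feasible set, objective, and greedy action depend on $\delta_j$ only through $r_2$, the leader's choice and both realized payoffs are unchanged there as well; each $L_2(s_1,s_2)$ is again atomless at every positive value by the continuous-density hypothesis, so $\Pr[E_j]\to 0$ and the same truncation bound applies. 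Because the $L_2$'s do not involve $\delta_i$ at all, this $\delta_j$-estimate is uniform in $\delta_i$, so splitting a displacement as $(\delta_i,\delta_j)\to(\delta_i,\delta_j^{0})\to(\delta_i^{0},\delta_j^{0})$ and adding the two one-variable bounds yields joint continuity at every $(\delta_i^{0},\delta_j^{0})$ with $\delta_i^{0},\delta_j^{0}\ge 0$ --- at a boundary value $0$ only the one-sided version of each bound is needed. Finite variance is then essentially free: for each $(\delta_i,\delta_j)$ the realized payoff of $i$ equals (a.s.) one of the finitely many entries of $A$, so its variance is at most $\sum_{k,\ell}\mathbb E[A_{k\ell}^{2}]$, which is finite under the moment assumptions on $\mathcal A_{ij}$, and likewise for $j$ with $\mathcal B_{ij}$. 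I expect the genuinely delicate step to be the atomlessness claim for $L_1$ and $L_2$: one must check carefully that conditioning on ``which response profile / greedy action is realized'' --- an event cut out by inequalities that themselves involve the very entries in question --- does not create atoms, and confirm that all the incidental ties among $\arg\max$'s really are confined to a null set.
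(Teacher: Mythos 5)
Your proposal is correct and takes essentially the same route as the paper's proof: both write $u_i(\theta_{ij})$, $u_j(\theta_{ij})$ as expectations of realized payoffs that are piecewise constant (step functions) in $(\delta_i,\delta_j)$ for each fixed $(A,B)$, and show that the set of games whose jump thresholds fall inside a shrinking $\delta$-interval has vanishing probability, so the change in the expectation vanishes. Your write-up simply makes explicit what the paper leaves implicit (the threshold losses $L_1,L_2$, atomlessness of their distributions under the continuous-conditional-density hypothesis, and the truncation step converting small probability into a small change in expectation), which is added rigor rather than a different approach.
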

\begin{proof}

Consider the set of games $C\subseteq (\mathcal{A}_{ij},\mathcal{B}_{ij})$ for which $u_i(C,\delta_i,\delta_j)$ is discontinuous on the interval $\delta_i\in[x,x+\varepsilon)$ for $\varepsilon>0$. Note that in each of these games $(A,B)\in C$, $u_i(A,B,\delta_i,\delta_j)$ is a constant-valued step function where it is not discontinuous. 
By definition as an expected value, 
\begin{align*}
u_i(\theta_{ij}) &= u_i(\mathcal{A}_{ij},\mathcal{B}_{ij},\delta_i,\delta_j)\\ 
&= \int_{\mathcal{A}_{ij},\mathcal{B}_{ij}}f_{ij}(A,B)u_i(A,B,\delta_i,\delta_j)dAdB
\end{align*}
where $f_{ij}$ is the distribution function over $(\mathcal{A}_{ij},\mathcal{B}_{ij})$. Therefore, as $\varepsilon\rightarrow 0$ 
$$u_i(\mathcal{A}_{ij},\mathcal{B}_{ij},x+\varepsilon,\delta_j) -  u_i(\mathcal{A}_{ij},\mathcal{B}_{ij},x,\delta_j)$$
\begin{align*}
= \int_{C}f_{ij}(A,B)\left(u_i(A,B,x+\varepsilon,\delta_j) - u_i(A,B,x,\delta_j)\right)dAdB
\end{align*}
due to $u_i(A,B,\delta_i,\delta_j)$ being a constant-valued step function. Note the change in the limits due to $\left(u_i(A,B,x+\varepsilon,\delta_j) - u_i(A,B,x,\delta_j)\right)=0$ for $A,B\notin C$.

As $\varepsilon\rightarrow 0$, then for all $\delta_i \geq 0$ $C\rightarrow\emptyset$ by the continuity of the $\mathcal{A}_{ij},\mathcal{B}_{ij}$ and all marginal distributions therein. Therefore, $u_i(\mathcal{A}_{ij},\mathcal{B}_{ij},x+\varepsilon,\delta_j) -  u_i(\mathcal{A}_{ij},\mathcal{B}_{ij},x,\delta_j)$ goes to 0 because $u_i(\theta_{ij}),u_j(\theta_{ij}) < \infty$ and $u_i(\theta_{ij}),u_j(\theta_{ij})$ have finite variance. Therefore, $u_i(\theta_{ij})$ is continuous in $\delta_i$.

Identical arguments show that $u_i(\theta_{ij})$ is continuous in $\delta_j$, and that $u_j(\theta_{ij})$ is continuous in both $\delta_i$ and $\delta_j$. 
\end{proof}

In this paper we assume that $u_i(\theta_{ij}), u_j(\theta_{ij}) < \infty$ and have finite variance for all $\delta_i, \delta_j \geq 0$, all $i,j\in [N]$. We will use Theorem \ref{thm_cont_in_delta} in Section \ref{sec_var_delta} for games in which agents are able to change their value of $\delta$. Such a result is desirable in this setting because, unlike in \cite{fu2008} and \cite{song2019} where agents choose to shift their network structure between rounds of play and make a binary decision of whether or not to play cooperatively, $\delta$ is a continuous-valued variable, and so it is necessary that the utility of a random game distribution is continuous in $\delta$ in order for players to select the optimal value.

We now return to the example we gave in Figure \ref{fig_example0}. Suppose that $k_i=2$ for all agents $i$ and all games between any players are nonnegative and independent and identically distributed. This means $\mathcal{A}_{ij} = \mathcal{B}_{lh}$ for $i,j,l,h \in [N]$ where $[N]=\{1,2,...,N-1,N\}$. Further suppose that $\delta_i<\delta_{i+1}$ for $i\in [6]$; in particular $\delta_i= \frac{2(i-1)}{3}$ for $i\in [7]$. We can predict exactly who will invite whom to interact (since all outcomes are nonnegative, the expected utility for any interaction for both leader and follower is nonnegative and all invitations will be accepted). The behavior is fully characterized by Table \ref{tab_example1}.

\begin{table*}[ht]
\centering
\caption{Behavior of Network in Figure \ref{fig_example0} under Different $\delta$ Distributions}\label{tab_example1} 
\resizebox{0.7\textwidth}{!}{
\begin{tabular}{ll|l|l|l|l|l|l|}
\cline{3-8}
 &  & \multicolumn{3}{l|}{$\delta_i = 2(i-1)/3$} & \multicolumn{3}{l|}{$\delta_i = 2(7-i)/3$} \\ \hline
\multicolumn{1}{|l|}{Player} & Degree & Invites & Invited By & Utility per Round & Invites & Invited By & Utility per Round \\ \hline
\multicolumn{1}{|l|}{1} & 3 & 5,7 & $\emptyset$ & 7.750 & 2,5 & 2,5,7 & 15.559 \\
\multicolumn{1}{|l|}{2} & 4 & 5,4 & $\emptyset$ & 7.360 & 1,3 & 1,3,4,5 & 20.446 \\
\multicolumn{1}{|l|}{3} & 3 & 5,7 & 5 & 11.040 & 2,5 & 2,7 & 12.859 \\
\multicolumn{1}{|l|}{4} & 4 & 6,7 & 2,5,6,7 & 21.064 & 2,5 & 6 & 9.960 \\
\multicolumn{1}{|l|}{5} & 4 & 3,4 & 1,2,3 & 15.193 & 1,2 & 1,3,4 & 18.059 \\
\multicolumn{1}{|l|}{6} & 2 & 4,7 & 4,7 & 13.754 & 4,7 & $\emptyset$ & 6.607 \\
\multicolumn{1}{|l|}{7} & 4 & 4,6 & 1,3,4,6 & 18.931 & 1,3 & 6 & 11.244 \\ \hline
\multicolumn{1}{|l|}{Average} &  &  &  & 13.591 &  &  & 13.559 \\ \hline
\end{tabular}
}
\end{table*}

Table \ref{tab_example1} shows the interaction between $\delta$ and network structure. When $\delta_i = \frac{2(i-1)}{3}$, agent 7 is invited to play by each of its neighbors in the network. This is unsurprising as $\delta_7 > \delta_{j\neq 7}$. So is agent 6 as $k_i=2$ for all $i\in [7]$ and the only agent for which $\delta_j > \delta_6$ is $j=7$. What is more interesting is that agent 4 is being invited to play by all of its neighbors, and is engaging in as many games per round as agent 7. Further, it is engaging with agents that have a higher $\delta$ than agent 7's partners, both as a leader and as a follower, so we expect that it achieves a higher utility per round than agent 7, especially because it is behaving more selfishly. If we define $\mathcal{A}_{ij}=\mathcal{B}_{lh}$ to be a probability distribution over $2\times2$ matrices with all entries generated  iid from an exponential distribution with $\lambda=2$, we see that this expectation is confirmed. Column 5 shows the average utility each agent receives per round after 1000 rounds of play. Columns 6-8 consider the same setting, but when $\delta_i = \frac{2(7-i)}{3}$, reversing which agents are the most valuable partners. 

We also remind the reader of the information in Table \ref{tab_uniform_deltas} which considers the same values of $k$ and $\mathcal{A}_{ij},\mathcal{B}_{ij}$. For the two cases in Table \ref{tab_example1} the average value of $\delta$ is 2, so it is unsurprising that they have roughly the same average utility as when $\delta$ is uniformly equal to 2 for all agents in Table \ref{tab_uniform_deltas}. With that being said, the more even distribution of a uniform $\delta=2$ produces higher average utility, an increase of approximately 14.5\% over $\delta=0$, compared to Table \ref{tab_example1} in which the first case shows an increase of 13.1\% and the second shows an increase of approximately 12.9\%. In Section \ref{sec_numerical_sng} we numerically examine the relationship between network structure and $\delta$.

We saw that the behavior of systems with fixed, known $\theta$ can be characterized and readily predicted. We now focus on when parts of $\theta$ are unknown or are not fixed. Section \ref{sec_unknown_delta} focuses heavily on the algorithmic methods agents use to learn the $\delta$ of their neighbors, and Section \ref{sec_var_delta} mathematically details how agents adjust their $\delta$ to maximize their utility in response to their 1- and 2-hop neighborhoods. For those who are concerned primarily with the results of our numerical {studies}, we recommend skipping ahead to Section \ref{sec_numerical_sng}.

\section{Learning under unknown $\delta$}\label{sec_unknown_delta}

In this section, we consider how agents behave with incomplete information about their neighbors' $\delta$ values. 
{In other words, an agent's reputation (its neighbors' perception of $\delta$) does not match the agents' true trustworthiness.}
The expected utility agent $i$ gains by interacting with agent $j$ is dependent on both the utility $j$ brings ($\mathcal{A}_{ij}$ and $\mathcal{B}_{ij}$) and its trustworthiness being aligned with its reputation ($i$'s perception of $\delta_j$ being accurate). These parameters are independent, so each may be estimated separately. $\mathcal{A}_{ij},\mathcal{B}_{ij}$ are multi-dimensional distributions and can be estimated using standard statistical methods if they are not known \textit{a priori}. Therefore we focus on how agent $i$ estimates $\delta_{-i}$, where $\delta_{-i} = \{\delta_j\}_{j\in [N]\setminus\{i\}}$ is the set of $\delta$ values for all agents $j\neq i$.

Because much of this section considers only interactions between two players in a game rather than agents in a larger network, we will use the terms ``player'' and ``agent'' interchangeably here.

\subsection{Learning $\delta_{-i}$ as Leader}\label{subsec_learn_leader}

Consider an $m\times n$ leader-follower game with leader player 1 and follower player 2. Assume that player 1 knows through past observations that $\delta_2\in [\delta_{21}^l,\delta_{21}^u)$, 
a pair of lower and upper bounds. The interval $[\delta_{21}^l,\delta_{21}^u)$ is half-open because when we observe $\delta^l_{ji}$ being given up we know $\delta_j \geq \delta^l_{ji}$, but when $\delta^u_{ji}$ is not given up all we know is $\delta_j < \delta^u_{ji}$. Suppose that player 1 has selected strategy $s_i$ to play. Then the game is equivalent to the $1\times n$ game given in Table \ref{tab_1ngame}. 

\begin{table}[ht]
\centering
\caption{$1\times n$ Leader-Follower Game}
\label{tab_1ngame}
\begin{tabular}{llcll}
 &  & \multicolumn{3}{c}{Player 2} \\
 &  & $s_1$ & ... & $s_n$ \\ \cline{3-3} \cline{5-5} 
Player 1 & \multicolumn{1}{l|}{$s_i$} & \multicolumn{1}{l|}{$(a_1,b_1)$} & \multicolumn{1}{l|}{...} & \multicolumn{1}{l|}{$(a_n,b_n)$} \\ \cline{3-3} \cline{5-5} 
\end{tabular}
\end{table}

Player 1 can refine its knowledge of $\delta_{2}$ based on player 2's response by considering the Pareto frontier of player 2's strategies measured in the values of $u_2$ and $u_1+u_2$. Assume that there are $k$ strategies on the frontier and they are relabeled $\{s_1,s_2,...,s_k\}$ such that $u_2(s_1)>u_2(s_2)>...>u_2(s_k)$ and $u_1(s_1)+u_2(s_1)<u_1(s_2)+u_2(s_2)<...<u_1(s_k)+u_2(s_k)$. Figure \ref{fig_pareto_leader} depicts such a frontier with $k=5$. If player 2 plays $s_j$ in response, then it must be that $b_1-b_j \leq \delta_2 < b_1-b_{j+1}$. Let $b_{k+1} = -\infty$ for the case where $j=k$. 

\begin{figure}[ht]
    \centering
    \includegraphics[width=0.85\linewidth]{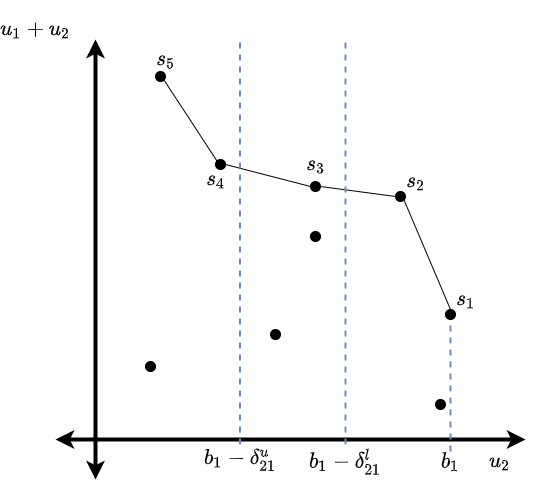}
    \caption{Pareto frontier of game/strategy in Table \ref{tab_1ngame} for the Leader.}
    \label{fig_pareto_leader}
\end{figure}

Analyzing the Pareto frontier allows player 1 to determine whether a better bound for $\delta_2$ is found and whether $\delta_2$ has changed. Consider Figure \ref{fig_pareto_leader} again: based on player 1's previous bounds for $\delta_2$, it expects player 2 to select either $s_2$ or $s_3$, depending on whether $b_1-b_3 > \delta_{2}$. Thus if player 2 responds with $s_j\in\{s_1,s_4,s_5\}$, there has been a change to $\delta_2$. Algorithm 1 how the leader player 1 notices detectable changes and updates its bounds for $\delta_2$.

For the interested reader, Appendix B contains additional work deriving the expected time for the leader to learn the follower's $\delta$ to arbitrary precision levels.

\begin{center}
\begin{algorithm}[ht]
\footnotesize
    \caption{Leader: Update $\delta_{21}^l,\delta_{21}^u$}
    \label{alg_update_leader}
    \begin{algorithmic}
        \REQUIRE $\delta_{21}^l,\delta_{21}^u,S=\{s_1,s_2,...,s_k\},j$
            \STATE $l \leftarrow b_1 - b_j$
            \STATE $u \leftarrow b_1 - b_{j+1}$
            \STATE $change \leftarrow False$
            \IF{$l \geq \delta_{21}^u$ \textbf{or} $u \leq \delta_{21}^l$}
                \STATE $\delta_{21}^l \leftarrow l$
                \STATE $\delta_{21}^u \leftarrow u$
                \STATE $change \leftarrow True$
            \ELSE 
                \STATE $\delta_{21}^l \leftarrow \max\{l,\delta_{21}^l\}$
                \STATE $\delta_{21}^u \leftarrow \min\{u,\delta_{21}^u\}$
            \ENDIF
            \STATE \Return $\delta_{21}^l,\delta_{21}^2, change$
    \end{algorithmic}
\end{algorithm}
\end{center}

\subsection{Learning $\delta_{-i}$ as Follower} \label{subsec_learn_follower}

We now show how a follower can learn the $\delta$ value of a leader. Consider an $m\times n$ game with leader player 1 and follower player 2. From past observations player 2 knows $\delta_1\in [\delta_{12}^l,\delta_{12}^u)$. Recall that player 2's best response to player 1 selecting $s_i$ is $s_j^* = r_2(s_i,\delta_2)$. While player 1 does not know $\delta_2$, it does have an estimate $\delta_{21}$. Unless otherwise specified, $\delta_{21} = \frac{\delta_{21}^u+\delta_{21}^l}{2}$. {This estimate is based on player 1 assuming a uniform distribution for $\delta_2$; in the event, it has further distributional knowledge of $\delta_2$ it can of course use that to estimate $\delta_{21}$ on that distribution truncated to the range $[\delta_{21}^l,\delta_{21}^u)$} Thus from player 1's perspective, the game can be rewritten as the $m\times1$ game in Table \ref{tab_m1game}.

\begin{table}[ht]\centering
\caption{$m\times 1$ Leader-Follower Game}
\label{tab_m1game}
\begin{tabular}{lll}
         &                            & Player 2                         \\
         &                            & $r_2(s_i,\delta_{21})$           \\ \cline{3-3} 
         & \multicolumn{1}{l|}{$s_1$} & \multicolumn{1}{l|}{$(a_1,b_1)$} \\ \cline{3-3} 
Player 1 & $\vdots$                   & $\vdots$                         \\ \cline{3-3} 
         & \multicolumn{1}{l|}{$s_m$} & \multicolumn{1}{l|}{$(a_m,b_m)$} \\ \cline{3-3} 
\end{tabular}
\end{table}

If player $2$ knows $\delta_{21}$, player 1's estimate of $\delta_2$, it can construct the $m\times 1$ game that player 1 is considering. Based on past interactions, player 2 can compute $\delta_{21}^l,\delta_{21}^u$ from its own past actions, and therefore compute $\delta_{21}$. Player 2 can then construct a Pareto frontier of player 1's strategies similar to Figure \ref{fig_pareto_leader}, but measured in the values of $u_1$ and $u_1+u_2$. Without loss of generality, if there are $k$ strategies on the frontier assume that $u_1(s_1)>u_1(s_2)>...>u_1(s_k)$ and $u_1(s_1)+u_2(s_1)<u_1(s_2)+u_2(s_2)<...<u_1(s_k)+u_2(s_k)$. Figure \ref{fig_pareto_follower} gives an example with $k=5$. If player 1 selects $s_i$ while anticipating $r_2(s_i,\delta_{21})$ in response, it must be that $a_1-a_i \leq \delta_1 < a_1-a_{i+1}$, where $a_{k+1}=-\infty$ in the case that $i=k$. $\delta_{12}^l$ and $\delta_{12}^u$ are updated if this implies a better bound.

\begin{figure}
    \centering
    \includegraphics[width=0.85\linewidth]{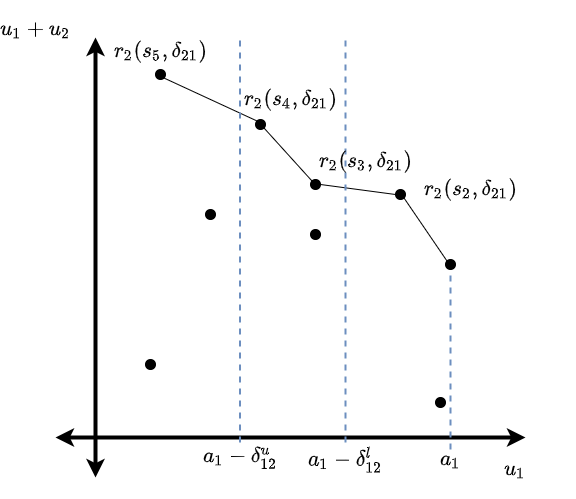}
    \caption{Pareto frontier of game/strategy in Table \ref{tab_m1game} for the Follower}
    \label{fig_pareto_follower}
\end{figure}

Similar to the leader in the previous section, the follower can use the Pareto frontier to determine if $\delta_1$ has changed. For the frontier in Figure \ref{fig_pareto_follower}, player 2's previously derived bounds for $\delta_1$ indicate that if $\delta_1$ hasn't changed, player 1 will select $s_2,s_3,$ or $s_4$, with $s_2$ occurring if $\delta_1 < a_1-a_3$. Thus if player 1 selects $s_1$ or $s_5$, $\delta_1$ has changed. The follower can then use Algorithm 1 with slight modifications (consider $b_1,b_i,b_{i+1}$ rather than $a_1,a_i,a_{i+1}$) to update its knowledge of the leader player 1, while noticing any detectable changes in $\delta_1$.

\subsection{Network Dynamics Under Unknown $\delta$}\label{subsec_network_unknown_delta}

So far we have focused on the learning of unknown $\delta$ between two players. Now we consider the broader network. Recall that each agent $i$ can initiate at most $k_i$ interactions per round. Therefore, any agent $i$ with neighborhood $|N_i^1| > k_i$ faces an exploration-exploitation dilemma: invite neighbor $j\neq i$ where $u_i(\theta_{ij})$ is maximized or invite neighbor $h\neq i$ where $\delta_{hi}^u - \delta_{hi}^l$ is large. 

We consider agents which address this dilemma in the following manner: at the beginning of round $t$, agent $i$ invites at most $k_i$ of its neighbors $N_i^1$ to interact. Agent $i$ selects some of these neighbors for the purpose of exploration and some for exploitation. For each neighbor $j$, although $i$ does not know $\delta_j$ it has an estimate $\delta_{ji}$ from past interactions. Unless otherwise specified, $\delta_{ji} = \frac{\delta_{ji}^u + \delta_{ji}^l}{2}$. If $i$ decides to exploit $h(t)$ of its interactions in round $t$, then it selects the set of $S$ neighbors such that $S = \arg\max_{S}\sum_{j\in S}u_i(\theta_{ij}),\text{ }$ subject to $|S|\leq h(t)$ and $u_j(\theta_{ij}) \geq 0 \text{ }\forall{j\in S}$, as otherwise the invitation will be refused. Based on past interactions, agent $i$ is capable of determining the value $\delta_{ij}$ that agent $j$ estimates for $\delta_i$, and can avoid sending an invitation which will be rejected. $h(t)$ is determined according to a multi-armed bandit scheduling policy, such as uniform $\varepsilon$-greedy, -first, or -decreasing, or a more sophisticated policy such as Thompson sampling. {Which policy is appropriate depends on the distribution of game utilities and agent $i$'s observations thus far.} Agent $i$ then randomly samples $k_i-|S|$ of its remaining neighbors for exploration, according to a discrete probability distribution $f_t(N_i^1\setminus S,\delta'_i)$, where $\delta'_i = \{(\delta_{ji}^l,\delta_{ji}^u)\}_{j\in N_i^1}$ is $i$'s estimates of the $\delta$ values of its neighbors. For each invited neighbor $j$ agent $i$ plays according to $\delta_i$ and its estimate $\delta_{ji}$: 
agent $i$ should not forego a large opportunity or accept a large cost while exploring agent $j$. Instead, it will take that into account in determining whether to issue $j$ future invitations.

\section{Network Games with Variable $\delta$}\label{sec_var_delta}

In Sections \ref{sec_network_model} and \ref{sec_unknown_delta}  we defined the mechanics under which network games function. {A} natural extension to consider is how a player $i$ might change $\delta_i$ in order to take advantage of $\delta_{-i}=\{\delta_1,\delta_2,...,\delta_{i-1},\delta_{i+1},...,\delta_N\}$. Therefore, if agent $i$ can change $\delta_{i}$ between rounds, it should set it to 
$$\delta_i^*=\arg\max_{\delta_i\in\Delta_i} \mathbf{u}_i(\delta_i,\theta_{-i}),$$
where $\theta_{-i} = \{G,\mathcal{A},\mathcal{B},\delta_{-i}\}$ and where $\Delta_{i}=[0,\delta_{max}]$ and $\delta_{max}$ is an arbitrarily enforced maximum value of $\delta$ for the system. Note that while $\delta_{max}$ can be arbitrarily large, we require $\delta_{max}\in \mathcal{R^+}$.

Finding $\delta_{i}^*$ is complicated by two factors. First, agent $i$ does not know $\delta_{-i}$. Instead, it has estimates of $\delta_{ji}$ for its neighbors $j\in N_{i}^1$ based on past interactions. This prevents it from accurately determining whether it will receive an invitation from $j$ for a given value of $\delta_i$, and it must estimate the expected value of the game if it does receive the invitation. 
Second, agent $i$'s neighbor $j$ does not know $\delta_{-j}$: agent $j$ will not immediately notice a change in $\delta_i$, and so agent $i$ will not immediately receive the expected utility from shifting to $\delta_i^*$. This comes from $j$ not correctly deciding whether or not to issue an invitation to $i$, as well as not estimating agent $i$'s trust level correctly if they do interact.

With mild re-use of notation, we let $\delta'_{i} = \{\delta_{ji}\}_{j\in N_i^1}$ be agent $i$'s estimates of $\delta_j$ for each of its neighbors $j$. Because we consider social networks, we assume that agents share impressions of their own neighbors during their interactions via gossip. {Although this assumption is made for model tractability it is a theoretically reasonable assumption for social groups.} This leads to agent $i$ knowing $\delta'_j$ for each of its neighbors $j\in N_i^1$. This allows $i$ to address the first complication in finding $\delta_i^*$: by knowing $\delta'_j$, agent $i$ can predict whether or not it will receive an invitation from agent $j$ if it shifts the value of $\delta_i$. 
As noted in Section \ref{sec_network_model}, agent $i$'s expected utility can be determined entirely from its 2-hop neighborhood. $i$ therefore does not need any additional information from other agents $l\notin N_i^1$. While it must estimate the expected value of the game agent $j$ initiates as a function of $\delta_{ji}$ and $\delta_i$, after several interactions, it is likely that $\delta_{ji} \approx \delta_j$. By the continuity implied by Theorem \ref{thm_cont_in_delta}, this means that as $\delta_{ji}\rightarrow \delta_j$, $u_i(\mathcal{A}_{ji},\mathcal{B}_{ji},\delta_{ji},\delta_i)\rightarrow u_i(\theta_{ji})$ and $u_i(\mathcal{A}_{ij},\mathcal{B}_{ij},\delta_{i},\delta_{ji})\rightarrow u_i(\theta_{ij})$. 

We address the second complication heuristically. Because a change in $\delta_i$ takes time to become apparent to agent $i$'s neighbors $j\in N_i^1$, $i$ does not benefit by changing $\delta_i$ frequently. Doing so results in it never gaining the expected utility it computed when determining $\delta_i^*$. Therefore, agent $i$ adjusts $\delta_i$ in a given round with arbitrary probability $p_i$. The number of rounds that each agent $i$ commits to a given $\delta_i$ before reevaluating is thus a geometric random variable. Another heuristic option is for agents to update on an epoch schedule, every $t$ rounds. This allows agent $i$ to better estimate $\delta'_i$, and allows agent $j\in N^1_i$ to better determine $\delta_i$ so that $i$ realizes the value it expected when it set $\delta_i$. It is natural to assume that in this setting without an immediately observable $\delta$ a myopic agent $i$ will select $\delta_i=0$ to take advantage of its neighbors in their next interactions. This is correct, but under both heuristic mechanisms agent $i$ chooses $\delta_i$ which maximizes its utility until the next update. This does mean that agent $i$ is non-myopic when selecting $\delta_i$, as it makes the assumption that its reputation will eventually match the new value of $\delta_i$. Once $\delta_i$ is selected though, each subsequent interaction can continue to be evaluated myopically until $\delta_i$ is updated. We consider both mechanisms in Section \ref{sec_numerical_sng}.

Having covered the mechanics by which $\delta_i$ varies for a given agent $i$, we now address changes to the exploration-exploitation methodology from Section \ref{subsec_network_unknown_delta}. This framework functions well when $\delta$ is fixed for all agents, as the value of exploration decreased with accumulated knowledge. However, whenever $\delta_j$ shifts, agent $i$'s past knowledge of $\delta_j$ becomes obsolete and additional exploration is beneficial. Thus, instead of using $h(t)$ to determine how many of its neighbors to exploit and explore, agent $i$ should also take into account how recently each of its neighbors $j\in N_i^1$ changed $\delta_j$. Algorithm1 1 already reports whether or not a change in $\delta_j$ has been detected, so a vector $t_i$ of how many rounds ago each neighbor $j\in N^1_i$ changed $\delta_j$ is easily maintained and exploration and exploitation can instead be determined by $h(t_i)$.

\subsection{Personalized $\delta$}\label{subsec_personal_delta}

Before our numerical studies in the next section, we consider the setting in which agent $i$ uses different values of $\delta_i$ depending on which of its neighbors it is interacting with. This reflects when individuals may prefer specific trusted partners or be more willing to help them than they would other acquaintances.  

Let $\delta_i(j)$ be the value of $\delta_i$ agent $i$ uses when interacting with agent $j$. Similarly, let $\delta_{ji}(i)$ be agent $i$'s estimate of $\delta_j(i)$. In many ways this will make the problem of selecting $\delta_i^*$ simpler, despite the fact that agent $i$ now needs to select a vector rather than a single value. This is because $i$ can determine $\delta_i^*(j)$ while only considering $j$ and $N_j^1$, and only taking $N_i^1$ into account at the end. 

The procedure for agent $i$ to determine $\delta_i^*(j)$ is straightforward. First, $i$ determines $\delta_i^F(j)$, where
\begin{align*}
\delta_i^F(j) &= \arg\max_{\delta_i(j)\geq 0} I_1(\mathcal{A}_{ji},\mathcal{B}_{ji},\delta_j(i),\delta_i(j))\\
&*\max\{0,u_i(\mathcal{A}_{ji},\mathcal{B}_{ji},\delta_j(i),\delta_i(j))\}
\end{align*}
and $I_1(\mathcal{A}_{ji},\mathcal{B}_{ji},\delta_j(i),\delta_i(j))$ is an indicator function which is 1 if agent $j$ will issue $i$ an invitation and 0 otherwise. $\delta_i^F(j)$ is the optimal value of $\delta_i^*(j)$ 
if agent $i$ does not intend to issue $j$ an invitation, but would still benefit from receiving one from $j$. Note that the term $\max\{0,u_i(\mathcal{A}_{ji},\mathcal{B}_{ji},\delta_j(i),\delta_i(j))\}$ indicates that $i$ will decline the invitation if $u_i < 0$.
Next, agent $i$ determines $\delta_i^L(j)$ such that 
\resizebox{\linewidth}{!}{
\begin{minipage}{\linewidth}
\begin{align*}
  \delta_i^L(j) = \arg\max_{\delta_i(j)\geq 0} & I_1(\mathcal{A}_{ji},\mathcal{B}_{ji},\delta_j(i),\delta_i(j)) \\
  &*\max\{0,u_i(\mathcal{A}_{ji},\mathcal{B}_{ji},\delta_j(i),\delta_i(j))\}\\ 
  &+I_2(u_j(\mathcal{A}_{ij},\mathcal{B}_{ij},\delta_i(j),\delta_j(i)) \geq 0)\\ &*u_i(\mathcal{A}_{ij},\mathcal{B}_{ij},\delta_i(j),\delta_j(i)),  
\end{align*}
\end{minipage}
}
where $I_2$ is an indicator variable which is 1 if agent $j$ would accept an invitation from agent $i$ at the specified $\delta_i(j)$.
$\delta_i^L(j)$ represents the optimal value of $\delta_i^*(j)$ if agent $i$ would like to issue an invitation to $j$, as well as potentially receive one.

Having compiled a pair $(\delta_i^F(j),\delta_i^L(j))$ for each neighbor $j\in N_i^1$, agent $i$ now can select $\delta_i(j)$ as one of the two values from each pair. This is subject only to the constraint that $i$ may select $\delta_i(j) =  \delta_i^L(j)$ for at most $k_i$ of its neighbors, as it can only issue at most $k$ invitations. By choosing at most $k_i$ neighbors for which this difference is highest, agent $i$ can determine $\delta_i^*$.

Note that while we assumed $\delta_{-i}$ to be known for simplicity, the procedure for determining $\delta_i^*(j)$ when $\delta_{-i}$ is unknown is analogous, utilizing the techniques from earlier in Section \ref{sec_var_delta}.

\subsection{Variable Known $\delta$}

Before moving on we again pause to consider the case when $\delta_{-i}$ is known to agent $i$. While agents in the network interact with each other in 2-player Stackelberg games, allowing agents to modify their values of $\delta$ between rounds gives them a second, indirect interaction. When an agent selects a value for $\delta$ it does not directly result in utility, but it influences which agents will interact with it as well as how they will interact, which in turn results in utility. For this reason, selecting $\delta$ represents an $N$-player game on top of the system. Because the strategies for this game indirectly influence utility, instead influencing the system that will determine utility, we refer to this as an $N$-player continuous strategy ``metagame'' over the system. As each player displays trust only in its own self-interest, we can show that under certain settings the metagame displays mixed Nash equilibria despite its continuous strategy sets.

\begin{definition}[Mixed Nash Equilibrium]
Given an $N$-player game with strategy profiles $\sigma = (\sigma_1,\sigma_2,...,\sigma_N)$ for each player where for a given player $i$, $\sigma_{-i}$ is the set of strategies played by all other players, $\sigma$ is a mixed Nash equilibrium (MNE) if and only if for any other valid strategy profiles $\sigma_i'$, $u_i(\sigma_i,\sigma_{-i}) \geq u_i(\sigma'_i,\sigma_{-i})$ for all $i \in [N]$, and $u_i(\sigma_i,\sigma_{-i})$ is the expected utility of the game for player $i$ when it plays strategy $\sigma_i$.
\end{definition}

\begin{theorem}\label{thm_sng_uniform_mne}
Consider a social network $G$ with uniform interactions $\mathcal{A}_{ij} = \mathcal{B}_{lh}$ for all $l,i,j,h\in [N]$ such that all payoffs are nonnegative and for agent $i$ with neighbors $j$ and $l$, $\delta_j \leq \delta_l \rightarrow u_i(\theta_{ij}) \leq u_i(\theta_{il})$. Then the $N$-player metagame with closed interval strategy space $\Delta_i\subseteq \mathcal{R}$ and utility function $\mathbf{u}_i$ for $i\in[N]$ possesses a mixed Nash equilibrium.
\end{theorem}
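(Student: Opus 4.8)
The plan is to derive the result from the existence theorem for mixed-strategy equilibria in games with discontinuous payoffs due to \cite{dasgupta1986}. Two of its hypotheses are immediate: each player's strategy space $\Delta_i=[0,\delta_{max}]$ is a nonempty compact convex subset of $\mathcal{R}$, and the payoff $\mathbf{u}_i(\delta)=v_i(\theta)+w_i(\theta)$ is bounded on $\prod_{i\in[N]}\Delta_i$ because $G$ is finite and, by our standing assumption, every $u_i(\theta_{ij}),u_j(\theta_{ij})$ is finite. It remains to verify the two semicontinuity requirements of \cite{dasgupta1986}: that $\sum_{i\in[N]}\mathbf{u}_i$ is upper semicontinuous, and that each $\mathbf{u}_i$ is weakly lower semicontinuous in its own coordinate $\delta_i$. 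The first step toward both is to locate the discontinuities: since interactions are uniform and, by hypothesis, $\delta_j\le\delta_l$ implies $u_i(\theta_{ij})\le u_i(\theta_{il})$, every leader $m$ ranks its neighbors exactly by their trust levels, so $K_m^1$ is the set of the $k_m$ neighbors with the largest $\delta$, and $i\in K_i^2$ at a leader $j$ precisely when $\delta_i$ is among the $k_j$ largest trust levels in $N_j^1$. Hence the composition of every set $K_m^1,K_m^2$ is locally constant off the finite union of ``tie'' hyperplanes $\{\delta_j=\delta_l\}$, and Theorem \ref{thm_cont_in_delta} makes $\mathbf{u}$ continuous on each region where these sets are fixed.

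Next I would show $\sum_{i\in[N]}\mathbf{u}_i$ is in fact continuous, which gives upper semicontinuity for free. Writing the total utility as the sum, over all games actually played, of that game's expected net utility, a tie-break at $\delta_i=\delta_h$ (say, $i$ and $h$ competing for the last follower slot of a leader $j$) amounts to $j$ swapping the game $\theta_{ji}$ for $\theta_{jh}$; because interactions are uniform and $\delta_i=\delta_h$ there, these two games have identical expected payoffs for $j$, for $i$, and for $h$, so the term contributed to $v_j$, the term $u_i(\theta_{ji})$ contributed to $w_i$, and the term $u_h(\theta_{jh})$ contributed to $w_h$ all vary continuously across the crossing, with any downward jump in $w_i$ exactly matched by an upward jump in $w_h$. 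Simultaneous multi-way ties are handled the same way, every affected term being continuous by Theorem \ref{thm_cont_in_delta} and the tied games being payoff-equivalent. Thus $\sum_{i\in[N]}\mathbf{u}_i$ has no jumps at all.

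For the weak lower semicontinuity of $\mathbf{u}_i$ in $\delta_i$, the key observation is that $\mathbf{u}_i$ can only jump \emph{upward} as $\delta_i$ increases. The leader part $v_i(\theta)=\sum_{j\in K_i^1}u_i(\theta_{ij})$ has $K_i^1$ independent of $\delta_i$, so by Corollary \ref{thm_monotinic_leader} together with Theorem \ref{thm_cont_in_delta} it is continuous (indeed nonincreasing) in $\delta_i$; the follower part $w_i(\theta)=\sum_{j\in K_i^2}u_i(\theta_{ji})$ is continuous in $\delta_i$ while $K_i^2$ is fixed, and raising $\delta_i$ only improves $i$'s rank at each neighboring leader, so $K_i^2$ can only grow and each jump merely adds a new nonnegative term $u_i(\theta_{ji})\ge 0$. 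Hence $\liminf_{\delta_i'\downarrow\delta_i}\mathbf{u}_i(\delta_i',\delta_{-i})\ge \mathbf{u}_i(\delta_i,\delta_{-i})$ for every $\delta$, which is the weak lower semicontinuity condition of \cite{dasgupta1986} with the weighting parameter taken to be $0$ for all players (for one-dimensional $\Delta_i$ this is just right lower semicontinuity). With every hypothesis in place, \cite{dasgupta1986}'s theorem yields a mixed Nash equilibrium of the metagame.

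I expect the main obstacle to be the upper semicontinuity of $\sum_{i\in[N]}\mathbf{u}_i$: one must argue carefully that the combinatorial reshuffling of who plays whom at a tie produces no net jump, which is exactly where the uniform-interactions hypothesis is indispensable — it forces the swapped games to be payoff-equivalent at the tie point — and one must handle simultaneous multi-way ties and fix a consistent tie-breaking convention so that the individual payoffs are well defined. The boundedness and one-sided-limit conditions are comparatively routine once Theorem \ref{thm_cont_in_delta} and Corollary \ref{thm_monotinic_leader} are invoked, as is checking that the monotonicity hypothesis is precisely what makes every leader's invitation set $K_m^1$ a threshold set in the $\delta$-ranking.
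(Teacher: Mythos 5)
Your proposal is correct and follows essentially the same route as the paper's proof: invoke \cite{dasgupta1986}, locate the discontinuities of each $\mathbf{u}_i$ on the tie set $\{\delta_j=\delta_l\}$, prove $\sum_i\mathbf{u}_i$ is actually continuous by the payoff-equivalence of the swapped games at a tie (the drop in one follower's $w$ matching the gain in the other's, with the leader's $v$ continuous as a top-$k$ sum of continuous functions), and verify weak lower semi-continuity of $\mathbf{u}_i$ in $\delta_i$. The only difference is cosmetic: you certify weak lower semi-continuity with weight $\lambda=0$ (right limits, using that raising $\delta_i$ can only add nonnegative invitation terms), whereas the paper selects $\lambda=1$; both are legitimate instantiations of the same condition.
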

The proof to Theorem \ref{thm_sng_uniform_mne} is given in our e-companion \cite{murray2021trust} in Appendix B. To the best of our knowledge, this is the first such equilibrium result for a continuous metagame over a network system without non-myopic controls. The closest similar result is from \cite{song2019} which explores the requirements in their system for network structure and interaction convergence where portions of the strategy sets are continuous but assumes that agents are non-myopic and punish defectors via grim-trigger strategies. This result also requires that all agent interactions be a specific variation of public goods games, while our metagame allows for agents who interact in a more generalized game without external controls.

\section{Numerical Studies}\label{sec_numerical_sng}
In this section, we empirically examine the relationship between network position and trust level $\delta$ in realistic social networks. We will study Zachary's Karate Club network from \cite{zachary1977}, as well as the ego-Facebook network curated by SNAP \cite{snapnets}. 
The network is a social network representing friendships in a university karate club studied by Wayne Zachary from 1970-72, and is commonly employed as an example of a real-world social network with a community structure.  It is visually represented in Figure \ref{fig_zkc_graph}. 
We will consider separately when $\delta_{-i}$ is known and unknown for agent $i$ in each network. When agent $i$ determines which of its neighbors $j\in N_i^1$ to issue invitations to, $u_i(\theta_{ij})$ is estimated as the mean of $i$'s utility in 1000 games drawn iid at random from $\mathcal{A}_{ij},\mathcal{B}_{ij}$ with $\delta_i$, $\delta_j$. We will confine our attention to distributions where $u_j(\theta_{ij})\geq 0$, so that all invitations will be accepted.

\begin{figure}[ht]
    \centering
    \includegraphics[width=0.85\linewidth]{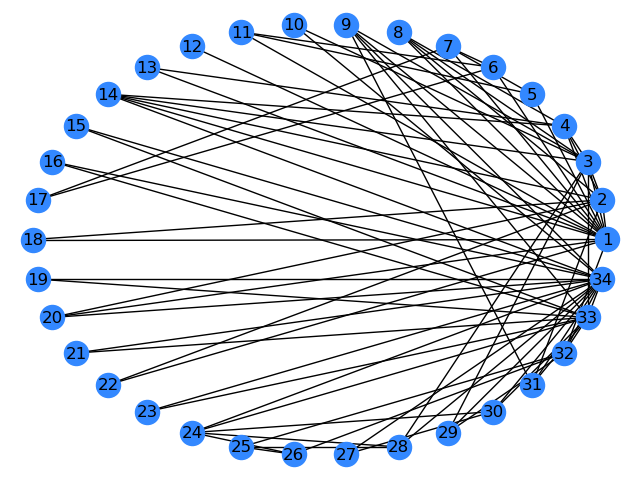}
    \caption{Zachary's Karate Club}
    \label{fig_zkc_graph}
\end{figure}

In addition to the two networks mentioned above, we conduct numerical studies on five other networks from the Konect \cite{konect} network collection. We have omitted them from the main paper due to space constraints, but the results reinforce the conclusions we reach here and can be viewed in the appendices of our e-companion \cite{murray2021trust}.

\subsection{Known $\delta$}
\label{subsec_network_known_delta}

We consider the Karate Club network with the following parameters:

\begin{itemize}
    \item $\mathcal{A}_{ij} = \mathcal{B}_{hl}$ for all $h,i,j,l\in [N]$. $A_{ij} \sim \mathcal{A}_{ij}$ is a $2\times2$ matrix with entries generated iid from the exponential distribution with $\lambda=4$.
    \item $\delta_i\in [0,30]$ $\forall i\in [N]$.
    \item $\delta_i$ is known to all agents.
    \item $\delta_i$ updates between rounds. $\delta_i$ at time $t$ is a greedy best response to $\delta_{-i}$ at time $t-1$. For $t=0$, $\delta_i=0$, $\forall i\in[N]$.
    \item $k_i=2$ for each agent $i$. In the event that $|N_i^1| < 2$, we set $k_i = |N_i^1|$.
\end{itemize}

$u_i(\theta_{ij})$ and $u_j(\theta_{ij})$ as functions of $\delta_i,\delta_j$ are estimated by taking the sample mean utilities of 1000 simulated games generated iid according to $\mathcal{A}_{ij}, \mathcal{B}_{ij}$. 

Figure \ref{fig_zkc_n0_1} illustrates $\mathbf{u}_{1}(\theta_{-1},\delta_1)$ as a function of $\delta_1$ for fixed (random) $\delta_{-i}$, for vertex 1 in the karate club network. The left plot gives the value of $\mathbf{u}_1(\theta_{-1},\delta_1)$ and the right plot shows the number of games the agent at vertex 1 in the Karate Club network participates in per round. We see in the left plot that $\mathbf{u}_1(\theta_{-1},\delta_1)$ is flat or monotonically decreasing with $\delta_1$ at all but a handful of points where there is a sharp increase. These jumps occur when $\delta_1$ is large enough to attract a new agent to interact with agent 1, shown in the plot on the right. This is unsurprising: the utility of a single game for the follower may not monotonically decrease as $\delta$ increases, but as stated in Section \ref{sec_network_model} we strongly suspect that for many distributions it monotonically decreases \textit{in expectation}. Based on Corollary \ref{thm_monotinic_leader} we also expect the flat areas to monotonically decrease over the true distribution, as this figure comes from a sample of 2000 rounds.

\begin{figure}
\centering
  \centering
  \includegraphics[width=0.85\linewidth]{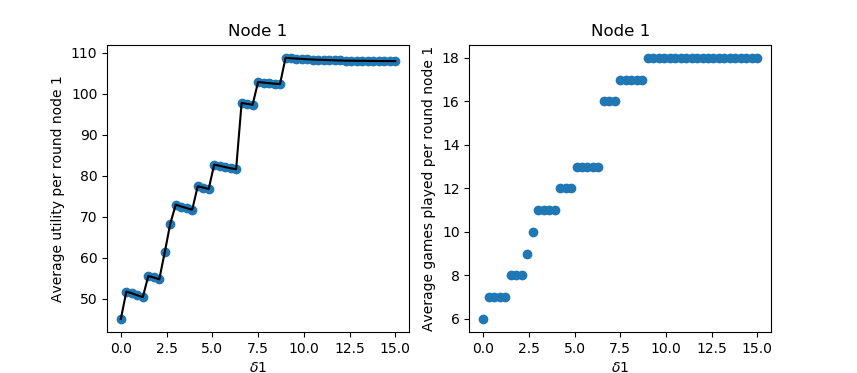}
  \caption{Utility for Vertex 1, Known $\delta$ }
  \label{fig_zkc_n0_1}
 \end{figure}

Now we examine how players behave under these parameters when all adjust $\delta$ together between rounds. As noted, $u_i(\theta_{ij})$ is estimated numerically. Between rounds, it is sampled at a number of points in the interval $[0,30]$ then set to the one which maximizes $\mathbf{u}_i(\theta_{-i},\delta_i)$. Each curve in Figure \ref{fig_zkc_known_random} plots $\delta_i$ for an individual agent as it varies over time. Because each curve is an individual agent whose label is an arbitrary number, we have not included a legend. The same is true of the majority of the other figures in this section. The majority of players reach $\delta_{max}=30$ and stay there in order to be competitive in attracting partners, with only occasional decreases to $\delta=0$ when they are not competitive enough. {In this context, non-competitive means that the agents are not attracting a sufficiently high number of additional interactions to justify selecting $\delta > 0$.} There are generally between 1-3 agents selecting $\delta \approx 0$ in each of these rounds. The curves in Figure \ref{fig_zkc_known_random_conjoined} display the mean value of $\delta$ for all vertices of the same degree. For example, the yellow curve is the mean value of $\delta$ across all vertices of degree 4. Figure \ref{fig_zkc_known_random_conjoined} shows what is occurring: at any given time there are generally between 0 and 2 vertices of degree 2 (out of 11) with low values of $\delta$. The variance of which ones are not at $\delta=30$ may be due to best response dynamics between them, or to insufficient sample size when estimating utilities.



\begin{figure}[ht]
\centering
    \includegraphics[width=.85\linewidth]{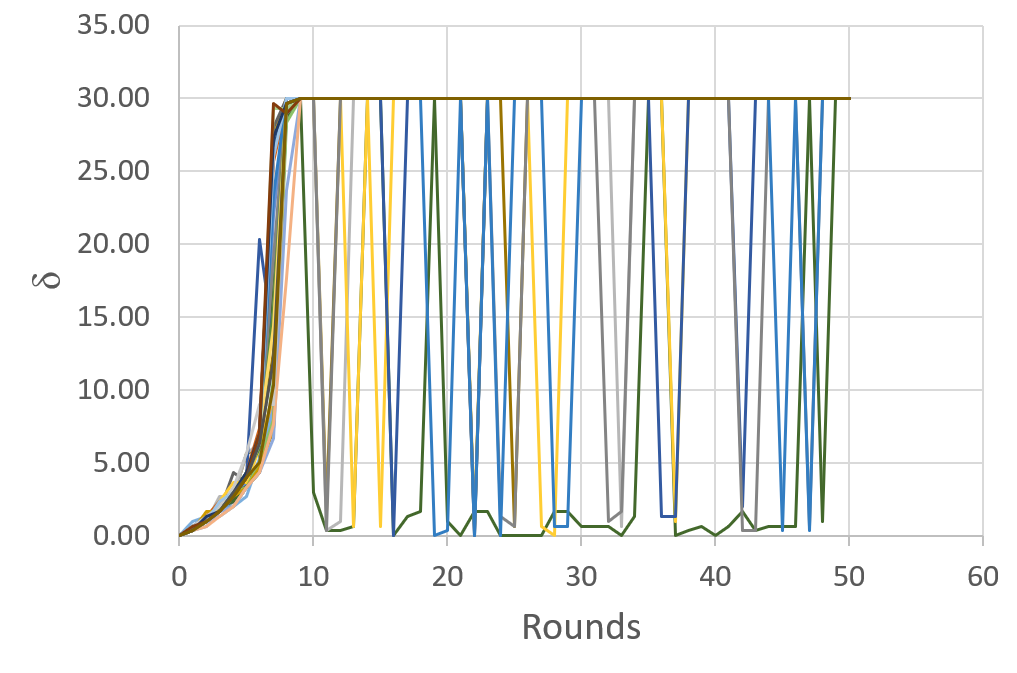}
    \captionsetup{width=.85\linewidth}
    \caption{$\delta$ in karate club network with $k=2$ invitations per round with ties broken uniformly at random}
    \label{fig_zkc_known_random}
\end{figure}
\begin{figure}[ht]
    \centering
    \includegraphics[width=.85\linewidth]{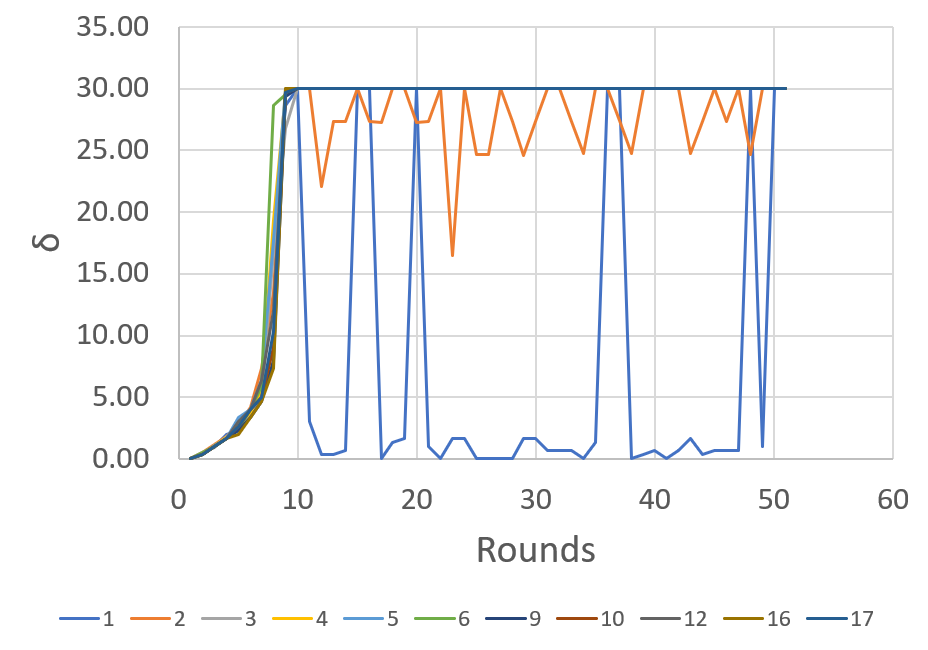}
    \captionsetup{width=.85\linewidth}
    \caption{Mean $\delta$ by vertex degree in karate club network with $k=2$ invitations per round with ties broken uniformly at random}
    \label{fig_zkc_known_random_conjoined}
\end{figure}

Vertex 12, the only vertex with degree 1, is easier to examine. Its only neighbor is vertex 1, which has degree 16. If all of vertex 1's other neighbors have $\delta=30$, then if $\delta_{12}=30$, vertex 12 can expect to be invited to $\frac{1}{8}$ games per round. Figure \ref{fig_zkc_known_random_conjoined} indicates that $u_{12}(\theta_{12,1},\delta_{12}=0) > u_{12}(\theta_{12,1},\delta_{12}=30) + \frac{1}{8}u_{12}(\theta_{1,12},\delta_{12}=30)$. The occasional jumps of $\delta_{12}=30$ can be attributed to the changes in the behavior of the vertices of degree 2. The points at which $\delta_{12}$ is low but not 0 are attributable to insufficient sample size in 12's estimates. 

We previously focused on breaking ties randomly when deciding whom to issue invitations. Now, we examine tie-breaking according to a lexicographic ordering. Figure \ref{fig_zkc_known_arbitrary} is analogous to Figure \ref{fig_zkc_known_random} with this change. It shows a strong, consistent, and repeating pattern in the values of player $\delta$s. When $\delta$ is low for most players, there is a general pattern of one-upmanship between players: each tries to slightly outdo its competitors, which progresses toward large jumps as the costs of increasing $\delta$ relative to a player's current $\delta$ value shrink. The result is a very clear S-curve. However, once many players reach $\delta=30$, the ``losers'' of the tie-breaking drop to $\delta=0$ starting a cascade of all players back to low $\delta$ values. The process then repeats. This echoes the findings of \cite{barclay2007}, in which players ``fake'' generosity to attract partners: once they have either failed to attract partners or their competitors have given up, each player returns to selfish behavior until competition again forces it to behave in a trustworthy manner. 
In contrast, random tie-breaking maintains a constant state of competition, preventing backsliding. Further, it is supported by \cite{fu2008} which finds that semi-frequent partner changes are necessary to motivate generous behavior, as otherwise partners become complacent and take advantage of each other.

\begin{figure}[ht]
\centering
    \centering
    \includegraphics[width=.85\linewidth]{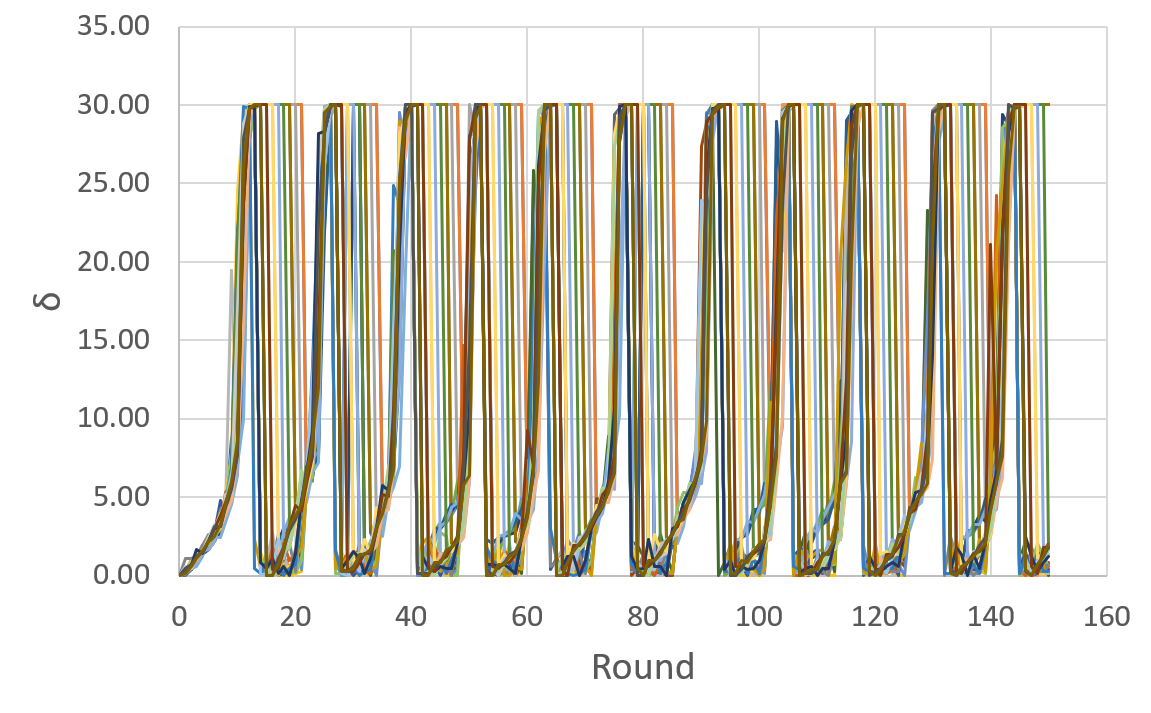}
    \captionsetup{width=.85\linewidth}
    \caption{$\delta$ in karate club network with $k=2$ invitations per round with lexicographic tie-breaking.}
    \label{fig_zkc_known_arbitrary}
\end{figure}
\begin{figure}[ht]
    \centering
    \includegraphics[width=.85\linewidth]{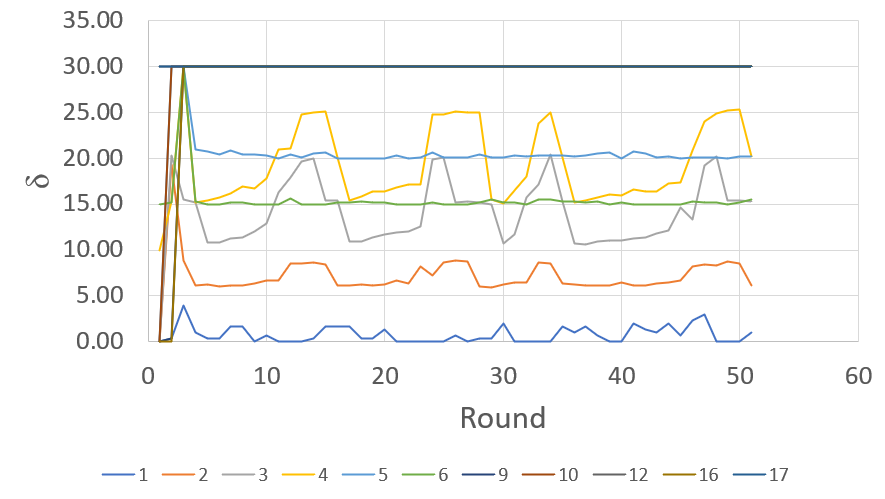}
    \captionsetup{width=.85\linewidth}
    \caption{Mean $\delta$ by vertex degree in karate club network with $k=2$ invitations per round with ties broken lexicographically, leaders constrained to $\delta=30$}
    \label{fig_zkc_leader}
\end{figure}
Interestingly, this behavior in which all agents cyclically return to low values of $\delta$ does not persist when certain agents are willing to take a ``leadership'' role and maintain a high value for $\delta$ at all times. In Figure \ref{fig_zkc_leader} we consider the same setting as in Figure \ref{fig_zkc_known_arbitrary}, except that we constrain $\delta_i=30$ for $i\in \{30,31,...,34\}$. As ties are broken lexicographically, these agents are losers in terms of tie-breaking, so their drop down to lower values of $\delta$ triggers a cascade of other agents doing the same. By keeping $\delta=30$ for these agents, many other agents never drop their $\delta$ values back to near 0. In addition to causing more stable $\delta$ values, curtailing the cycling also improves total utility in the network: the average utility per agent per round is 26.535 in the setting in Figure \ref{fig_zkc_known_arbitrary}, but improves to 26.884 in Figure \ref{fig_zkc_leader}.
A broader implication of this result is that careful subsidization of key agents can result in an overall increase in $\delta$ for the whole system. While we would like to devote more study to this fascinating topic in the future, the problem seed selection for trusting behavior, such as is considered in \cite{scata2016}, is beyond the scope of this paper.


We also consider a second, larger social network where $N=333$. This network is a subset of the ego-Facebook network curated by SNAP. We let $k_i=3$ invitations for each agent $i$ and keep all other conditions identical to our previous setting, breaking ties uniformly at random. Figure \ref{fig_facebook_known} displays the similar behavior which occurred in the karate club network in Figure \ref{fig_zkc_known_random} but aggregated as a mean due to the large number of players. This contrasts with the repeated S-curves seen in Figure \ref{fig_zkc_known_arbitrary}. From the raw data, we found that most agents stayed at $\delta_{max} =30$ with a small number playing very small values of $\delta \approx 0$. This divide, with approximately $\frac{5}{6}$ agents using a high $\delta$ while $\frac{1}{6}$ use a low $\delta$, leads to a  stable average $\delta \approx 25$ which is similar to the $\frac{9}{10}$ to $\frac{1}{10}$ split seen in the karate club network. The exact differences are attributable to network structure and how many agents had limited opportunities. Also, as in Figure \ref{fig_zkc_known_random}, a handful of agents drop briefly to low $\delta$ values at times when they judge it too competitive before returning to $\delta=30$. This suggests that the pattern of an S-curve increase in $\delta$ followed by a plateau may be characteristic of naturally occurring social networks engaged in partner selection.

\begin{figure}
    \centering
    \includegraphics[width=.85\linewidth]{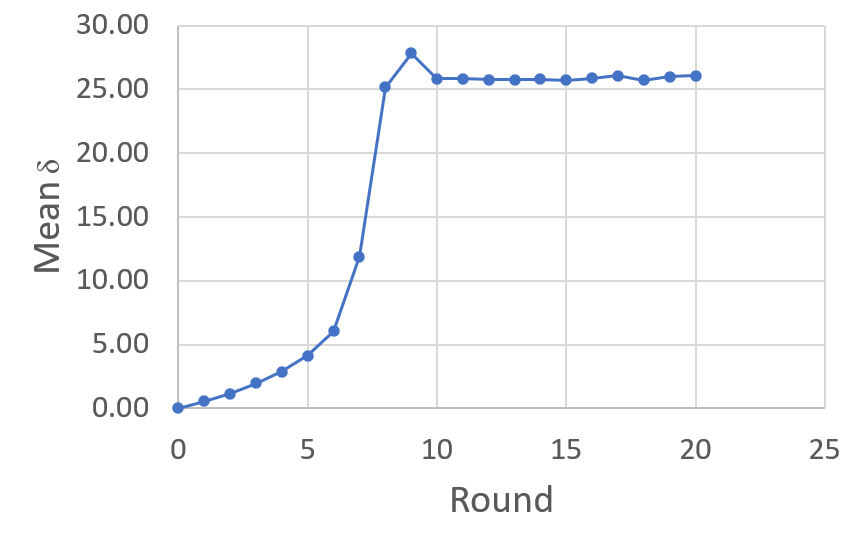}
    \caption{Mean $\delta$ in Facebook Ego Network}
    \label{fig_facebook_known}
\end{figure}

\subsection{Unknown $\delta$}

In Section \ref{subsec_network_known_delta} we considered networks in which each player $i$ knew the value of $\delta_{-i}$, and was immediately aware of any changes in it. Now we consider behavior when $\delta_{-i}$ is unknown, and agent $i$ estimates it using Algorithm 1. Ties are broken lexicographically in this setting. However, as games are drawn from a continuous distribution, the probability of a tie in perceived $\delta$ values is 0, so the results of random tie-breaking would be identical. We again consider the Karate Club Network and ego-Facebook Network. For easy comparison, all parameters will be identical to those in the previous subsection. The points in the curve in Figure \ref{fig_zkc_n0_2} are computed as the average of 1000 independent identically distributed games and provide a direct comparison to Figure \ref{fig_zkc_n0_1}. We see that the estimates in the unknown case approximate those in the known case. This suggests that as knowledge of $\delta_{-i}$ improves, players will approach the same behaviors they display when $\delta_{-i}$ is known to agent $i$. 

\begin{figure}
  \centering
  \includegraphics[width=.85\linewidth]{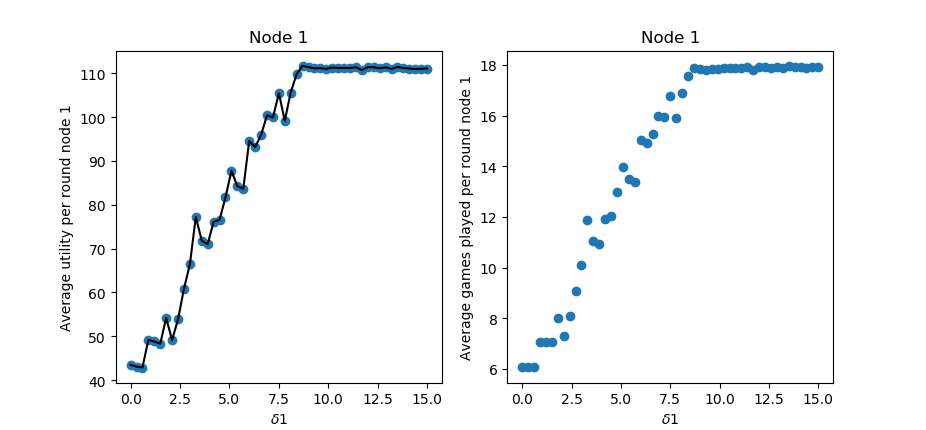}
  \caption{Utility for vertex 1, unknown $\delta_{-1}$}
  \label{fig_zkc_n0_2}
\end{figure}

We consider agents who update their values of $\delta$ in the two ways we discussed in Section \ref{sec_var_delta}: all agents either update their $\delta$ value probabilistically between rounds, or all agents work on an epoch system, updating their $\delta$ values every $t$ rounds after they and their neighbors have learned about each other. 

Figure \ref{fig_zkc_unknown_epoch} illustrates how delta shifts when players update according to an epoch system, every $t=100$ round. Agents move from $\delta= 0$ to $\delta=30$ in fewer value changes than in the known $\delta$ setting due to agents overestimating their competitors' trustworthiness and overcompensating in response. Since players only have estimates of each other's $\delta$, we see some gaming occur in Figure \ref{fig_zkc_known_arbitrary}: agents attempt to lower their $\delta$ value once they believe they've discouraged their competitors from raising $\delta$ values. Upon learning otherwise, they increase again. However, the time these agents spent with decreased $\delta$ values acts as a signal to their competitors, who now believe that they can lower their $\delta$ value in the same way the original agent did. This then acts as a new signal for the original agent leading to an oscillation between points in the range $[23,30]$ for $\delta$. However, agents' estimates are accurate enough that they rarely underestimate so badly as to drop below this range. 


\begin{figure}[ht]
\centering
    \centering
    \includegraphics[width=\linewidth]{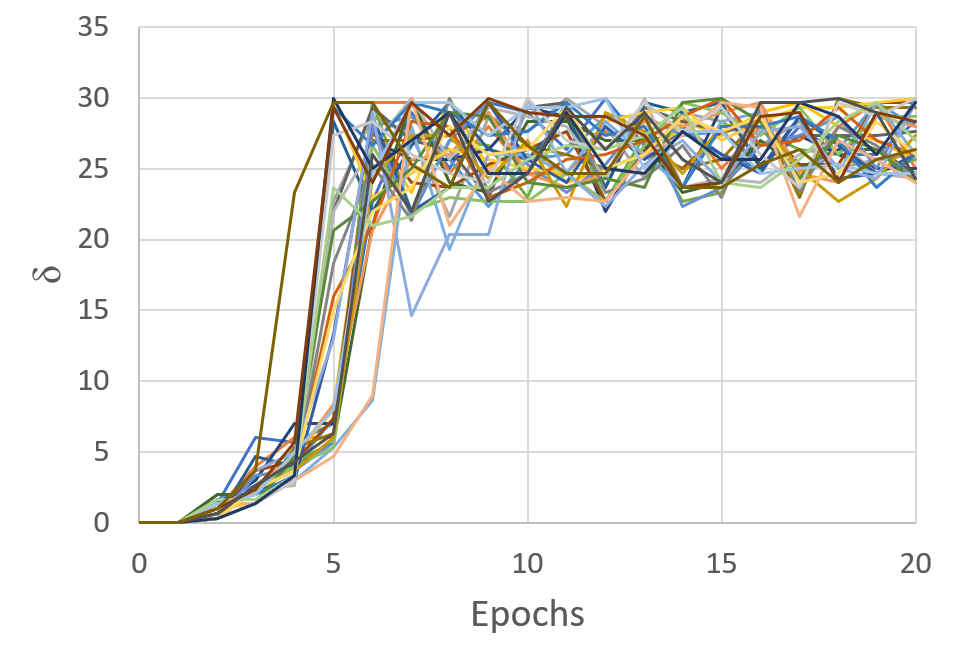}
    \captionsetup{width=.85\linewidth}
    \caption{$\delta$ in Karate Club Network, epoch = 100 rounds, $k=2$ invitations per round}
    \label{fig_zkc_unknown_epoch}
\end{figure}

\begin{figure}[ht]
    \centering
    \includegraphics[width=\linewidth]{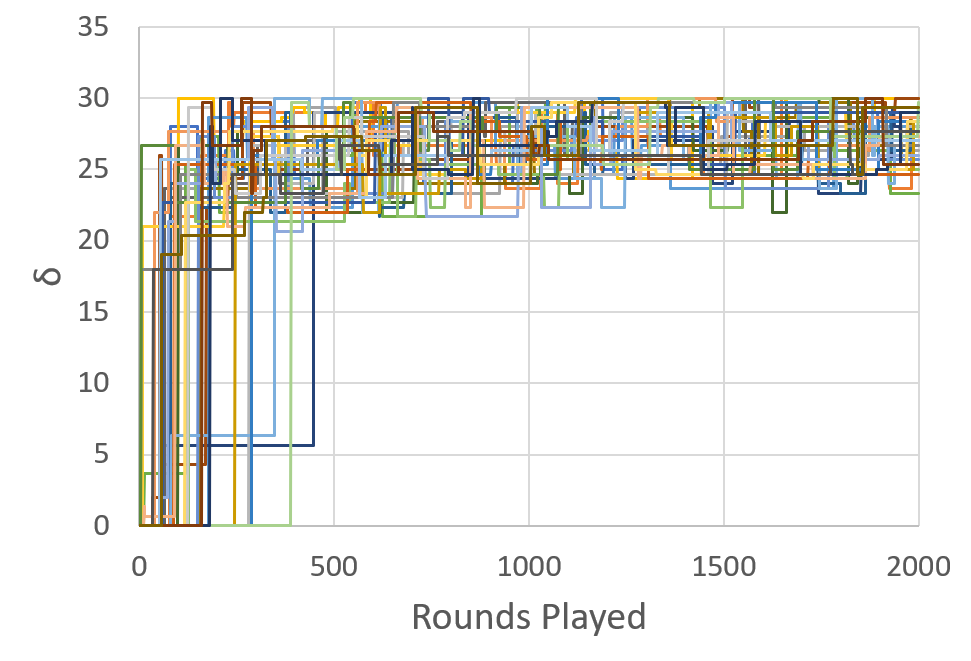}
    \captionsetup{width=.85\linewidth}
    \caption{$\delta$ in Karate Club Network with update probability $=\frac{1}{100}$, $k=2$ invitations per round}
    \label{fig_zkc_unknown_random}
\end{figure}

Figure \ref{fig_zkc_unknown_random} illustrates the same setting when players independently update their $\delta$ value with probability $\frac{1}{100}$ after each round. The behavior is very similar to that in the previous figure, again showing the attempts at gamesmanship where agents lower $\delta$ once they feel they have discouraged competition. They also oscillate within a similar range of $\delta$ values. It is worth noting that for both heuristic update schemes, lexicographic tie-breaking results in identical behavior to that pictured in Figures \ref{fig_zkc_unknown_epoch} and \ref{fig_zkc_unknown_random} rather than the sharply cyclical behavior seen in Figure \ref{fig_zkc_known_arbitrary}. This due to agents not having any ties to break, as the continuous payoff distribution leads to $\delta_{ji} \neq \delta_{li}$ for $l,j\in N_i^1$, even if $\delta_{j}=\delta_l$.


Finally, we consider the ego-Facebook network with $k_i=3$ invitations for each agent $i$. The mean $\delta$ value is displayed in Figure \ref{fig_facebook_unknown}, where players update under the epoch system. The comparison between Figures \ref{fig_facebook_known} and \ref{fig_facebook_unknown} appears identical to that of Figures \ref{fig_zkc_known_random} and \ref{fig_zkc_unknown_epoch}. As in Figure \ref{fig_facebook_known}, the network quickly settles to an average $\delta$ value of $\approx 25$. Player behavior is closer to that of Figure \ref{fig_zkc_unknown_epoch} than Figure \ref{fig_facebook_known}: the majority of players oscillating in the range [23,30], with some few consistently playing $\delta \approx 0$. The similarities are unsurprising as both are in the same setting of epoch updates with unknown $\delta$.

We see in both networks that competition between agents strongly pushes them to maintain increasing levels of trustworthiness; this stops only at the point they are no longer competitive. This increase in trustworthy behavior is healthy for the system as a whole both when $\delta$ is known and when it is not. {When $\delta=0$ for all players in the karate club network under the setting considered, the average utility per player per round is 22.792.} When $\delta$ is known, it is 27.578; when $\delta$ is unknown, and updates are on an epoch schedule, it is 27.540; and when $\delta$ is unknown and updates probabilistically it is 27.747, which represents increases of approximately 21.0\%, 20.8\%, and 21.7\%, respectively. Similarly, for the ego-Facebook network, {when $\delta=0$ the average utility per player per round is 30.905;} when $\delta$ is known, it is 37.750 and when $\delta$ is unknown and updates on an epoch schedule, it is 38.297 which represent increases of approximately 22.1\% and 23.9\%, respectively!

\begin{figure}
    \centering
    \includegraphics[width=0.9\linewidth]{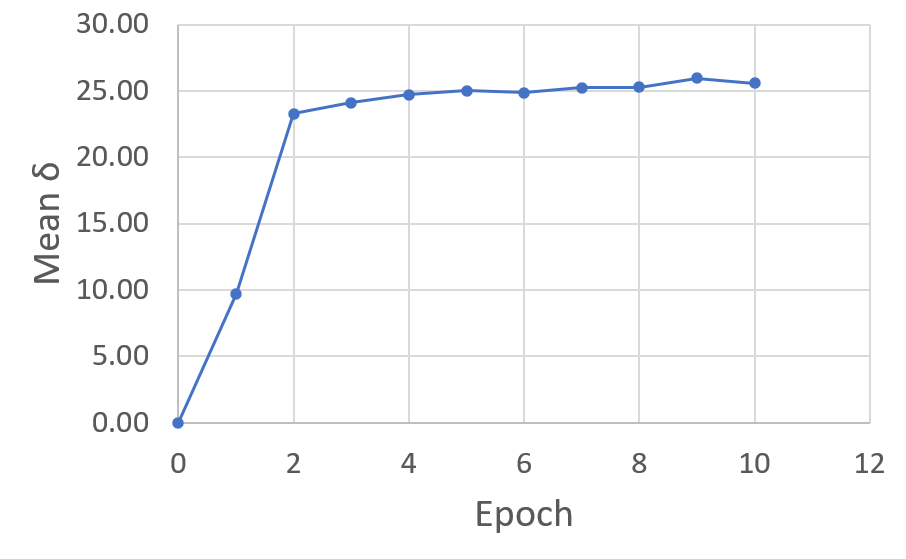}
    \caption{Mean $\delta$ in Facebook Ego Network, epoch = 100 rounds, $k=2$ invitations per round}
    \label{fig_facebook_unknown}
\end{figure}

\section{Discussion}\label{sec_discussion_sng}

In both social networks in the previous section, there was a naturally occurring increase in trustworthiness. However, we can construct ``artificial'' networks where this does not occur. Consider the 5-star graph in Figure \ref{fig_star}: the central vertex does not need to compete to receive invitations from the other vertices, as it is their only option. Conversely, for $k_i \leq 4$ for the central vertex $i$, the other vertices must compete to attract invitations and thus maintain a high level of trustworthiness. This competition vanishes for $k_i \geq 5$: each non-central vertex is invited, provided the expected utility for the central vertex is positive, causing all $\delta$ values to drop to 0. This dynamic is present in all networks: by restricting a resource and the number of invitations that may be issued, agents become more trustworthy as behaving otherwise causes a loss of access to the resource. Counter-intuitively, they become more selfish when the resource is abundant rather than less. This is apparent in the diad graph in Figure \ref{fig_diade}: each player knows that it is the other players' only option and need not compete for an invitation. However, a lack of competition does not mean that both do not stand to benefit from trustworthy behavior in the long term, merely that it is less beneficial for myopic agents in the short term. Traditional mechanisms for repeated games, such as grim trigger and discounted horizon analysis, have been shown to help players avoid such short-sighted behavior; we hope to incorporate these techniques into future versions of our system. 

\begin{figure}[ht]
    \centering
    \begin{minipage}{.5\linewidth}
      \centering
      \includegraphics[width=.45\linewidth]{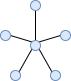}
      \caption{A 5-star graph}
      \label{fig_star}
    \end{minipage}%
    \begin{minipage}{.5\linewidth}
      \centering
      \includegraphics[width=.45\linewidth]{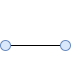}
      \caption{A diad graph}
      \label{fig_diade}
    \end{minipage}
\end{figure}

Nonetheless, as we noted previously, this behavior does not seem to exist in naturally occurring social networks. While there are ``singleton" vertices with only one neighbor who does not need to compete for their invitation, the desire of these neighbors to attract other invitations keeps the singleton vertices from being taken advantage of. We thus conjecture that this is why humans in social settings generally behave in a trustworthy manner even when they have the chance to take advantage of each other. It is only when the model is taken to extremes that we observe this behavior in social networks. 


In addition to the numerical and theoretical results discussed in this paper, there are still areas to explore within the model. Empirically, the interactions we examined were strictly nonnegative between to avoid the computational cost. If this is changed, we expect to see different behavior so that the interactions are only nonnegative \textit{in expectation}. This is a reasonable avenue of exploration, as sometimes partnerships may not work out despite positive expectations. {Another area of exploration} non-uniform interactions between agents, for example if agent $i$ provides a better partnership than agent $j$. How will $\delta_i$ and $\delta_j$ change in response? In this case, we expect to see ``diva''-like behavior, with $i$ displaying a low $\delta_i$ and still attracting many more partners than $j$. 


\section{Conclusion and Future Directions}
\label{sec_conclusion_sng}

In this paper, we considered pairwise interactions between agents arranged in a social network. We modeled how agents behaved when competing with each other for interaction opportunities, using the limited-trust equilibrium of \cite{murray2021} to define player interactions. The agents in the network evolved to display behaviors that mirrored various empirical findings on human interaction, particularly \cite{barclay2007} and \cite{fu2008} from the field of evolutionary biology. This is particularly notable as agents within this model were not forward thinking as real humans are: each updated to display a level of trustworthiness which was a best response to that currently displayed by other players. Yet despite their myopia, agents are motivated to behave in a trustworthy manner without foresight or historically-based mechanisms such as Grim Trigger or Tit-for-Tat.

In addition to the empirical results of the model, a thorough mathematical analysis was presented. Simple learning algorithms were derived so that agents can learn about their neighbors through interactions. A process for agents to update their trustworthiness metric $\delta$ based solely on their two-hop neighborhood within the network was  presented, and a Nash equilibrium was shown to exist in the $\delta$-selection metagame that agents engage in. Along with the algorithms and processes which define the evolution of agent behavior over time, mathematical bounds were developed for the expected time for agents to learn each others' $\delta$s to a given level of precision. These bounds can be viewed in our e-companion \cite{murray2021trust} in Appendix B. 


The model and experiments performed present many directions for future research. While agents need not utilize mechanisms such as Grim Trigger or Tit-for-Tat in the social networks we considered, we also saw that limited-trust is not enough to encourage trustworthy behavior in the networks in Figures \ref{fig_star} and \ref{fig_diade}. Therefore, {it will be interesting to incorporate} these concepts into the social network model proposed here. We believe that doing so will allow our model to more accurately represent interactions in smaller communities which run the risk of exhibiting the behaviors discussed, such as diads.

We are also interested in incorporating historical data into the decision-making process. Individuals who interact frequently are more likely to have an established relationship. They are thus less likely to switch partners if the utility increase is minor. Another way to incorporate historical data is  based on past utility earned. Agents who do well are able to increase the number of interactions they can initiate per round. Changes to network structure based on past behavior is a related topic: edges may wither if unused, or new edges may appear from an agent $i$ to an agent $j$ if the two have a mutual neighbor $l\neq i,j$ whom both interact with frequently. 

{Another potential topic for study is how agents behave when they may only accept a limited number of invitations. This brings with it a host of considerations: are invitations received in order, do they need to be accepted or declined sequentially, and do some agents decide who to invite only after receiving their invitations?}

One final area of interest is how agents behave with unknown network structures. Consider the case in which an agent $i$ is aware of its one-hop neighborhood $N_1^i$ but is uncertain of its two-hop neighborhood when updating $\delta_i$. We wish to develop methods for $i$ to estimate the structure of $N_2^i$ as this will remove agent $i$'s need to know $\delta'_j$ of its neighbor $j$.

We hope the wider research community is as excited by this work as we are.
To make it easier for interested researchers to explore this setting, we have made a portion of our code publicly available on GitHub at \cite{murray2020}. This repository is written in Python3 and contains the class file, used to generate a network and track player behavior, as well as demo scripts to show its use. We hope that making it available will enable further exploration by interested researchers.

%
%
%


%


  \section*{Acknowledgments}

{The work of J. Garg was supported by NSF Grant 1942321. T. Murray and R. Nagi were supported in part by ONR through the Program Management of Drs. D. Wagner and W. Adams under Award N000014-16-1-2245.}

\section*{Declaration of Interest}
{No interests to report.}




%
\bibliographystyle{apalike}
\bibliography{references}

%
\fi

\onecolumngrid
\newpage

\iftrue
\begin{center}
    \Huge{Limited-Trust in Social Network Games \\E-Companion}
\end{center}

\appendix
\section{Additional Notes: Rate of learning $\delta_{-i}$}\label{app_time_bounds}

In Section III we looked at how a player $i$ can learn $\delta_j$ through interactions with player $j$ as both a leader and a follower. In this appendix we provide bounds on how quickly $i$ can learn a fixed $\delta_j$. We focus on doing so from $i$'s perspective as a leader, rather than as a follower: although we saw in Section III-B that information about $\delta_j$ can be inferred when $j$ is the leader, $i$ cannot guarantee that $j$ will ever invite it to interact. Further, the likelihood of gaining this information changes with $\delta_{ij}$, player $j$'s estimate of $\delta_i$. We will assume that $j$ believes that $u_j(\theta_{ij}) \geq 0$, as otherwise it will never accept an invitation from $i$ to interact. We now consider how many games are needed for $i$ to determine $\delta_j$ to within an error of $\varepsilon >0$.

First, we note that in order to guarantee that $|\delta_j - \delta_{ji}| \leq \varepsilon$, where $\delta_{ji}$ is player $i$'s estimate of $\delta_j$, both of the following must be true:
\begin{enumerate}
    \item $\delta_j-\delta_{ji} \leq \varepsilon$ 
    \item $\delta_{ji}-\delta_j \leq \varepsilon$
\end{enumerate}
While trivial, this implies that if player $i$ can determine an interval $[\delta^l_{ji},\delta^u_{ji})$ such that $\delta_j\in [\delta^l_{ji},\delta^u_{ji})$ and $\delta^u_{ji}-\delta^l_{ji}<2\varepsilon$, 
then $\delta_j$ must be within $\varepsilon$ of at least one of the lower bound $\delta^l_{ji}$ or the upper bound $\delta^u_{ji}$. Consider the expected number of observations required for player $i$ {to determine} both an upper and lower bound within $\varepsilon$ of $\delta_{j}$. As this is a stricter condition than is required to estimate $\delta_j$ to within an error of $\varepsilon$, determining it serves as an upper bound on the expected number of observations to estimate $\delta_j$ to within $\varepsilon$.

Suppose that player 1 and player 2 are participating in a randomly generated $m\times n$ leader-follower game generated via $\mathcal{A}_{12},\mathcal{B}_{12}$ in which 
all entries of each matrix are generated iid to all other entries in the matrix.
Without loss of generality, we will assume that player 1 chooses to play $s_i$ as the leader. We want to determine the probability that the follower player 2 will choose a strategy $s_j$ which reveals a lower bound $\delta_{21}^l$ such that $\delta_2-\delta_{21}^l \leq \varepsilon$. Note that a leader-follower game after the leader has chosen its strategy is equivalent to a $1 \times n$ game, as in Table IV in Section III-A. 
In order for player 1 to observe a lower bound $\delta^l_{21} > 0$, it is necessary for player 2 to not pick the $s_j$ in Table IV that maximizes its utility. Without loss of generality, assume $b_1 \geq b_j$ $ \forall i\in[n]$, where $[n] = \{1,2,...,n\}$. If player 2 instead chooses to play $s_j$, then player 1 can deduce that $\delta_2 \geq b_1 - b_j$ and set $\delta^l_{21} = b_1-b_j$ using Algorithm 1. Based on the fact that all such values are randomly generated from known distributions $\mathcal{A}_{12},\mathcal{B}_{12}$, we can compute the probability of finding an acceptable lower bound $\delta_{21}^l$ by taking the cumulative distribution functions of the distributions over the relevant areas in $\mathcal{R}^2$.

\textbf{Lemma 2.}
\textit{For the game in Table IV, the probability of the game revealing $\delta^l_{21}$ such that $\delta_2-\delta^l_{21} \leq \varepsilon$ is 
\begin{equation*}
\resizebox{\linewidth}{!}{%
$P^{l} (\varepsilon) = n(n-1)\int_{-\infty}^{\infty}{\left(
\int_{-\infty}^{\infty}{\left(
\int_{b_1-\delta_2}^{b_1-\delta_2+\varepsilon}{\left(
\int_{a_1+b_1-b_2}^{\infty}{
P(E|s_1,s_2)f_a(a_2)da_2
}\right)f_b(b_2)db_2
}\right)f_a(a_1)da_1
}\right)f_b(b_1)db_1
}$%
}
\end{equation*}
where  
\begin{equation*}
\resizebox{\linewidth}{!}{%
$P(E|s_1,s_2) = \left(\int_{b_1-\delta}^{b_1}{\left(
\int_{\infty}^{a_2+b_2-b_j}{
f(a_j)da_j}
\right)f(b_j)db_j} +\int_{-\infty}^{\infty}{\left(
\int_{-\infty}^{b_1-\delta_2}{
f_b(b_j)db_j
}\right)f_a(a_j)da_j
}\right)^{n-2}.$}
\end{equation*}}

\begin{proof}
We begin by considering the probability of an event $E$ occurring where $E$ is the event that $b_1$ is the greedy choice for player 2, $s_2$ is chosen, and this results in a lower bound $\delta_{21}^l$ within $\varepsilon$ of the true $\delta_2$. In order for this to occur, the relationship between $s_1$ and $s_2$ must be as in Figure \ref{fig_lb1}, where both $s_2$ gives better net utility than $s_1$ and $b_2$ is within $\varepsilon$ of $\delta_2$ but not greater than it. The diagonal line which passes through $s_1$ is the set of all points which provide equal net utility to $s_1$: points above the line provide better net utility and points below it provide worse net utility. Additionally, given $s_1$ and $s_2$ it is necessary for another strategy $s_j$ where $j\neq 1,2$ to not both give better net utility than $s_2$ and have $b_1 - b_j \leq \delta_2$. If both of these conditions are met, $s_j$ will be picked instead of $s_2$. Therefore, for a given $s_1,s_2$ that fulfill the necessary conditions, $s_j$ must lie in the shaded region of Figure \ref{fig_lb2} for event $E$ to occur. Thus if $E_j$ is the event that $s_j$ is in an acceptable position
\begin{equation*}
\resizebox{\linewidth}{!}{%
$P(E_j|s_1,s_2) = \int_{b_1-\delta}^{b_1}{\left(
\int_{\infty}^{a_2+b_2-b_j}{
f(a_j)da_j}
\right)f(b_j)db_j} + \int_{-\infty}^{\infty}{\left(
\int_{-\infty}^{b_1-\delta_2}{
f_b(b_j)db_j
}\right)f_a(a_j)da_j
}.$}
\end{equation*}
This must be true for every $s_j$ with $j\neq 1,2$, so $P(E|s_1,s_2)=P(E_j|s_1,s_2)^{n-2}$. 
Next we note that for event $E$ to occur for a given $s_1$, we require that $s_2$ lie in the shaded area in Figure \ref{fig_lb1} which occurs with probability 
\begin{equation*}
P(E|s_1) = \int_{b_1-\delta_2}^{b_1-\delta_2+\varepsilon}{\left(
\int_{a_1+b_1-b_2}^{\infty}{
P(E|s_1,s_2)f_a(a_2)da_2
}\right)f_b(b_2)db_2
}.
\end{equation*}
This allows us to integrate over all values of $s_1$ to get that 
$$
P(E) = \int_{-\infty}^{\infty}{\left(
\int_{-\infty}^{\infty}{
P(E|s_1)f_a(a_1)da_1
}\right)f_b(b_1)db_1
}.
$$
Given that any $s_i$ could be the greedy response and any $s_j$ for $j\neq i$ could provide the bound upper bound with equal probability, but each of these events is mutually exclusive, we finally get that $P^l(\varepsilon)=n(n-1)P(E)$, which completes the proof.
\end{proof}

\begin{figure}[ht]
\centering
\begin{minipage}{.45\linewidth}
  \centering
  \includegraphics[width=.8\linewidth]{Prob_delta_learning_1}
  \caption[16]{Area for $s_2$ given $s_1$ to establish lower bound}
  \label{fig_lb1}
\end{minipage}%
\begin{minipage}{.45\linewidth}
  \centering
  \includegraphics[width=.8\linewidth]{Prob_delta_learning_2.png}
  \caption{Area for $s_i$ given $s_1,s_2$ to establish lower bound}
  \label{fig_lb2}
\end{minipage}
\end{figure}

The role of $n$ as indicated by Lemma 2 is somewhat counterintuitive. The Lemma shows that the probability of the first player establishing an acceptable lower bound $\delta^l_{21}$ on $\delta_2$ goes to 0 as the second player's number of strategies $n\rightarrow \infty$: if the second player is presented with more choices, shouldn't they have a higher probability of getting one which allows them to precisely play near $\delta_2$ and maximize utility? This may be the case in smaller values of $n$, but it does not occur in general. 

Suppose $f_a=f_b=U[0,1]$, the uniform distribution. For high values of $n$, when the first player chooses strategy $s_i$ the second player will have a greedy best response $s_j = G_2(s_i)$ with $b_{ij}\approx 1$ with high probability, as for a high number of independent samples of $U[0,1]$, the expected value of the maximum sample goes to 1. Also due to the high value of $n$, there will be another response $s_l$ with $b_{il} < b_{ij}$ but $b_{il}\approx a_{il} \approx 1$ with high probability due to the same independent sampling. Consider this game with $n=1000$ and suppose the first player selects strategy $s_i$. The second player's strategy set now resembles Figure \ref{fig_random_1m_game}, which is a randomly generated row of the $m\times 1000$ game matrix. In the figure, a $\delta_2$ of approximately $0.02$  appears to be sufficient for the socially optimal strategy near the top right corner to be played. This means that if, for example, $\delta_2=0.2$ it will be nearly impossible to ever establish an acceptable lower bound within $\varepsilon=0.05$ of $\delta_2$; it is similarly difficult to establish any upper bound at all.

\begin{figure}
    \centering
    \includegraphics[width=0.5\linewidth]{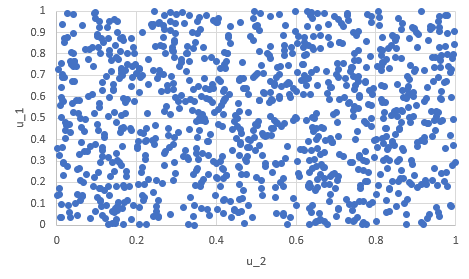}
    \caption{Example game from Table IV, $n=1000$ $f_a=f_b=U[0,1]$}
    \label{fig_random_1m_game}
\end{figure}

Next we derive the probability player 1 observes an upper bound $\delta_{21}^u < \infty$ on $\delta_2$ in the game in Table IV. For this to occur, player 2 must choose $s_i$ over $s_j$ because $b_i > b_j$ despite the fact that $a_i+b_i < a_j + b_j$. If, without loss of generality, $s_1$ is the greedy best response for player 2, this allows player 1 to conclude that $\delta_2 < b_1-b_j$ and set $\delta_{21}^u = b_1-b_j$. As we have already seen, it would also allow player 1 to set $\delta_{21}^l = b_1 - b_j$, thus showing that both upper and lower bounds can be observed within a single game. 

\textbf{Lemma 3.}\textit{
For the game in Table IV, the probability of the game revealing $\delta_{21}^u$ such that $\delta_{21}^l - \delta_2 \leq \varepsilon$ for $\varepsilon > 0$ is 
\begin{equation*}
\resizebox{\linewidth}{!}{$
P^u(\varepsilon) = n\int_{-\infty}^{\infty}{\left(
\int_{-\infty}^{\infty}{
\left( P(E\cap E_1|s_1) + (n-1)P(E\cap E_2|s_1)
\right)}f_a(a_1)da_1
\right)f_b(b_1)db_1}
$}
\end{equation*}
where
\begin{equation*}
P(E\cap E_1|s_1) = P(s_i\in A_1|s_1)^{n-1} - (P(s_i\in A_1|s_1)-P(s_i\in A_2|s_1))^{n-1},
\end{equation*}
\begin{equation*}
\resizebox{\linewidth}{!}{$
P(s_i\in A_1|s_1) = 
\int_{-\infty}^{b_1-\delta_2}{\left(
\int_{-\infty}^{\infty}{
f_a(a_i)da_i
}
\right)f_b(b_i)db_i}+
\int_{b_1-\delta_2}^{b_1}{\left(
\int_{-\infty}^{a_1+b_1-b_i}{
f_a(a_i)da_i
}
\right)f_b(b_i)db_i,}
$}
\end{equation*}
\begin{equation*}
P(s_i\in A_2|s_1) = \int_{b_1-\delta_2-\varepsilon}^{b_1-\delta_2}{\left(
\int_{a_1+b_1-b_i}^{\infty}{
f_a(a_i)da_i
}\right)f_b(b_i)db_i,
}
\end{equation*}
and
\begin{equation*}
\resizebox{\linewidth}{!}{$
P(E\cap E_2|s_1) = \int_{b_1-\delta_2}^{b_1}{\left(
\int_{a_1+b_1-b_2}^{\infty}{
\left(P(A_3|s_1,s_2)^{n-2}
-(P(A_3|s_1,s_2)-P(A_4|s_1,s_2))^{n-2}\right)f_a(a_2)da_2
}
\right)f_b(b_2)db_2}
$}
\end{equation*}
\begin{equation*}
\resizebox{\linewidth}{!}{
$P(s_i \in A_3|s_1,s_2) = \int_{b_1-\delta}^{b_1}{\left(
\int_{\infty}^{a_2+b_2-b_i}{
f(a_i)da_i}
\right)fb_idb_i}+\int_{-\infty}^{\infty}{\left(
\int_{-\infty}^{b_1-\delta_2}{
f_b(b_i)db_i
}\right)f_a(a_i)da_i
},$}
\end{equation*}
\begin{equation*}
P(s_i\in A_4|s_1,s_2) = \int_{b_1-\delta_2-\varepsilon}^{b_1-\delta_2}{\left(
\int_{a_2+b_2-b_i}^{\infty}{
f_a(a_i)da_i
}\right)f_b(b_i)db_i.
}
\end{equation*}}

\begin{proof}
Despite its complicated appearance, this lemma is simply the result of integrating probability distribution functions over $\mathcal{R}^2$. Consider the event $E$ that $s_1$ is the greedy choice for player 2. There are two possible outcomes: player 2 chooses $s_1$ (event $E_1$) or player 2 chooses another strategy (without loss of generality we assume that strategy to be $s_2$) with better net utility such that $b_1-b_2 \leq \delta_2$ (event $E_2$). We refer to these two outcomes as (respectively) case 1 and case 2.

Case 1 occurs if all strategies $s_i$ are in the shaded region in Figure \ref{fig_ub1}. As a function of $s_1$, a strategy $s_i$ is in this region $A_1$ with probability 
\begin{equation*}
\resizebox{\linewidth}{!}{$
P(s_i\in A_1|s_1) = 
\int_{-\infty}^{b_1-\delta_2}{\left(
\int_{-\infty}^{\infty}{
f_a(a_i)da_i
}
\right)f_b(b_i)db_i}+
\int_{b_1-\delta_2}^{b_1}{\left(
\int_{-\infty}^{a_1+b_1-b_i}{
f_a(a_i)da_i
}
\right)f_b(b_i)db_i}.
$}
\end{equation*}
Next, for fixed $s_1$, we need to consider the probability that case 1 occurs and at least one strategy is able to provide an upper bound on $\delta_2$. This occurs if all strategies lie within the shaded area in Figure \ref{fig_ub1} and at least one strategy lies in the shaded area $A_2$ in Figure \ref{fig_ub2}. The probability of a strategy lying in $A_2$ is 
$$
P(s_i\in A_2|s_1) = \int_{b_1-\delta_2-\varepsilon}^{b_1-\delta_2}{\left(
\int_{a_1+b_1-b_i}^{\infty}{
f_a(a_i)da_i
}\right)f_b(b_i)db_i,
}
$$
which means that case 1 occurs and an acceptable bound is established with probability $P(E\cap E_1|s_1) = P(s_i\in A_1|s_1)^{n-1} - (P(s_i\in A_1|s_1)-P(s_i\in A_2|s_1))^{n-1}$.

\begin{figure}[ht]
\centering
\begin{minipage}{.45\linewidth}
  \centering
  \includegraphics[width=.8\linewidth]{Prob_delta_learning_ub1.png}
  \caption{Area for $s_2$ given $s_1$ to establish upper bound, case 1}
  \label{fig_ub1}
\end{minipage}%
\begin{minipage}{.45\linewidth}
  \centering
  \includegraphics[width=.8\linewidth]{Prob_delta_learning_ub2.png}
  \caption{Area for $s_i$ given $s_1$ to establish upper bound, case 1}
  \label{fig_ub2}
\end{minipage}
\end{figure}

We now consider case 2, that another strategy is played, and assume without loss of generality that the strategy played is $s_2$. Case 2 requires that  occurs if all strategies $s_i$ are in the shaded region in Figure \ref{fig_ub3}, for $i\neq 1,2$. As a function of $s_1$ and $s_2$, a strategy $s_i$ is in this region with probability
\begin{equation*}
\resizebox{\linewidth}{!}{$
P(A_2|s_1,s_2) = 
\int_{-\infty}^{b_1-\delta_2}{\left(
\int_{-\infty}^{\infty}{
f_a(a_i)da_i
}
\right)f_b(b_i)db_i}
+
\int_{b_1-\delta_2}^{b_1}{\left(
\int_{-\infty}^{a_2+b_2-b_i}{
f_a(a_i)da_i
}
\right)f_b(b_i)db_i}.
$}
\end{equation*}
Next, we want to consider the probability that case 2 occurs and an acceptable upper bound is observed. For a fixed $s_1,s_2$, $s_2$ is played if all other $s_i$ lie in the shaded region $A_3$ in Figure \ref{fig_lb2}, the probability of which we know from the proof of Lemma 2 is
\begin{equation*}
\resizebox{\linewidth}{!}{$
P(s_i \in A_3|s_1,s_2) = \int_{b_1-\delta}^{b_1}{\left(
\int_{\infty}^{a_2+b_2-b_i}{
f(a_i)da_i}
\right)fb_idb_i}+\int_{-\infty}^{\infty}{\left(
\int_{-\infty}^{b_1-\delta_2}{
f_b(b_i)db_i
}\right)f_a(a_i)da_i
}.
$}
\end{equation*}
In order for one of these $s_i$ to provide an acceptable upper bound, at least one of them must be in the shaded region $A_4$ in Figure \ref{fig_ub3}, providing better net utility than $s_2$ and providing utility $b_i$ such that $b_1-\delta_2-\varepsilon \leq b_i \leq b_1 - \delta_2$. The probability of an $s_i$ being in this region is
\begin{equation*}
P(s_i\in A_4|s_1,s_2) = \int_{b_1-\delta_2-\varepsilon}^{b_1-\delta_2}{\left(
\int_{a_2+b_2-b_i}^{\infty}{
f_a(a_i)da_i
}\right)f_b(b_i)db_i,
}
\end{equation*}
which means that case 2 occurs and establishes an acceptable upper bound with probability $P(A_3|s_1,s_2)^{n-2}-(P(A_3|s_1,s_2)-P(A_4|s_1,s_2))^{n-2}$. Note that this value is zero for $n=2$, indicating that in a two strategy game an upper bound cannot be established if case 2 occurs. Therefore, as a function of $s_1$ the probability case 2 occurs and an acceptable upper bound is established is 
\begin{equation*}
\resizebox{\linewidth}{!}{$
P(E\cap E_2|s_1) = \int_{b_1-\delta_2}^{b_1}{\left(
\int_{a_1+b_1-b_2}^{\infty}{
\left(P(A_3|s_1,s_2)^{n-2}-(P(A_3|s_1,s_2)-P(A_4|s_1,s_2))^{n-2}\right)f_a(a_2)da_2
}
\right)f_b(b_2)db_2}.
$}
\end{equation*}
Finally, this gives us that the probability of establishing an acceptable upper bound is 
\begin{equation*}
\resizebox{\linewidth}{!}{$
P^u(\varepsilon) = n\int_{-\infty}^{\infty}{\left(
\int_{-\infty}^{\infty}{
\left( P(E\cap E_1|s_1) + (n-1)P(E\cap E_2|s_1)
\right)}f_a(a_1)da_1
\right)f_b(b_1)db_1}.
$}
\end{equation*}
\end{proof}

\begin{figure}
    \centering
    \includegraphics[width=0.5\linewidth]{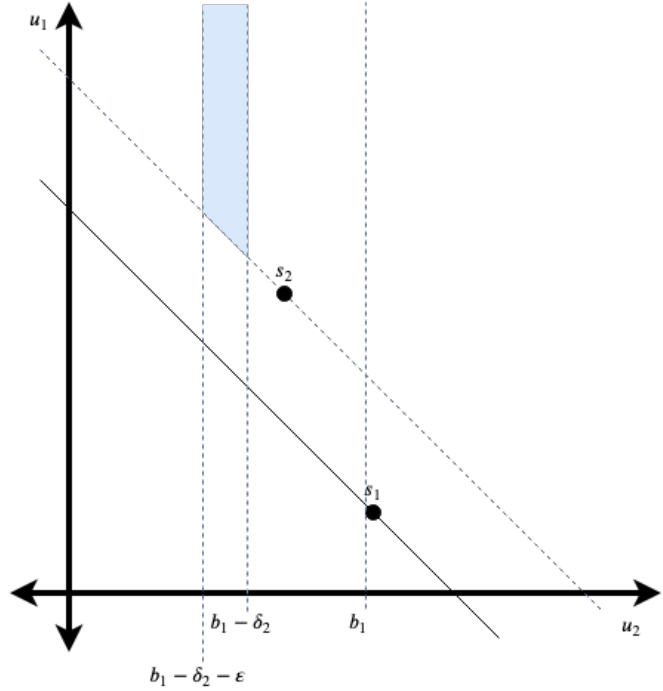}
    \caption{Area for $s_i$ given $s_1,s_2$ to establish upper bound, case 2}
    \label{fig_ub3}
\end{figure}

As we noted before the lemma, when $n\geq3$, it is possible for both upper and lower bounds to be established in a single game. The derivation of Lemma 3 allows us to do so directly through our derivation of $P(E\cap E_2|s_1)$, which was the probability that the player 2 did not select the greedy best response but still revealed an upper bound within $\varepsilon$ of $\delta_2$. 

\textbf{Lemma 4.}\textit{
The probability of player 1 observing both an upper and lower bound within $\varepsilon$ of $\delta_2$ is given by 
\begin{equation*}
Q(\varepsilon) = n(n-1)\int_{-\infty}^{\infty}{\left(
\int_{-\infty}^{\infty}{
\left(P'(E\cap E_2|s_1)
\right)}f_a(a_1)da_1
\right)f_b(b_1)db_1}.
\end{equation*}
where
\begin{equation*}
\resizebox{\linewidth}{!}{$
P'(E\cap E_2|s_1) = 
\int_{b_1-\delta_2}^{b_1-\delta_2+\varepsilon}{\left(
\int_{a_1+b_1-b_2}^{\infty}{
\left(P(A_3|s_1,s_2)^{n-2}-(P(A_3|s_1,s_2)-P(A_4|s_1,s_2))^{n-2}\right)f_a(a_2)da_2
}
\right)f_b(b_2)db_2}.
$}
\end{equation*}
}

\textbf{Theorem 3.}
\textit{
For the game in Table IV the expected number of games for player 1 to get an estimate $\delta_{21}$ of $\delta_2$ guaranteed to have error most $\varepsilon$ from the true value is less than or equal to 
\begin{align*}
E[\mathcal{T}(\varepsilon)] \leq T(\varepsilon) = \frac{1}{P^u(\varepsilon)+P^l(\varepsilon) - Q(\varepsilon)}
\left( 1+ \frac{P^u(\varepsilon)-Q(\varepsilon)}{P^l(\varepsilon)} + \frac{P^l(\varepsilon)-Q(\varepsilon)}{P^u(\varepsilon)}\right).
\end{align*}}
\begin{proof}
We note that the only way for player 1 to make an estimate of $\delta_2$ which is guaranteed to be within at most $\varepsilon$ of the true value is to find an interval $[\delta_{21}^l,\delta_{21}^u)$ such that $\delta_{21}^u-\delta_{21}^l \leq 2\varepsilon$. Next, we note that in order to obtain this interval at least one of $\delta_{21}^l$ and $\delta_{21}^u$ must be within $\varepsilon$ of $\delta_2$. Therefore we can find an upper bound on the expected number of games required by finding the expected number of games required to observe both upper and lower bounds within $\varepsilon$ of $\delta_2$.

Beginning from the first game, the expected time to discover one or more bounds is $\frac{1}{P^u(\varepsilon)+P^l(\varepsilon) - Q(\varepsilon)}$. This event can occur in any of three ways: an acceptable upper bound is found, an acceptable lower bound is found, or both are found. The probabilities of these events are proportional to $P^u(\varepsilon)-Q(\varepsilon)$, $P^l(\varepsilon)-Q(\varepsilon)$, and $Q(\varepsilon)$ respectively. The expected total time to discover both is therefore 
\begin{equation*}
\resizebox{.95\linewidth}{!}{$
\frac{1}{P^u(\varepsilon)+P^l(\varepsilon) - Q(\varepsilon)} + \frac{P^u(\varepsilon)-Q(\varepsilon)}{P^u(\varepsilon)+P^l(\varepsilon) - Q(\varepsilon)}\frac{1}{P^l(\varepsilon)} + \frac{P^l(\varepsilon)-Q(\varepsilon)}{P^u(\varepsilon)+P^l(\varepsilon) - Q(\varepsilon)}\frac{1}{P^u(\varepsilon)}
$}
\end{equation*}
\begin{equation*}
=\frac{1}{P^u(\varepsilon)+P^l(\varepsilon) - Q(\varepsilon)}\left( 1+ \frac{P^u(\varepsilon)-Q(\varepsilon)}{P^l(\varepsilon)} + \frac{P^l(\varepsilon)-Q(\varepsilon)}{P^u(\varepsilon)}\right).
\end{equation*}
\end{proof}

With the completion of Theorem 3, we now have an upper bound on the expected time to discover both an upper and lower bound within $\varepsilon$ of the game in Table IV. However, we are interested in that $1\times n$ game because it is equivalent to an $m\times n$ game in which the leader has made its decision and is waiting for the follower. Now we return to our original goal, estimating the expected time for the leader in an $m \times n$ game to estimate the follower's $\delta$ value to within $\varepsilon$. 

If player 1's goal is to learn $\delta_2$, rather than to play according to $\delta_1$, the challenge it faces is deciding which of its $m$ strategies to select each game. 
Note also that in pursuing this behavior player 1 has decided to focus purely on exploration and has abandoned any interest in its own utility, which means that it is impossible for its neighbors to learn anything about $\delta_1$ based on its actions as a leader.
Theorem 3 implies that the expected time for strategies which reveal an upper and lower bound within $\varepsilon$ of $\delta_2$ to occur in a randomly generated game is $\frac{1}{m}T(\varepsilon)$, which provides an upper bound on the expected time of player 1's theoretical optimum strategy. 
Similarly, it implies that if the leader chooses its strategy $s_i$ randomly, it has an upper bound of $T(\varepsilon)$ on the expected time to achieve this estimate. However, choosing in such a random manner ignores what the leader has already learned from the follower: while it may not have $\delta_{21}^u-\delta_{21}^l \leq 2\varepsilon$, it will still over time gain some $[\delta_{21}^l,\delta_{21}^u)$ interval in which $\delta_2$ is located through the use of Algorithm 1. 
This allows the leader to determine whether or not it will refine its knowledge of $\delta_2$ by selecting $s_i$. Assume without loss of generality that $s_1$ is the follower's greedy best response to the leader selecting $s_i$: if the Pareto frontier of player 2's responses to $s_i$ contains an $s_j$ such that $\delta_{21}^l < b_1-b_j < \delta_{21}^u$, selecting $s_i$ will result in the player 1 refining at least one of the bounds on $\delta_2$. 
If the frontier does not contain such an $s_j$, there will be no improvement in the bounds by selecting $s_i$.
Figure 1 in Section III-A gives an example of this with player 2's potential response $s_3$: if player 2 responds with $s_3$ then the lower bound $\delta_{21}^l$ will be raised, and if it responds with $s_2$ the upper bound $\delta_{21}^u$ will be lowered.

\section{Derivations}

\subsection{Proof of Theorem 2\label{app_sng_mne_proof}}
{\textbf{Theorem 2} }
\textit{Consider a social network $G$ with uniform interactions $\mathcal{A}_{ij} = \mathcal{B}_{lh}$ for all $l,i,j,h\in [N]$ such that all payoffs are nonnegative and for agent $i$ with neighbors $j,l \neq i$, $\delta_j \leq \delta_l \rightarrow u_i(\theta_{ij}) \leq u_i(\theta_{il})$. Then the $N$-player metagame with closed interval strategy space $\Delta_i\subseteq \mathcal{R}$ and utility function $\mathbf{u}_i$ for $i\in[N]$ possesses a mixed Nash equilibrium.}

For simplicity of notation, in the this proof we will use $\mathbf{u}_i(\delta_i,\delta_{-i})$ instead of $\mathbf{u}_i(\theta_{-i},\delta_i)$. Also, before formally starting the proof we first state the following definition and result: 

\begin{definition}[Weak lower semi-continuity]
$U_i(\sigma_i,\sigma_{-i})$ is weakly lower semi-continuous in $\sigma_i$ if $\forall \sigma_i' \in \Sigma_i^{**}(i)$, $\exists \lambda \in [0,1]$ such that $\forall \sigma_{-i}\in \Sigma_{-i}^{**}(\sigma_i')$, 
\begin{equation*}
\lambda \lim \inf_{\sigma_i\stackrel{-}\rightarrow \sigma_i'}U_i(\sigma_i,\sigma_{-i}) + (1-\lambda)\lim \inf_{\sigma_i\stackrel{+}\rightarrow \sigma_i'}U_i(\sigma_i,\sigma_{-i}) \geq U_i(\sigma_i',\sigma_{-i}).
\end{equation*}
\end{definition}

\textbf{Theorem 4} \textbf{(Dasgupta \& Maskin, 1986)}\label{thm_dasgupta}
\textit{
Let $\Sigma_i\subseteq\mathcal{R}$ for $i\in [N]$ be a closed interval and let $U_i:\Sigma \rightarrow \mathcal{R}$ be continuous except on a subset $\Sigma^{**}(i) \subseteq\Sigma^*(i)$. If $\sum_{i=1}^NU_i(\sigma)$ is upper semi-continuous and $U_i(\sigma_i,\sigma_{-i})$ is bounded and weakly lower semi-continuous in $\sigma_i$ then the $N$-player game with closed interval strategy space $\Sigma_i\subseteq \mathcal{R}$ and utility function $U_i$ for $i\in[N]$ possesses a mixed-strategy Nash equilibrium.}

We are now ready to begin the proof.

\begin{proof}[Proof of Theorem 2]
This proof will make use of Theorem 4. As such, we need to show three things: a set $\Delta^*(i) = \{(\delta_1,...,\delta_N)\in\Delta|\exists j\neq i, \exists d, 1\leq d\leq D(i) \text{ such that }\delta_j=f_{ij}^d(\delta_i)\}$ which appropriately captures discontinuities in $\mathbf{u}_i(\delta)$, that $\mathbf{u}_i(\delta_i,\delta_{-i})$ is bounded and weakly lower semi-continuous in $\delta_i$, and that $\sum_{i\in [N]}\mathbf{u}_i(\delta)$ is upper semi-continuous.

We begin by determining the set $\Delta^*(i)$. We earlier noted that $\mathbf{u}_i(\delta_i,\delta_{-i})$ has at most $|N_i^1|$ discontinuities in $\delta_i$ when all interactions are uniform and nonnegative and $\delta_j \leq \delta_l \rightarrow u_i(\theta_{ij}) \leq u_i(\theta_{il})$, and that all of them occur in $w_i(\delta_i,\delta_{-i})$. $w_i$ is the utility gained by agent $i$ receiving invitations. Therefore, if $i$ receives an invitation from agent $j$ when $\delta_i$ changes, another agent $l$ that previously received an invitation from $j$ now loses it. Given that $\delta_i \leq \delta_l \rightarrow u_j(\theta_{ji}) \leq u_j(\theta_{jl})$, this discontinuity occurs when $\delta_i=\delta_l$. Therefore, we can let $D(i)=1$, $f_{ij}^d(x)=x$, the identity function, and $$\Delta^*(i) = \{(\delta_1,...,\delta_N)\in\Delta|\exists j\neq i, \text{ such that }\delta_j=\delta_i\}$$
will contain all potential discontinuities.

Next we will show that $\mathbf{u}_i(\delta_i,\delta_{-i})$ is bounded and weakly lower semi-continuous in $\delta_i$. From Theorem 1 we observe that $u_i(\theta_{ji})$ and $u_i(\theta_{ij})$ are both continuous in $\delta_i$. Therefore all discontinuities in $w_i(\delta_i,\delta_{-i})$ occur due to $K^2_i$ changing. Consider one such discontinuity point $\delta'$: We know that there is an agent $l$ such that $\delta_i' = \delta_l'$ and there is another agent $j$ which $i$ and $l$ both neighbor who is now indifferent between sending an invitation to agent $i$ and agent $l$. Let $K^2_i$ be the set of invitations $i$ receives from other agents $h\neq j$.
\begin{align*}
\lim_{\delta_i\stackrel{-}\rightarrow \delta_i'} \inf \mathbf{u}_i(\delta_i,\delta_{-i}') =& \lim_{\delta_i\stackrel{-}\rightarrow \delta_i'} \inf v_i(\delta_i,\delta_{-i}') + \sum_{h\in K^2_i}u_i(\delta_h',\delta_i)= v_i(\delta_i',\delta_{-i}') \\
&+ \sum_{h\in K^2_i}u_i(\delta_h',\delta_i')\\
\lim_{\delta_i\stackrel{+}\rightarrow \delta_i'} \inf\mathbf{u}_i(\delta_i,\delta_{-i}') =& \lim_{\delta_i\stackrel{+}\rightarrow \delta_i'} \inf v_i(\delta_i,\delta_{-i}')+\sum_{h\in K^2_i}u_i(\delta_h',\delta_i) + u_i(\delta_j',\delta_i)\\
 =& v_i(\delta_i',\delta_{-i}')+\sum_{h\in K^2_i}u_i(\delta_h',\delta_i') + u_i(\delta_j',\delta_i').
\end{align*}
Therefore $\mathbf{u}_i(\delta_i,\delta_{-i})$ is weakly lower semi-continuous by selecting $\lambda=1$.

Finally, we prove $\mathbf{u}(\delta) = \sum_{i\in[N]}\mathbf{u}_i(\delta)$ is upper semi-continuous. Actually, we will prove the stronger condition that it is continuous. Let $\delta'$ be a point of discontinuity for some $\mathbf{u}_i$. As we have discussed, this occurs due to some other agent $j$ shifting on whether or not to issue an invitation to agent $i$ or another of its neighbors agent $l$ (the case where it is actually a set of neighbors $L$ follows naturally). As a consequence, $\delta'$ is also a point of discontinuity for $\mathbf{u}_l$. However, it is not a point of discontinuity for $\mathbf{u}_j$. Theorem 1 shows $u_j(\theta_{ji})$ and $u_j(\theta_{jl})$ are continuous in $\delta_i$ and $\delta_l$, respectively. Therefore $w_j(\delta)$ is continuous in both. While the set of invites $j$ issues, $K_j^1$, is subject to change, $v_j(\delta)$ the sum of the $k_j$ highest values in the set of functions $\{u_j(\theta_{jh})\}_{h\in N_j^1}$, all of which are continuous in $\delta$ and is therefore continuous as well. This means $\mathbf{u}_j(\delta)$ is continuous, leaving us to show that while $\mathbf{u}_i$ and $\mathbf{u}_l$ are discontinuous at $\delta'$, the sum of the two functions is not. Again, we only need concern ourselves with showing $w_i+w_l$ is continuous.
\begin{align*}
\lim_{\delta_i \stackrel{-}\rightarrow \delta_i'} w_i(\delta_i,\delta_{-i}') + w_l(\delta_i,\delta_{-i}') &= \lim_{\delta_i \stackrel{-}\rightarrow \delta_i'}w_i(\delta_i,\delta_{-i}') + \lim_{\delta_i \stackrel{-}\rightarrow \delta_i'}w_l(\delta_i,\delta_{-i}')\\
&= \sum_{h\in K^1_i\setminus j}u_i(\delta_h',\delta_i) + \sum_{h\in K^1_l\setminus j}u_l(\delta_h',\delta_l') + u_l(\delta_j',\delta_l')\\
\lim_{\delta_i \stackrel{+}\rightarrow \delta_i'} w_i(\delta_i,\delta_{-i}') + w_l(\delta_i,\delta_{-i}') &= \lim_{\delta_i \stackrel{+}\rightarrow \delta_i'}w_i(\delta_i,\delta_{-i}') + \lim_{\delta_i \stackrel{+}\rightarrow \delta_i'}w_l(\delta_i,\delta_{-i}')\\
&= \sum_{h\in K^1_i\setminus j}u_i(\delta_h',\delta_i) + \sum_{h\in K^1_l\setminus j}u_l(\delta_h',\delta_l') + u_i(\delta_j',\delta_i)\\
\end{align*}

The fact that $u_l(\delta_j',\delta_l') = \lim_{\delta_i\rightarrow \delta_i'} u_i(\delta_j',\delta_i)$ implies that the left- and right-side limits are equal, and that they are equal to the actual value
\begin{equation*}
\resizebox{\linewidth}{!}{$
w_i(\delta_i',\delta_{-i}') + w_l(\delta_i',\delta_{-i}') = \sum_{h\in K^1_i\setminus j}u_i(\delta_h',\delta_i') + \sum_{h\in K^1_l\setminus j}u_l(\delta_h',\delta_l') + \frac{1}{2}u_i(\delta_j',\delta_i') + \frac{1}{2}u_l(\delta_j',\delta_l')
$}
\end{equation*}
at $\delta'$. This implies that $\mathbf{u}_i+\mathbf{u}_l$ is continuous at $\delta'$ and hence that $u(\delta)$ is continuous. Therefore, the metagame of selecting $\delta_i\in \Delta_i$ possesses a mixed Nash equilibrium in this setting. 

\end{proof}

\section{{Additional Social Network Validation}}\label{app_additional_numeric}

In Section 5, we presented a numerical analysis of the LTE in the Zachary's Karate Club and the Facebook-ego social networks. To reinforce the results of this analysis, we have repeated it on five additional networks from the Konect Project \cite{konect}. Four of these networks are human social networks, and the fifth we have constructed from an event-based human contact network; more details will be given on this later. By repeating the numerical analysis from Section 5, we find that the behaviors we observed in Zachary's Karate Club and the Facebook-ego network are echoed in these networks and provide further experimental validation for our model.

For all testing in this appendix, the following parameters are used:
\begin{itemize}
    \item $\mathcal{A}_{ij} = \mathcal{B}_{hl}$ for all $h,i,j,l\in [N]$. $A_{ij} \sim \mathcal{A}_{ij}$ is a $2\times2$ matrix with entries generated iid from the exponential distribution with $\lambda=4$.
    \item $\delta_i\in [0,30]$ $\forall i\in [N]$.
    \item For tests in the epoch update system, epochs last 100 rounds.
    \item For tests in the probabilistic update system, each agent updates its $\delta$ with independent probability $\frac{1}{100}$.
    \item Unless computational concerns are present, each trial runs for either 50 epochs or 3000 rounds depending on which update scheme is being used.
\end{itemize}

\subsection{Highland Tribes}

The first social network we will consider is the Gahuku-Gama tribal alliance network from the Eastern Central Highlands of New Guinea, from \cite{kenneth54}. The network consists of 16 vertices, each representing a different tribe. Edges between vertices are undirected and signed, with a sign of 1 denoting allies and a sign of -1 denoting enemies. Because we are interested in cooperative activities within social networks, we removed all -1 signed edges from the network prior to performing our numerical analysis. The resulting network contains 16 vertices and 29 edges. Interestingly, though perhaps not unexpectedly for an alliance network, this resulted in two entirely separate components within the network with one 4-clique and one less densely connected 12-vertex component with 23 edges. This is the only network we analyzed with multiple components.

\begin{figure}[ht]
    \centering
    \begin{minipage}{.5\linewidth}
      \centering
      \captionsetup{width=.85\linewidth}
      \includegraphics[width=\linewidth]{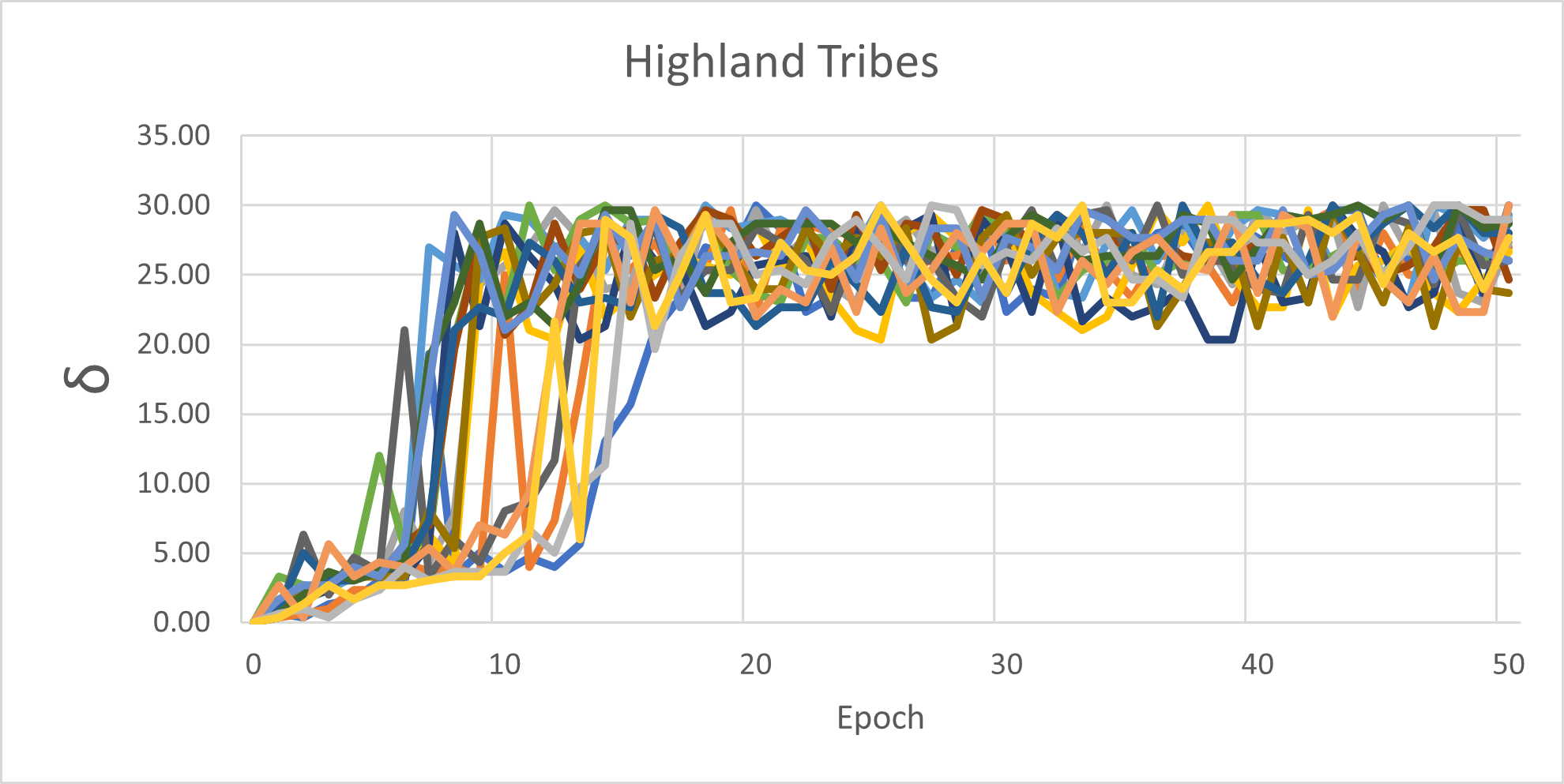}
      \caption{$\delta$ in Gahuku-Gama Tribal Alliance Network, $\delta$ unknown, epoch = 100}
      \label{fig_ht_epoch_unknown}
    \end{minipage}%
    \begin{minipage}{.5\linewidth}
      \centering
      \captionsetup{width=.85\linewidth}
      \includegraphics[width=\linewidth]{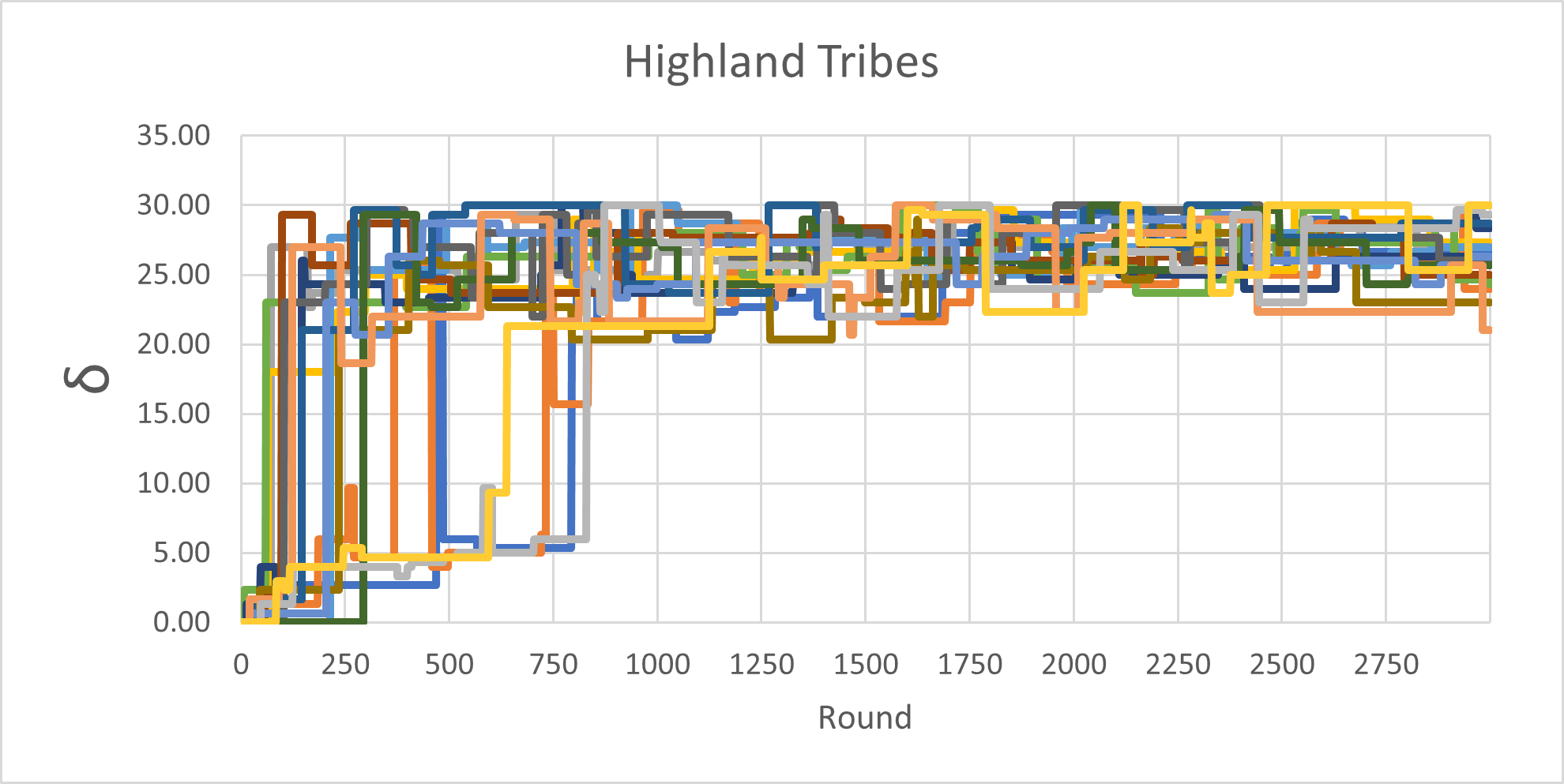}
      \caption{$\delta$ in Gahuku-Gama Tribal Alliance Network, $\delta$ unknown, update probability = $\frac{1}{100}$}
      \label{fig_ht_prob_unknown}
    \end{minipage}
\end{figure}

\begin{figure}[ht]
    \centering
    \begin{minipage}{.5\linewidth}
      \centering
      \captionsetup{width=.85\linewidth}
      \includegraphics[width=\linewidth]{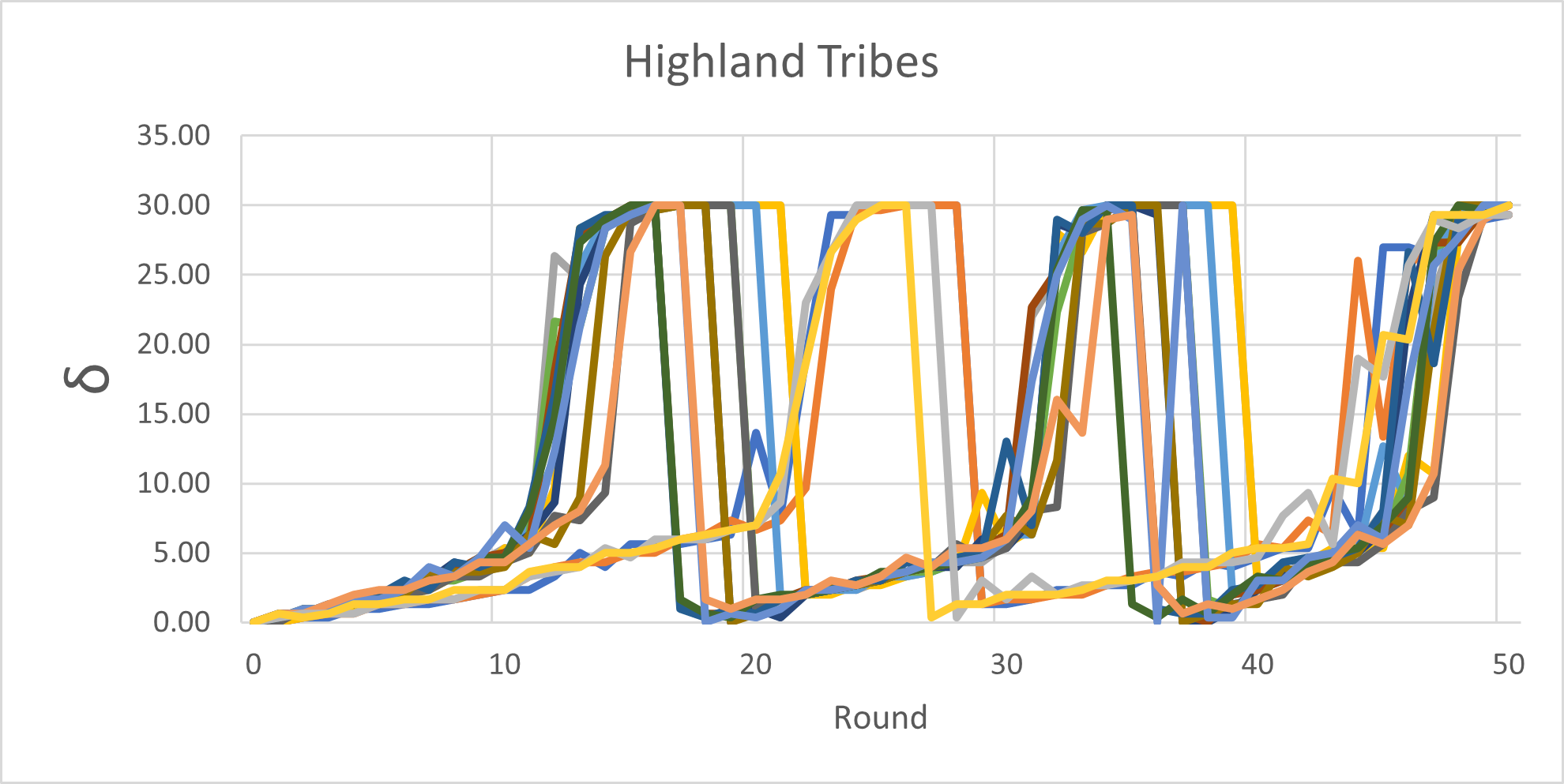}
      \caption{$\delta$ in Gahuku-Gama Tribal Alliance Network, $\delta$ known, lexicographic tie-breaking}
      \label{fig_ht_epoch_known_lex}
    \end{minipage}%
    \begin{minipage}{.5\linewidth}
      \centering
      \captionsetup{width=.85\linewidth}
      \includegraphics[width=\linewidth]{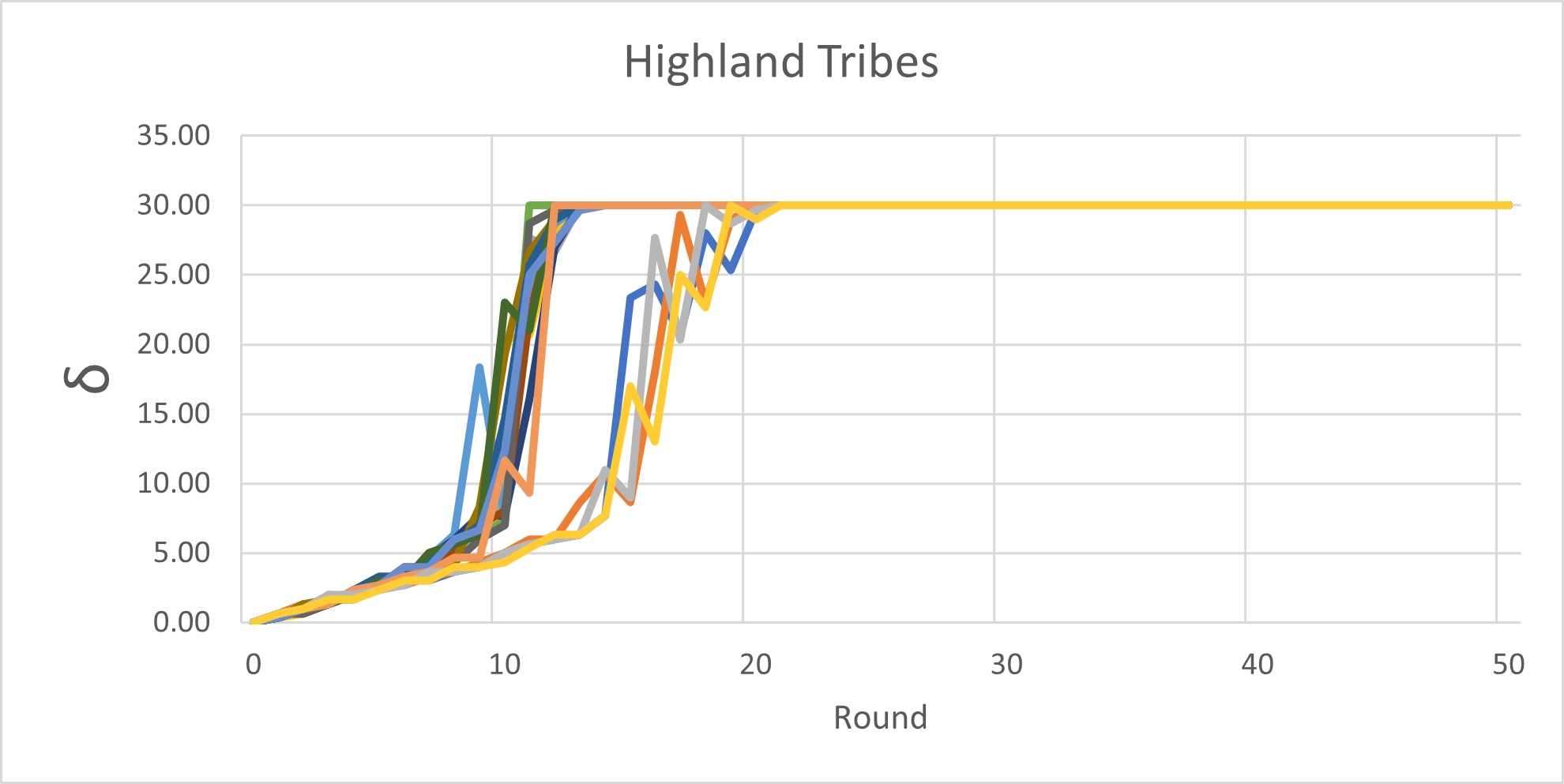}
      \caption{$\delta$ in Gahuku-Gama Tribal Alliance Network, $\delta$ known, random tie-breaking}
      \label{fig_ht_known_rand}
    \end{minipage}
\end{figure}

 The plots from the numerical analysis appear to strongly reinforce our conclusions from Section 5: In Figures \ref{fig_ht_epoch_unknown} and \ref{fig_ht_prob_unknown} we can see behavior which strongly resemblance that in Figures \ref{fig_zkc_unknown_epoch} and \ref{fig_zkc_unknown_random} respectively, with a sharp s-curve to $\delta=30$ followed by oscillating values as agents work to understand each others' $\delta$ values and change their own accordingly. Figures \ref{fig_ht_epoch_known_lex} displays the same repeating s-curves found in Figure \ref{fig_zkc_known_arbitrary} as a result of lexicographic tie-breaking. Figure \ref{fig_ht_known_rand} strongly resembles Figure \ref{fig_zkc_known_random}, with the primary difference being that unlike in the Karate Club network, no agents deviate from $\delta=30$ after reaching it. The two separate s-curves seen in the figure are due to the Alliance network containing two separate non-connected components, with each curve corresponding to one of the components. While not as clearly visible, this same separate behavior from the two components can also be seen in Figures \ref{fig_ht_epoch_unknown} and \ref{fig_ht_epoch_known_lex}.

The average utility per round for the settings displayed in Figures \ref{fig_ht_epoch_unknown}-\ref{fig_ht_known_rand} is 28.104, 28.221, 26.988, and 27.687 which correspond to percentage increases of 21.049\%, 21.553\%, 16.242\% and 19.253\% over the average utility of 23.217 per round produced by agents engaging in selfish behavior.

\subsection{Southern Women}

We will next consider the Southern Women (large) network from \cite{davis2009}. Unlike the other networks, this is an interaction network rather than a social network. It is a bipartite graph consisting of 32 vertices, with $n_1=18$ vertices in one set and $n_2=14$ vertices in the other, with 89 edges between the sets. The left set of vertices represents a group of 18 white women from the Southern United States in the 1930's. The right set of vertices represents a set of 14 social events which took place over a nine-month period, and an edge between two vertices indicates that the woman corresponding to the vertex from the left set attended the event corresponding to the vertex from the right set.

This network cannot be considered a social network, as no individuals are connected to other individuals only to events. However, we are able to construct a social network from it by doing the following:
\begin{enumerate}
    \item Create a new network with 18 vertices and no edges. 
    \item Assign each woman from the original network to a vertex in the new network.
    \item For each pair of vertices in the new network, add an edge between them if the corresponding vertices in the original network had at least three neighbors in common (i.e. the corresponding women attended at least three of the same events). 
\end{enumerate}
The new network can be considered an underlying social network of the original. The decision to only include an edge between women if they attended at least three of the same events is arbitrary, however it is unreasonable to assume that any two women who attended one common event are friend. It contains 15 vertices and 42 edges; while the original network contained 18 women, two of their vertices became isolated after as they had only attended two events in total and a third had not attended three events with another woman and so all were dropped.

\begin{figure}[ht]
    \centering
    \begin{minipage}{.5\linewidth}
      \centering
      \captionsetup{width=.85\linewidth}
      \includegraphics[width=\linewidth]{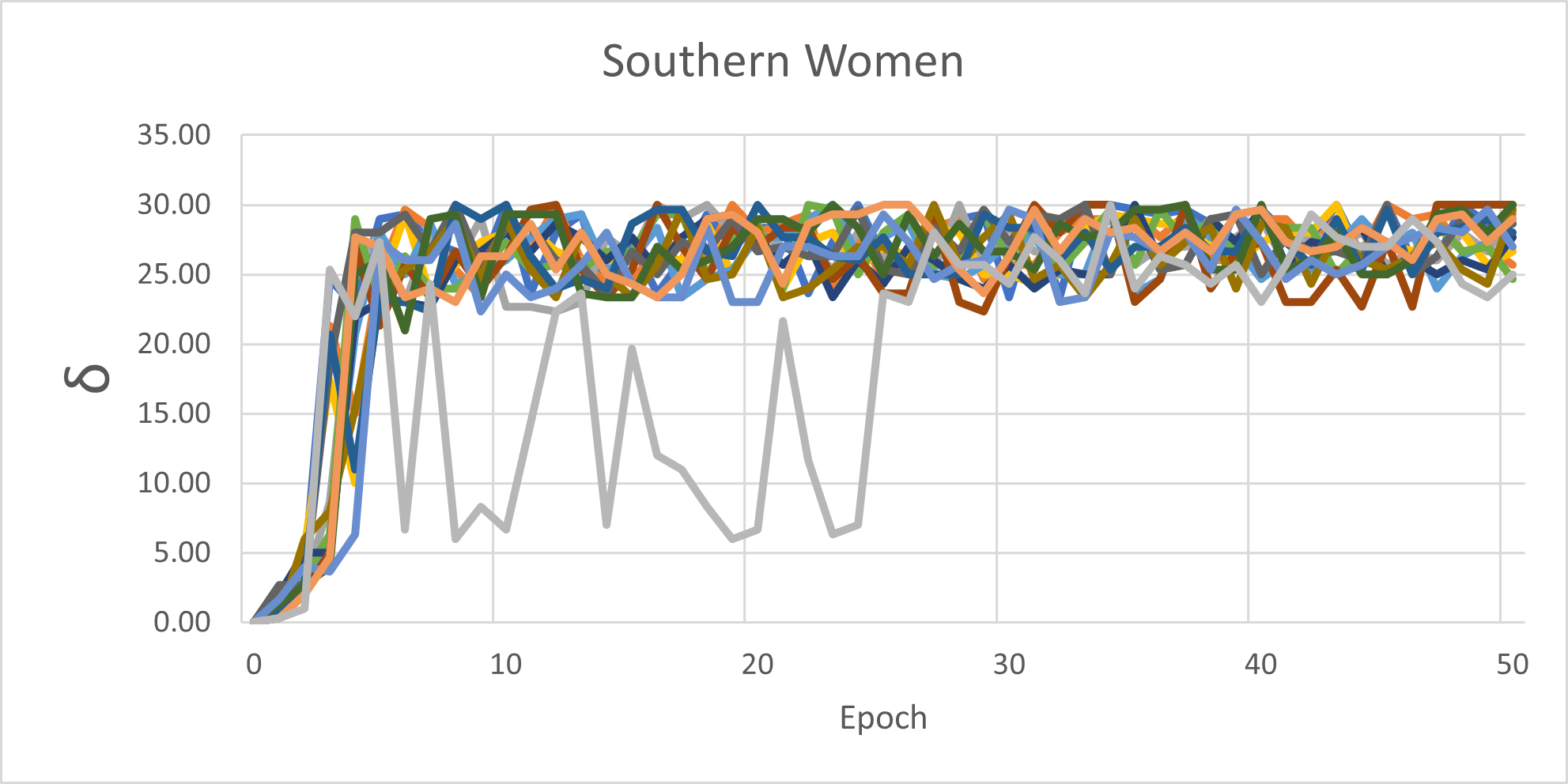}
      \caption{$\delta$ in Southern Women Interaction Network, $\delta$ unknown, epoch = 100}
      \label{fig_sw_epoch_unknown}
    \end{minipage}%
    \begin{minipage}{.5\linewidth}
      \centering
      \captionsetup{width=.85\linewidth}
      \includegraphics[width=\linewidth]{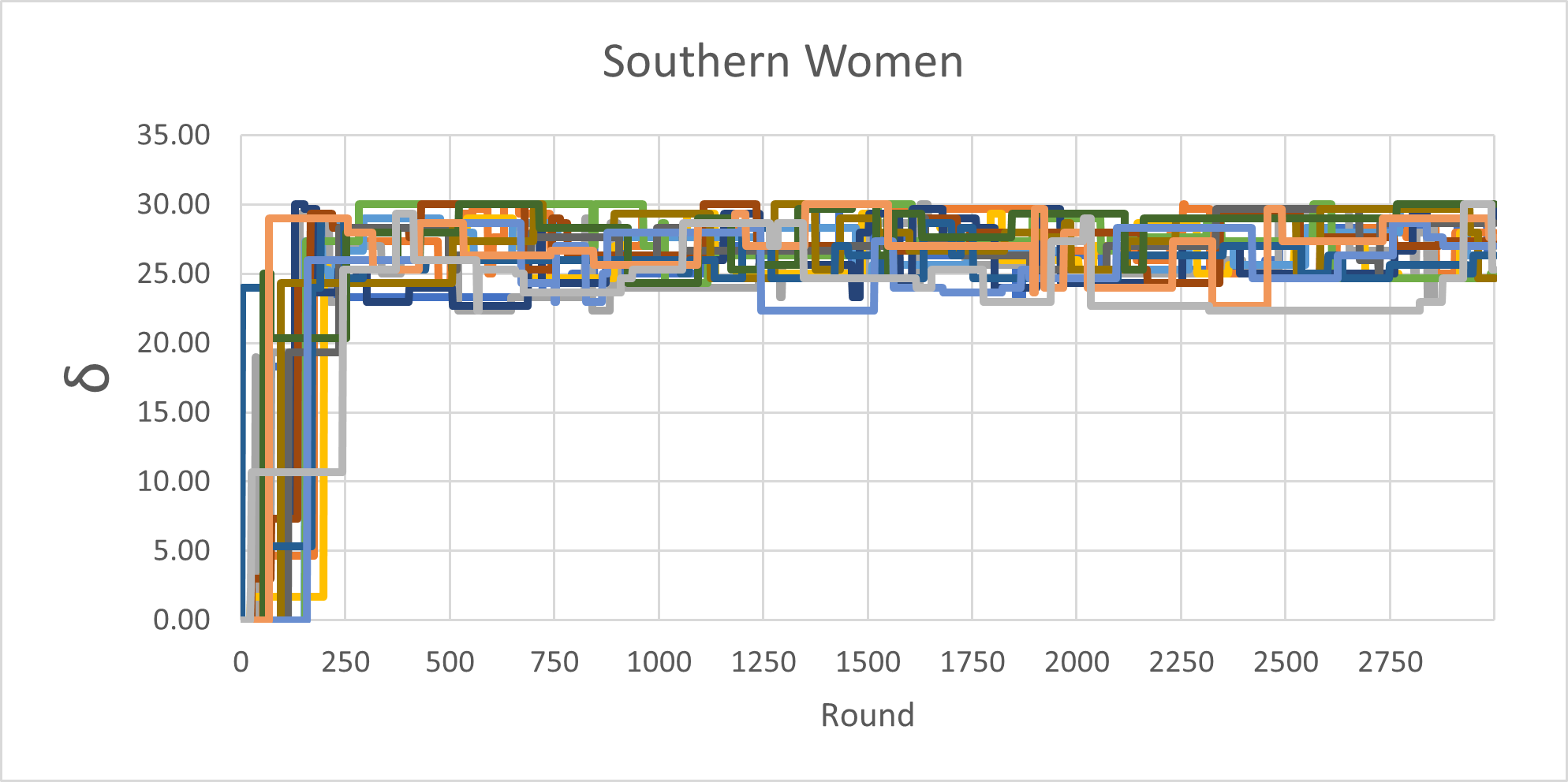}
      \caption{$\delta$ in Southern Women Interaction Network, $\delta$ unknown, update probability = $\frac{1}{100}$}
      \label{fig_sw_prob_unknown}
    \end{minipage}
\end{figure}

\begin{figure}[ht]
    \centering
    \begin{minipage}{.5\linewidth}
      \centering
      \captionsetup{width=.85\linewidth}
      \includegraphics[width=\linewidth]{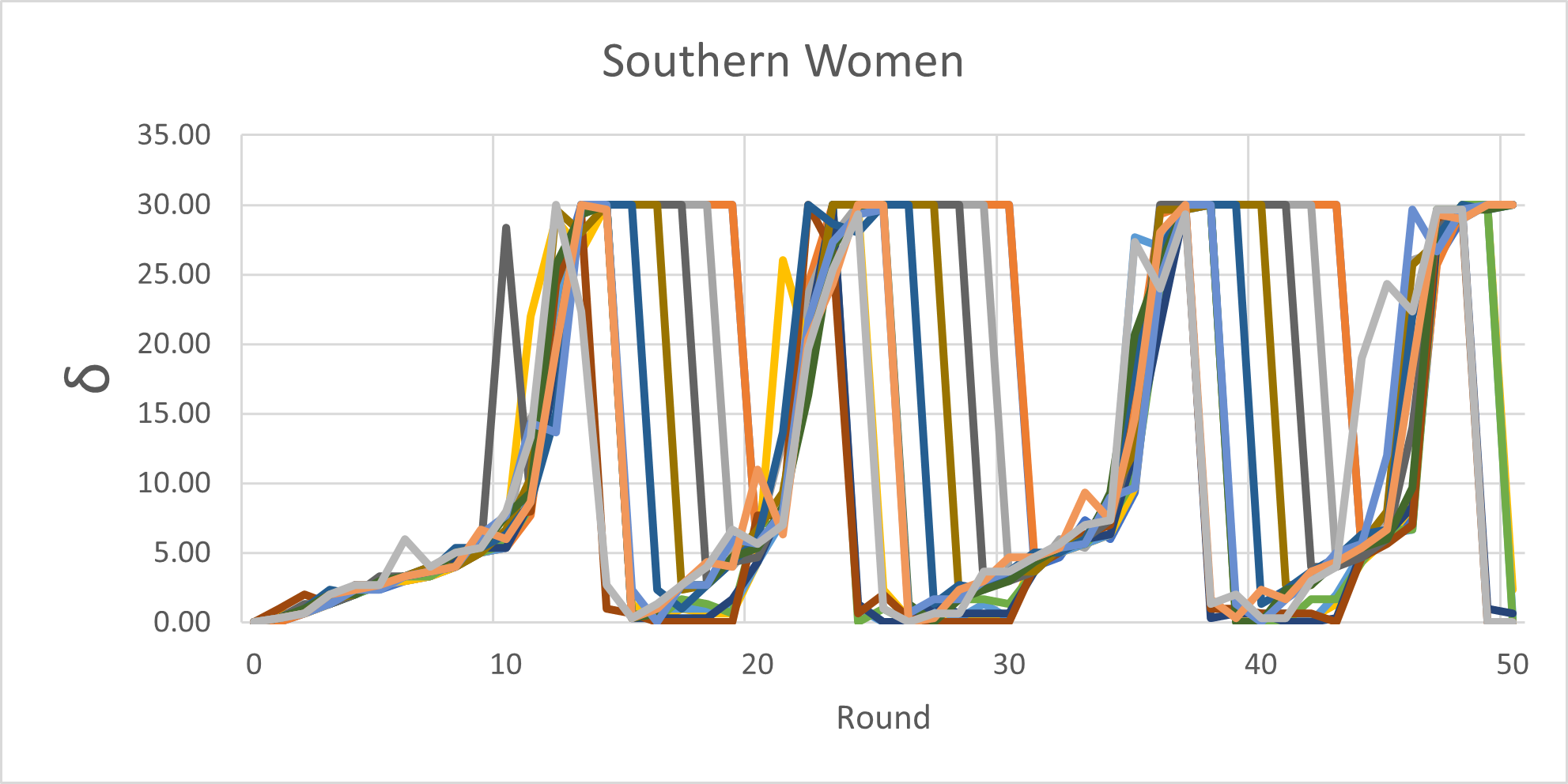}
      \caption{$\delta$ in Southern Women Interaction Network, $\delta$ known, lexicographic tie-breaking}
      \label{fig_sw_epoch_known_lex}
    \end{minipage}%
    \begin{minipage}{.5\linewidth}
      \centering
      \captionsetup{width=.85\linewidth}
      \includegraphics[width=\linewidth]{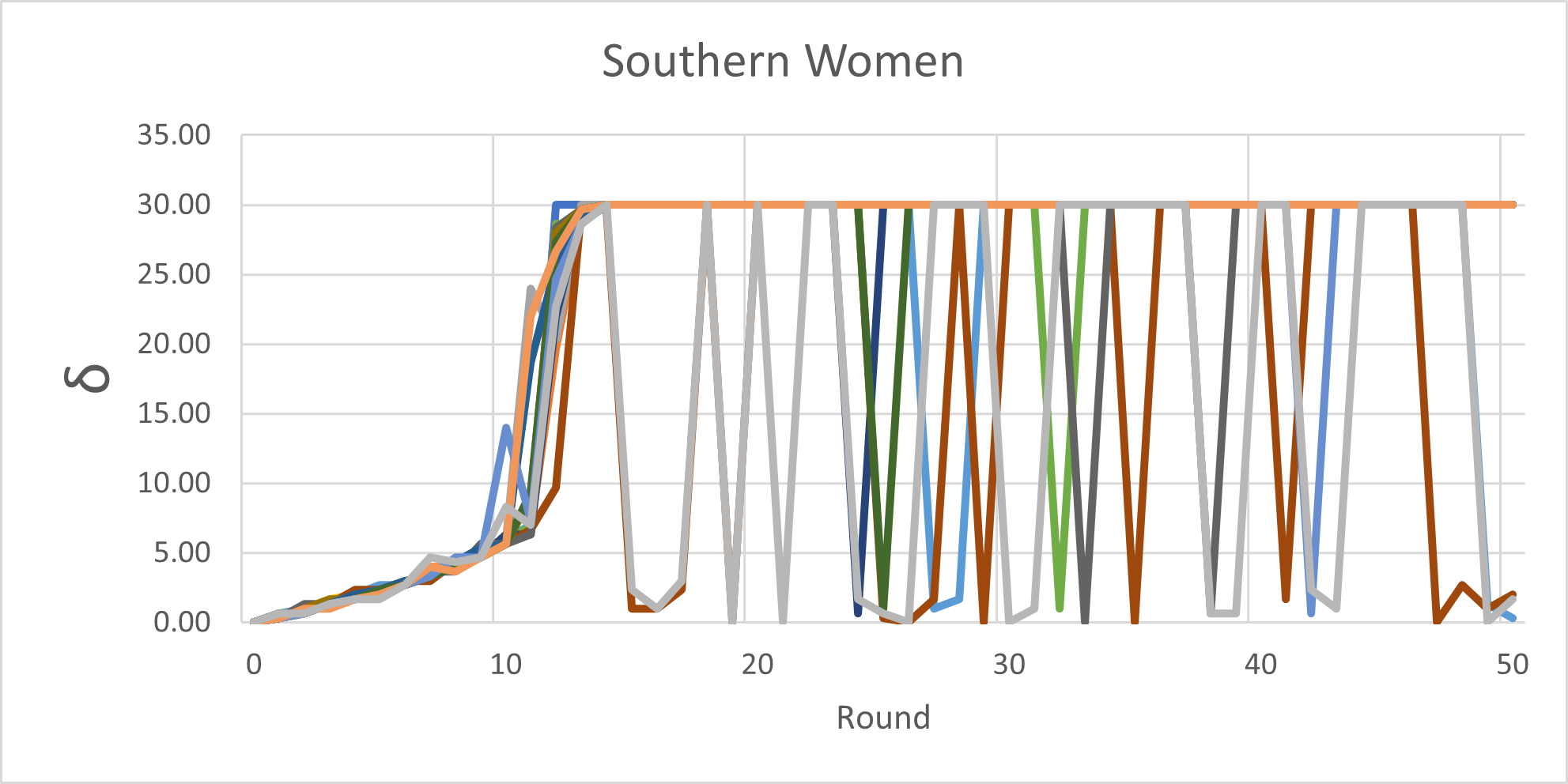}
      \caption{$\delta$ in Southern Women Interaction Network, $\delta$ known, random tie-breaking}
      \label{fig_sw_known_rand}
    \end{minipage}
\end{figure}

Figures \ref{fig_sw_epoch_unknown} and \ref{fig_sw_prob_unknown} display a strong resemblance to Figures \ref{fig_zkc_unknown_epoch} and \ref{fig_zkc_unknown_random} respectively, exhibiting the same behavior of a quick jump to $\delta=30$ followed by oscillating values in the mid-to-high 20's as agents work to understand each others' $\delta$ values and change their own accordingly. Figure \ref{fig_sw_epoch_known_lex} displays the repeating s-curves associated with lexicographic tie-breaking as seen in Figure \ref{fig_zkc_known_arbitrary}. Likewise, Figure \ref{fig_sw_known_rand} displays similar behavior to Figure \ref{fig_zkc_known_random}.

The average utility per round for the settings displayed in Figures \ref{fig_sw_epoch_unknown}-\ref{fig_sw_known_rand} is 27.149, 27.237, 26.298, and 26.844 which correspond to percentage increases of 40.348\%, 40.803\%, 35.949\% and 38.772\% over the average utility of 19.344 per round produced by agents engaging in selfish behavior.

\subsection{Taro Exchange}\label{subsec_taroexchange}

The next social network is a gift giving network between households in a Papuan village from \cite{schwimmer1973}. Vertices represent households, while an edge from vertex $i$ to vertex $j$ represents that household $i$ gave a gift to household $j$. While these edges are directed, the network is symmetric in that for every edge $(i,j)$ there is an edge $(j,i)$ and so for our analysis it is equivalent to an undirected network. The original network has 22 vertices and 78 edges, while the undirected version has 22 vertices and 39 edges.

\begin{figure}[ht]
    \centering
    \begin{minipage}{.5\linewidth}
      \centering
      \captionsetup{width=.85\linewidth}
      \includegraphics[width=\linewidth]{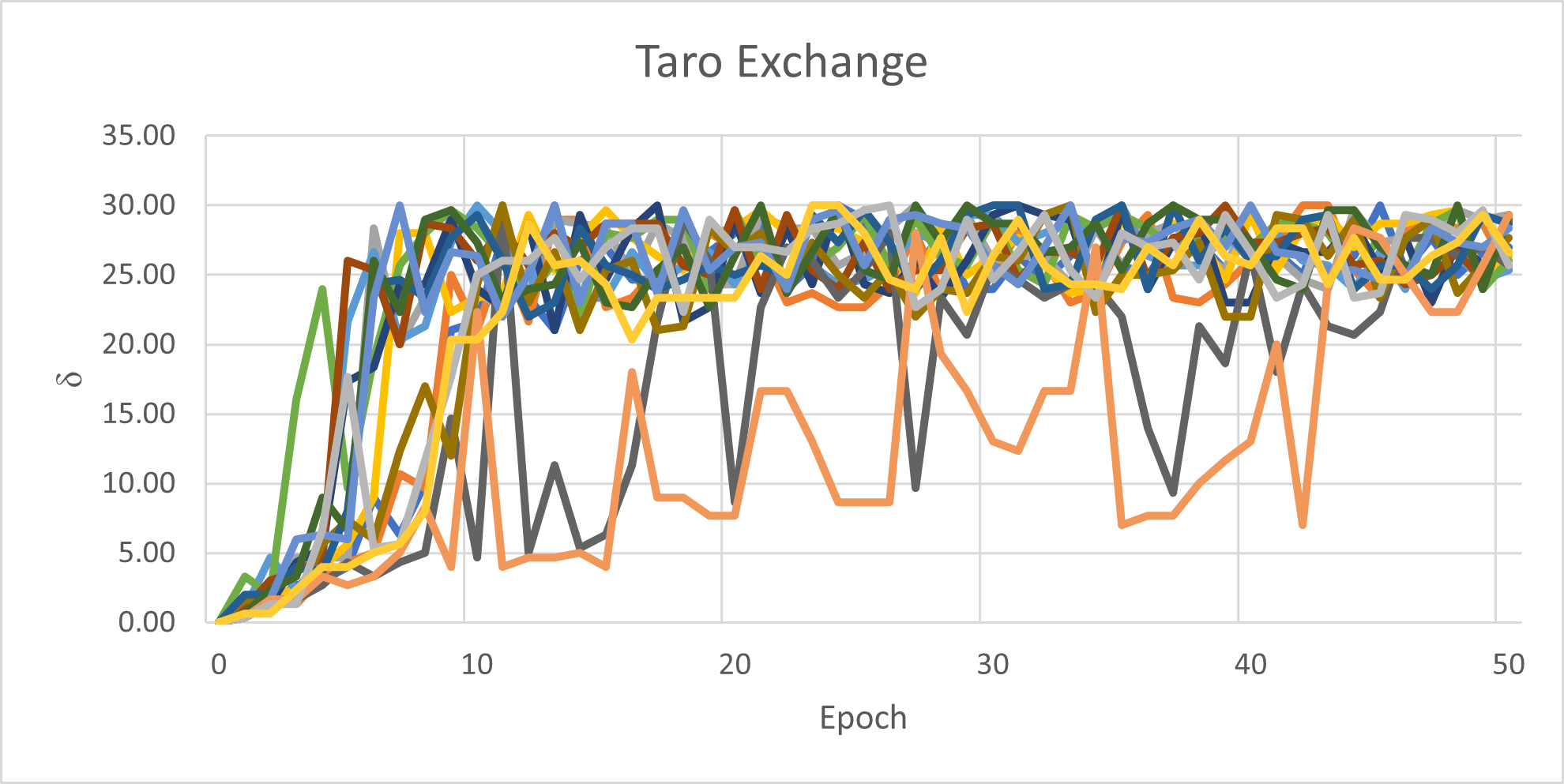}
      \caption{$\delta$ in Taro Exchange Gift-Giving Network, $\delta$ unknown, epoch = 100}
      \label{fig_te_epoch_unknown}
    \end{minipage}%
    \begin{minipage}{.5\linewidth}
      \centering
      \captionsetup{width=.85\linewidth}
      \includegraphics[width=\linewidth]{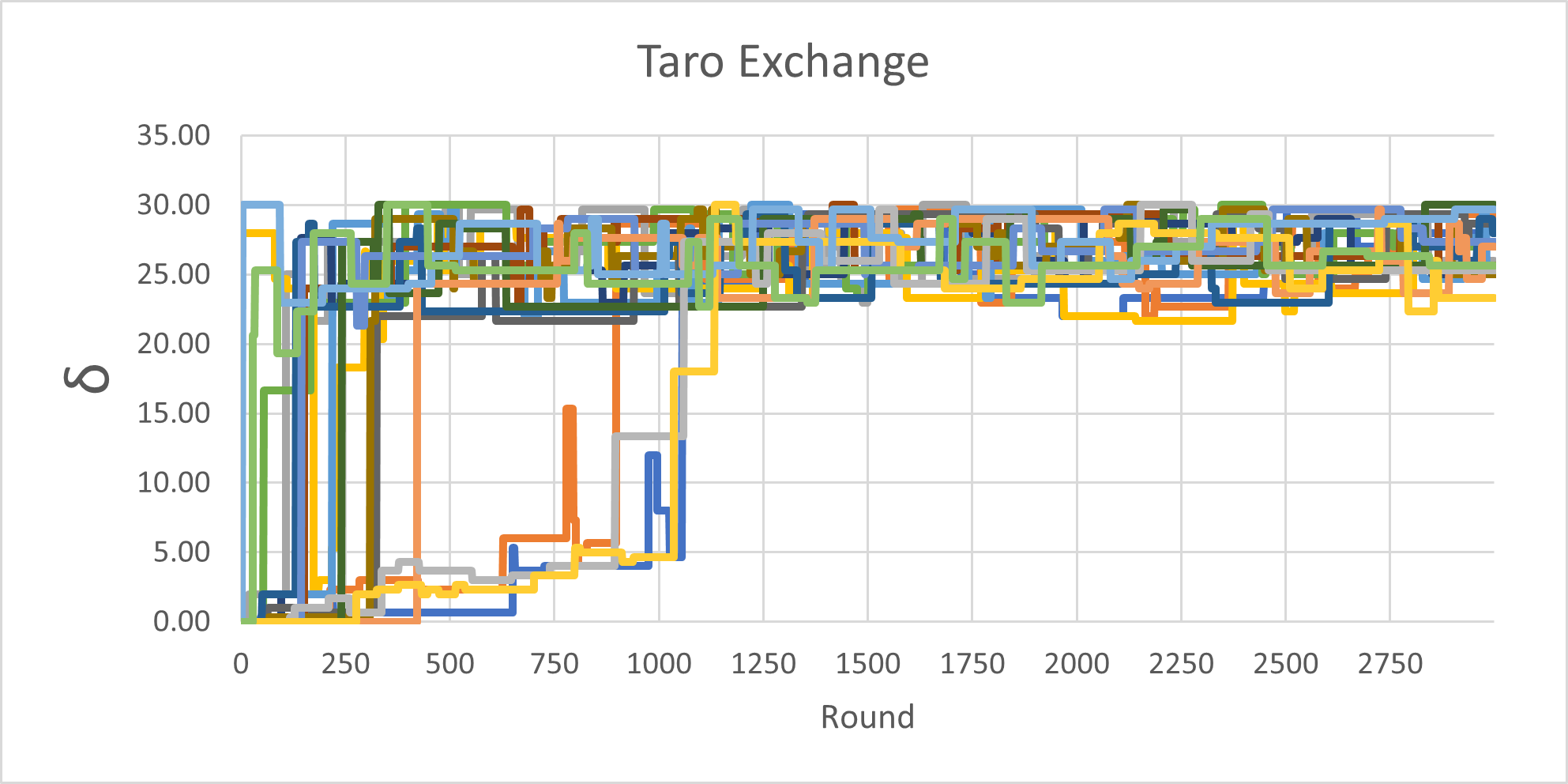}
      \caption{$\delta$ in Taro Exchange Gift-Giving Network, $\delta$ unknown, update probability = $\frac{1}{100}$}
      \label{fig_te_prob_unknown}
    \end{minipage}
\end{figure}

\begin{figure}[ht]
    \centering
    \begin{minipage}{.5\linewidth}
      \centering
      \captionsetup{width=.85\linewidth}
      \includegraphics[width=\linewidth]{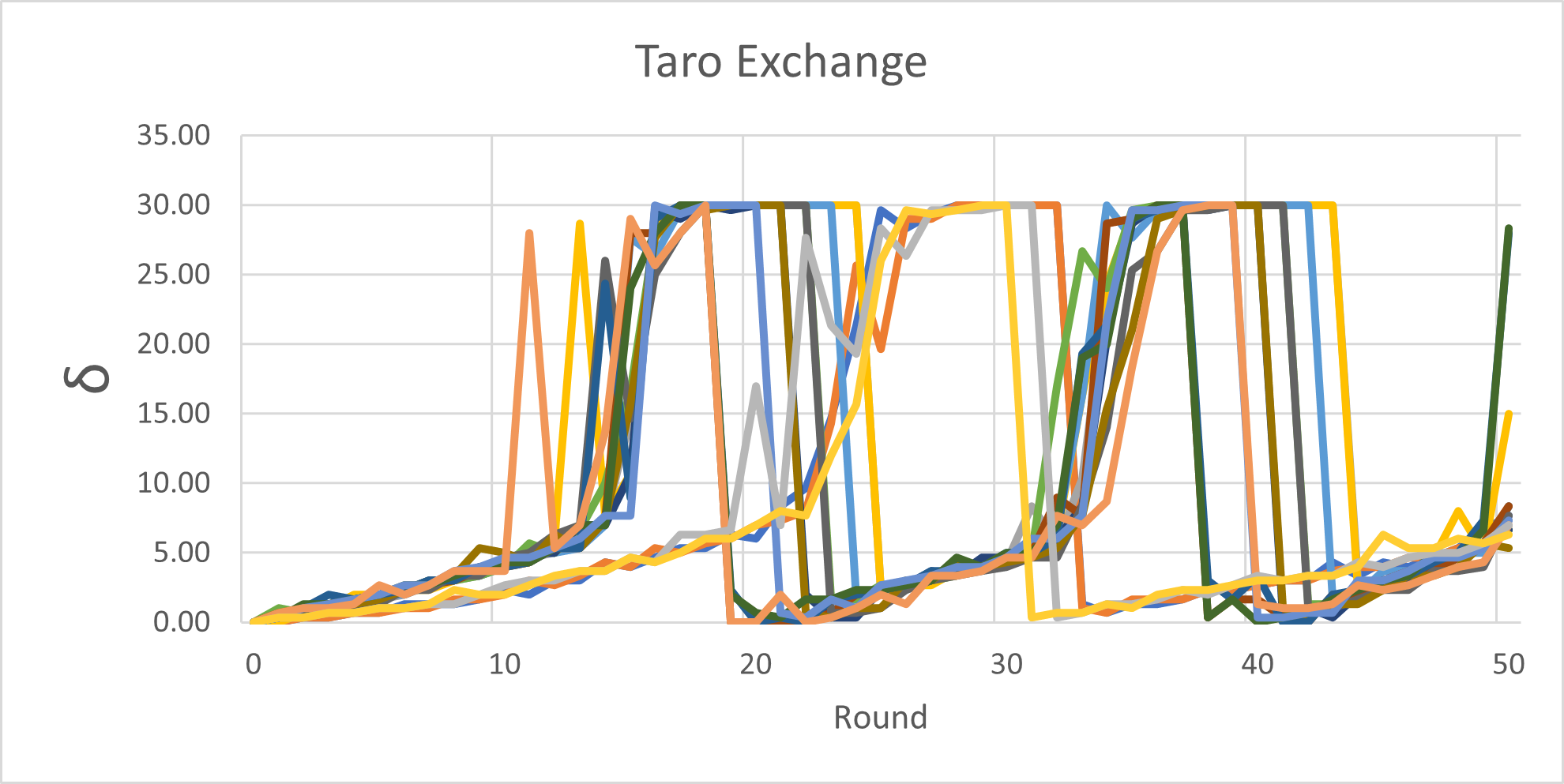}
      \caption{$\delta$ in Taro Exchange Gift-Giving Network, $\delta$ known, lexicographic tie-breaking}
      \label{fig_te_epoch_known_lex}
    \end{minipage}%
    \begin{minipage}{.5\linewidth}
      \centering
      \captionsetup{width=.85\linewidth}
      \includegraphics[width=\linewidth]{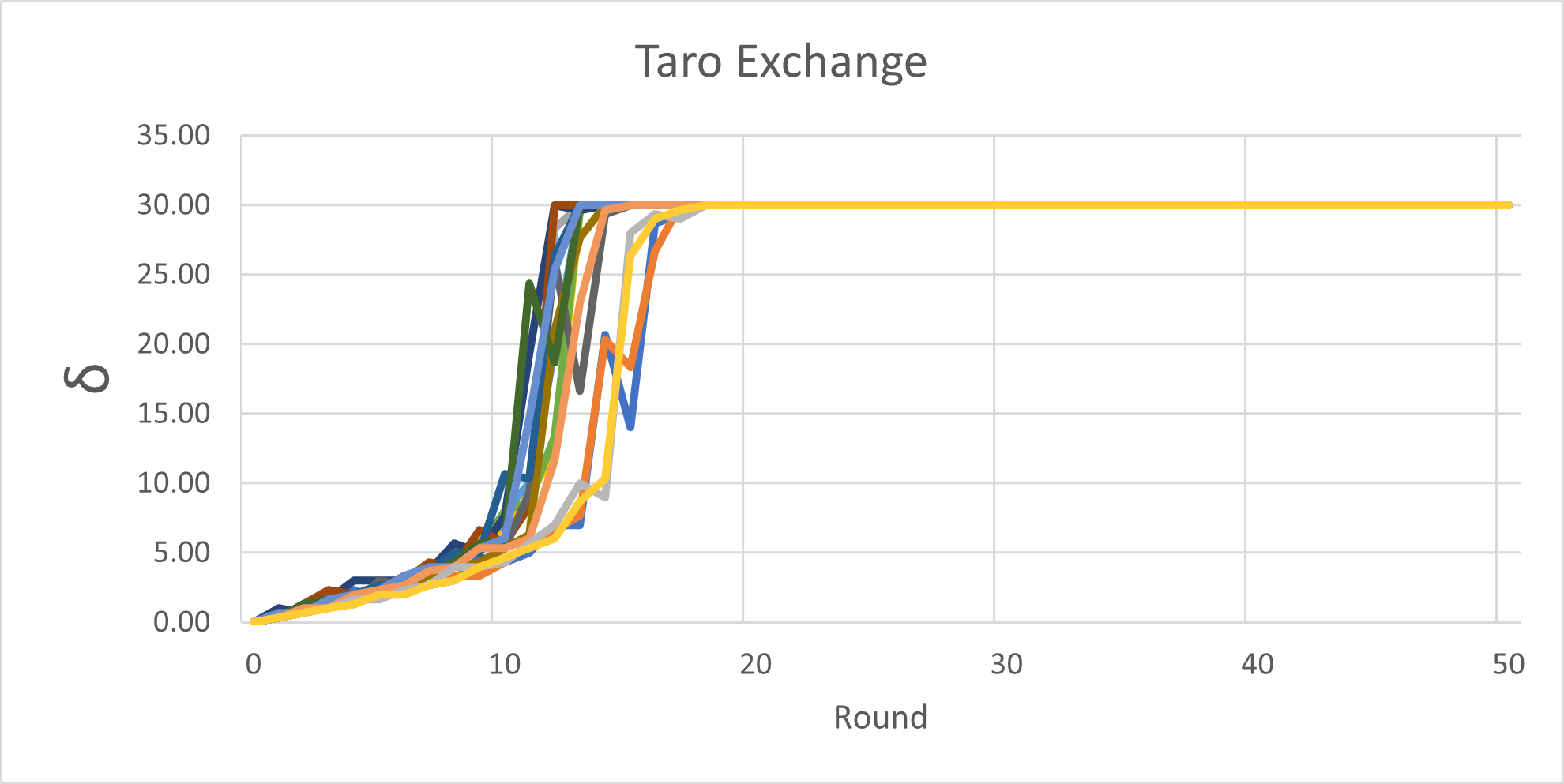}
      \caption{$\delta$ in Taro Exchange Gift-Giving Network, $\delta$ known, random tie-breaking}
      \label{fig_te_known_rand}
    \end{minipage}
\end{figure}

Figures \ref{fig_te_epoch_unknown} and \ref{fig_te_prob_unknown} display a strong resemblance to Figures \ref{fig_zkc_unknown_epoch} and \ref{fig_zkc_unknown_random} respectively, exhibiting the same behavior of a quick jump to $\delta=30$ followed by oscillating values in the mid-to-high 20's as agents work to understand each others' $\delta$ values and change their own accordingly. Figures \ref{fig_te_epoch_known_lex} and \ref{fig_te_known_rand} display the repeating s-curves associated with lexicographic tie-breaking as seen in Figure \ref{fig_zkc_known_arbitrary}. As in Figures \ref{fig_ht_known_rand} and \ref{fig_sw_known_rand}, while \ref{fig_te_known_rand} does not initially appear to display the same results as Figures \ref{fig_zkc_known_random} and Figure \ref{fig_sw_known_rand}, as it has no agents deviating from a high $\delta$ value after reaching the max, they are actually very similar: over 90\% of agents in the Karate Club network and 80\% in the Southern Women network maintain a high $\delta$ value at any given time. The difference in the ratios between the three networks is due to internal structure.

The average utility per round for the settings displayed in Figures \ref{fig_te_epoch_unknown}-\ref{fig_te_known_rand} is 28.140, 27.860, 27.040, and 27.394 which correspond to percentage increases of 23.709\%, 22.478\%, 18.873\% and 20.429\% over the average utility of 22.747 per round produced by agents engaging in selfish behavior.

\subsection{High School}\label{subsec_highschool}

The next social network is a friendship network among boys attending a small high school in Illinois from \cite{coleman1964}. Each boy was asked to name whether other boys were their friends, with a directed edge from vertex $i$ to vertex $j$ indicating that boy $i$ said boy $j$ was a friend. Each boy was asked twice, once in the fall of 1957 and once in the spring of 1958, and this network is an aggregate of the results. The edges have weights in $\{1,2\}$, with 1 indicating that $i$ identified $j$ as a friend during only one of the two asking dates and 2 indicating that $i$ identified $j$ as a friend both time.

While it is a fascinating topic that we are looking to explore in our future work, our model does not consider edge weights and so we treated all edges as having weight 1. Additionally, while our model does allow for directed edges, we used an undirected version of the network in order to keep consistency in our analysis, as the majority of our networks are undirected. The resulting network has 70 vertices and 274 edges.

\begin{figure}[ht]
    \centering
    \begin{minipage}{.5\linewidth}
      \centering
      \captionsetup{width=.85\linewidth}
      \includegraphics[width=\linewidth]{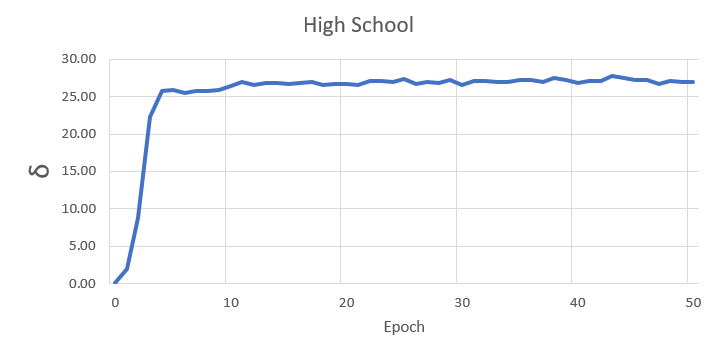}
      \caption{Mean $\delta$ in High School Friendship Network, $\delta$ unknown, epoch = 100}
      \label{fig_hs_epoch_unknown}
    \end{minipage}%
    \begin{minipage}{.5\linewidth}
      \centering
      \captionsetup{width=.85\linewidth}
      \includegraphics[width=\linewidth]{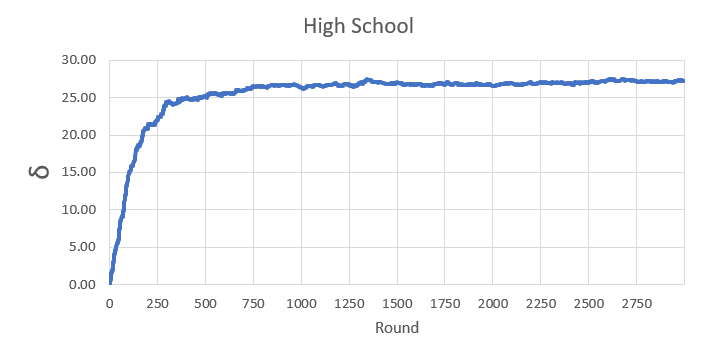}
      \caption{Mean $\delta$ in High School Friendship Network, $\delta$ unknown, update probability = $\frac{1}{100}$}
      \label{fig_hs_prob_unknown}
    \end{minipage}
\end{figure}

\begin{figure}[ht]
    \centering
    \begin{minipage}{.5\linewidth}
      \centering
      \captionsetup{width=.85\linewidth}
      \includegraphics[width=\linewidth]{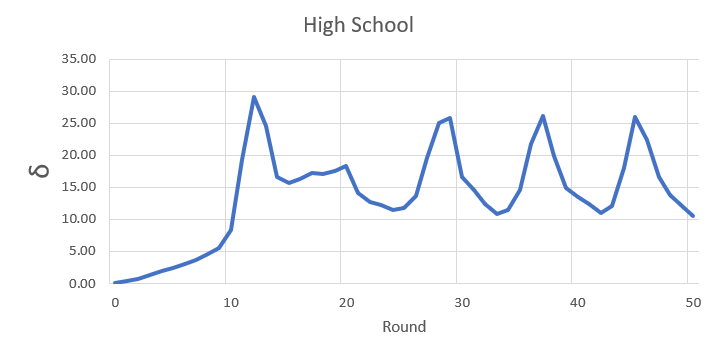}
      \caption{Mean $\delta$ in High School Friendship Network, $\delta$ known, lexicographic tie-breaking}
      \label{fig_hs_epoch_known_lex}
    \end{minipage}%
    \begin{minipage}{.5\linewidth}
      \centering
      \captionsetup{width=.85\linewidth}
      \includegraphics[width=\linewidth]{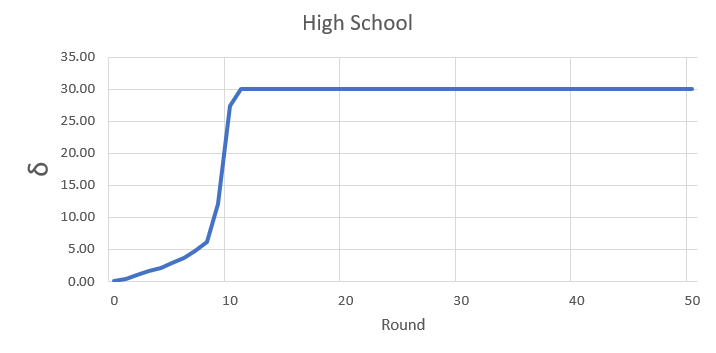}
      \caption{Mean $\delta$ in High School Friendship Network, $\delta$ known, random tie-breaking}
      \label{fig_hs_known_rand}
    \end{minipage}
\end{figure}

As in the Facebook-ego network and the Residence Hall network (Appendix \ref{subsec_residence_hall}), the number of vertices makes it impractical to view the set of all individual values of $\delta$ simultaneously, so in Figures \ref{fig_hs_epoch_unknown}-\ref{fig_hs_known_rand} we instead present the mean $\delta$ values. Figures \ref{fig_hs_epoch_unknown} and \ref{fig_hs_prob_unknown} echo the pattern of a sharp s-curve acceleration to a steady, high $\delta$ value seen in Figure \ref{fig_facebook_unknown}, while Figures \ref{fig_hs_epoch_known_lex} reinforces the repeating s-curve pattern that occurs in lexicographic tie-breaking, and Figure \ref{fig_hs_known_rand} exhibits the same pattern which will be seen in Figure \ref{fig_rh_known_rand}. 

The average utility per round for the settings displayed in Figures \ref{fig_hs_epoch_unknown}-\ref{fig_hs_known_rand} is 27.992, 28.194, 27.173, and 27.330 which correspond to percentage increases of 22.044\%, 22.925\%, 18.473\%, and 19.158\% over the average utility of 22.936 per round produced by agents engaging in selfish behavior.

\subsection{Residence Hall}\label{subsec_residence_hall}

The final social network is a friendship network among 217 residents of a residence hall located on the Australian National University campus from \cite{freeman1998}. Residents in 1987 were asked whether other residents were friends, and if they were then an edge was added to the network. As with the High School friendship network in Appendix \ref{subsec_highschool} edges indicating friendship are directed and positively weighted, with weight here corresponding to strength of friendship, and for the same reasons as in that case we use an unweighted, undirected version of the network. The resulting undirected network has 217 vertices and 1839 edges.

\begin{figure}[ht]
    \centering
    \begin{minipage}{.5\linewidth}
      \centering
      \captionsetup{width=.85\linewidth}
      \includegraphics[width=\linewidth]{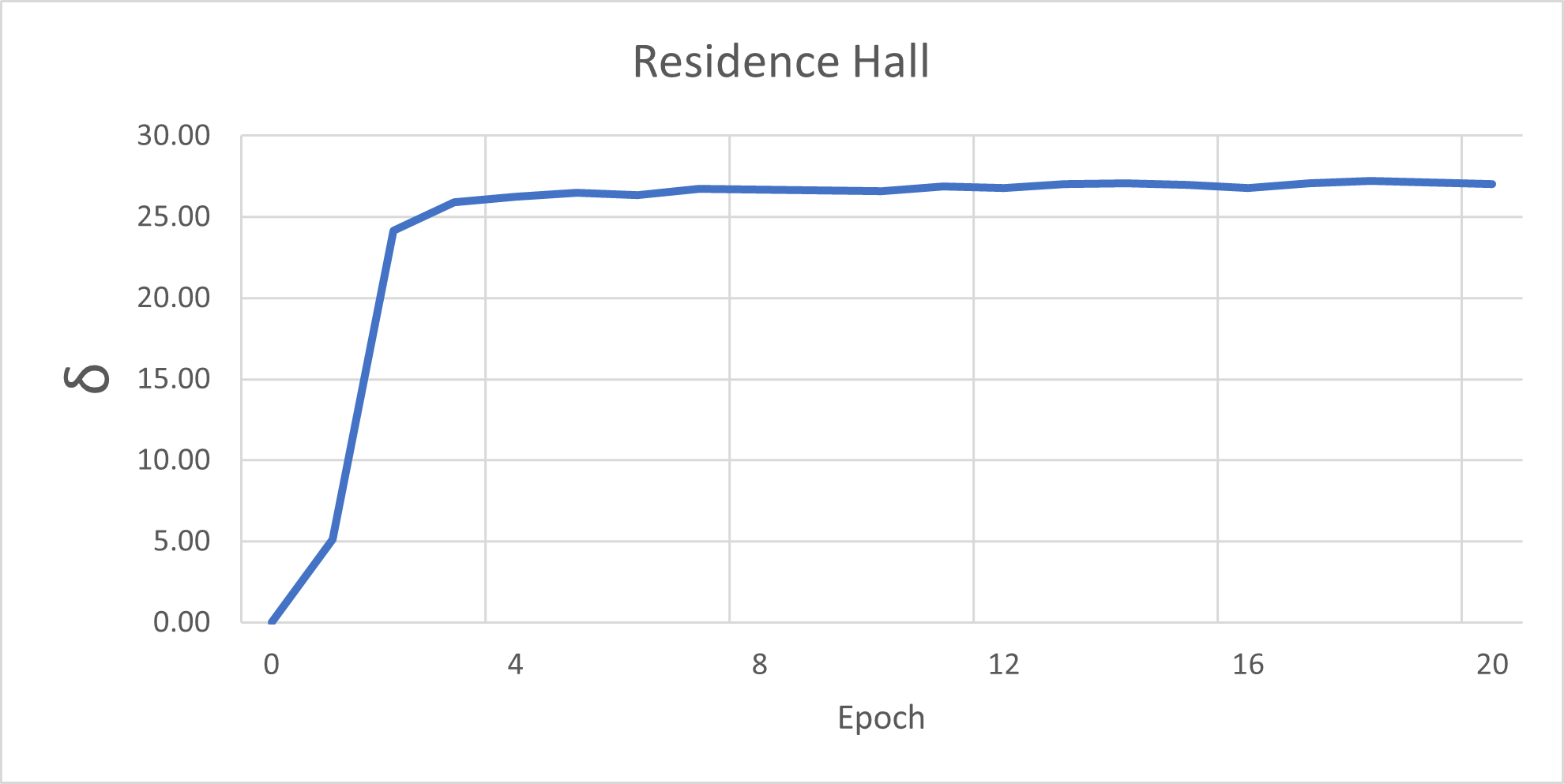}
      \caption{Mean $\delta$ in Residence Hall Friendship Network, $\delta$ unknown, epoch = 100}
      \label{fig_rh_epoch_unknown}
    \end{minipage}%
    \begin{minipage}{.5\linewidth}
      \centering
      \captionsetup{width=.85\linewidth}
      \includegraphics[width=\linewidth]{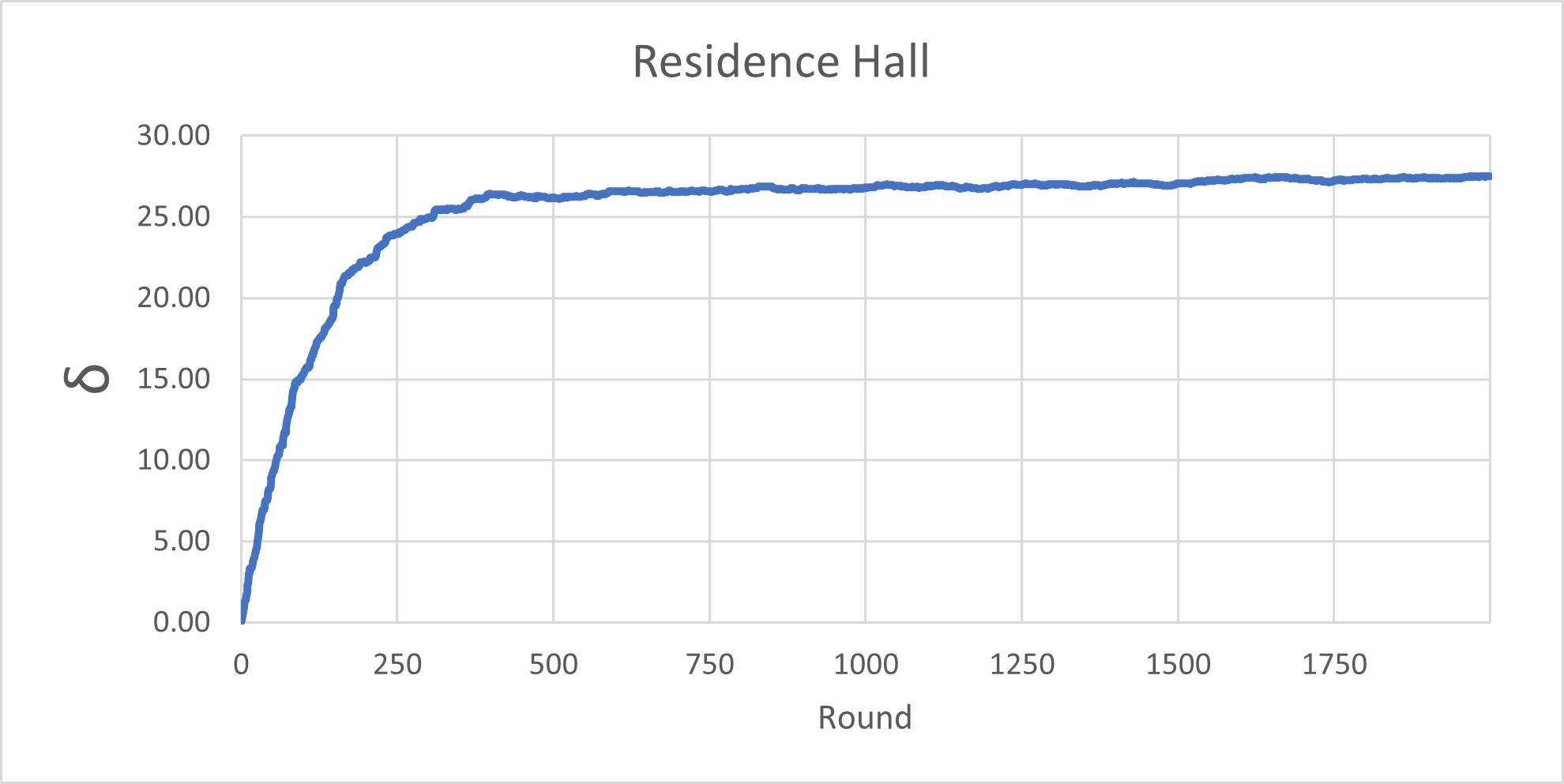}
      \caption{Mean $\delta$ in Residence Hall Friendship Network, $\delta$ unknown, update probability = $\frac{1}{100}$}
      \label{fig_rh_prob_unknown}
    \end{minipage}
\end{figure}

\begin{figure}[ht]
    \centering
    \begin{minipage}{.5\linewidth}
      \centering
      \captionsetup{width=.85\linewidth}
      \includegraphics[width=\linewidth]{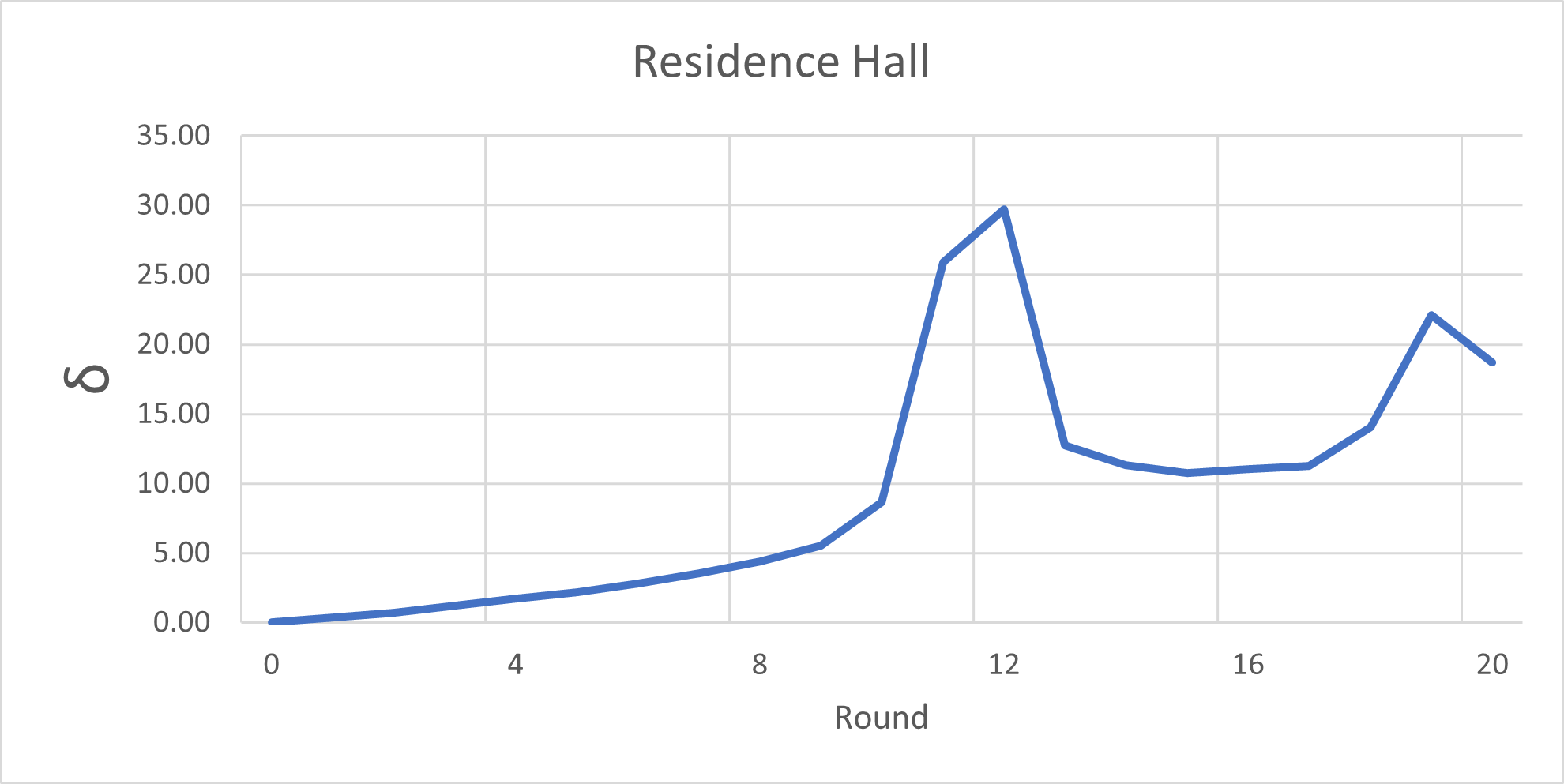}
      \caption{Mean $\delta$ in Residence Hall Friendship Network, $\delta$ known, lexicographic tie-breaking}
      \label{fig_rh_epoch_known_lex}
    \end{minipage}%
    \begin{minipage}{.5\linewidth}
      \centering
      \captionsetup{width=.85\linewidth}
      \includegraphics[width=\linewidth]{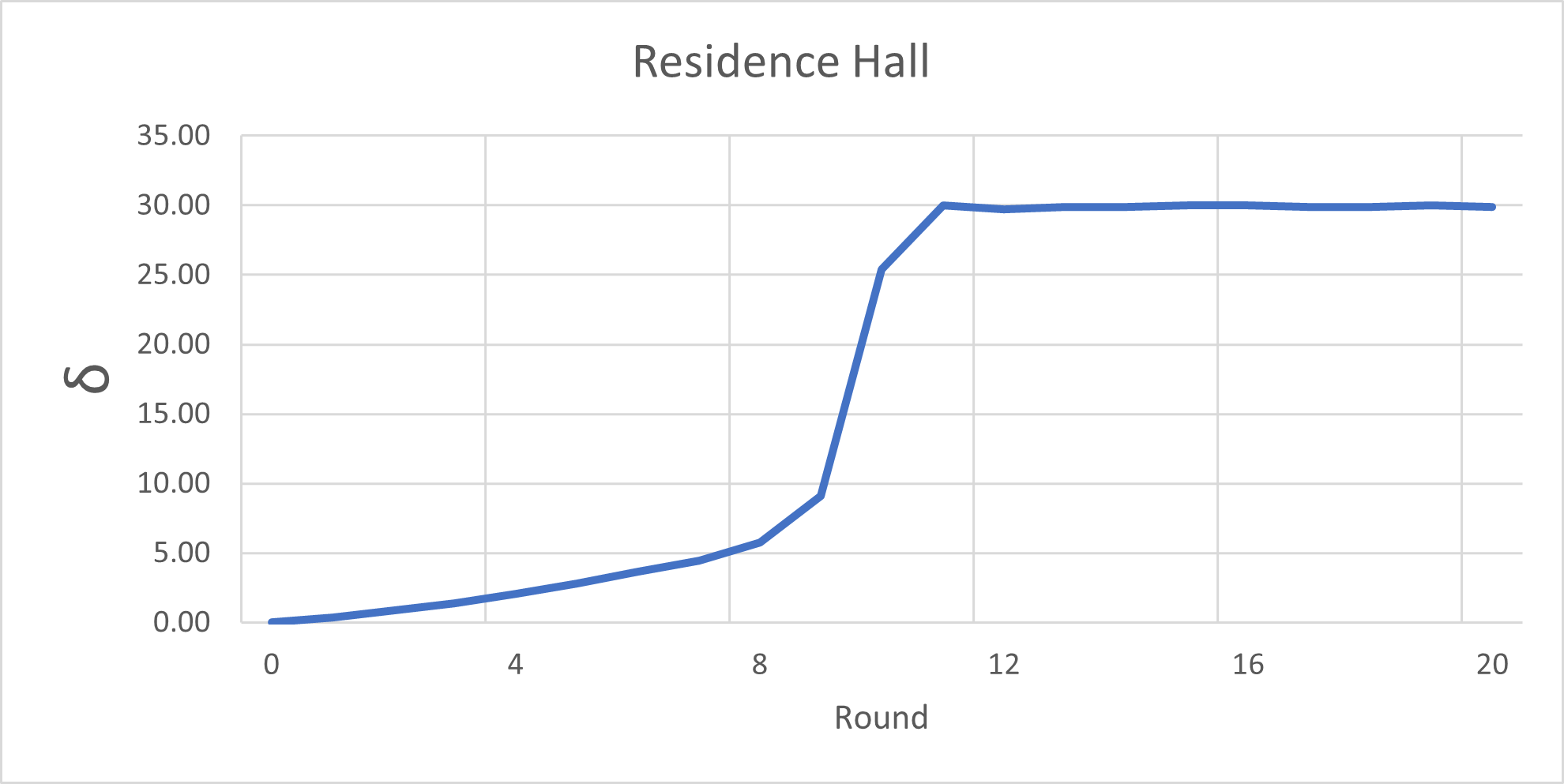}
      \caption{Mean $\delta$ in Residence Hall Friendship Network, $\delta$ known, random tie-breaking}
      \label{fig_rh_known_rand}
    \end{minipage}
\end{figure}

Given the number of individuals in this network, we present the mean value of $\delta$ rather than the individual values. Also due to computational practicalities, we reduce the number of epochs for the trials in Figures \ref{fig_rh_epoch_unknown} and \ref{fig_rh_epoch_known_lex} from 100 to 20, and the number of rounds in Figures \ref{fig_rh_prob_unknown} and \ref{fig_rh_known_rand} to 2000. However, we can already see the same patterns of behavior emerging, with Figures \ref{fig_rh_epoch_unknown} and \ref{fig_rh_prob_unknown} displaying the sharp s-curve followed by a steady high $\delta$ value characteristic of the cases where $\delta$ is unknown. Similarly, Figure \ref{fig_rh_epoch_known_lex} displays the repeated s-curves associated with lexicographic tie-breaking, while Figure \ref{fig_rh_known_rand} displays the same behavior seen in Figures \ref{fig_facebook_known} and  \ref{fig_hs_known_rand}.

The average utility per round for the settings displayed in Figures \ref{fig_rh_epoch_unknown}-\ref{fig_rh_known_rand} is 28.0267, 28.221, 26.880, and 27.350 which correspond to percentage increases of 25.742\%, 26.614\%, 20.598\%, and 22.706\% over the average utility of 22.289 per round produced by agents engaging in selfish behavior.


\fi
\end{document}